\newcommand{\IF}[3]{\mathbf{if}\ #1\ \mathbf{then}\ #2\ \mathbf{else}\ #3}
\newcommand{\WHILE}[2]{\mathbf{while}\ #1\ \mathbf{do}\ #2}
\newcommand{\WHILEI}[3]{\mathbf{while}\ #1\ \mathbf{inv}\ #2\ \mathbf{do}\ #3}
\newcommand{\MKA}{\mathsf{MKA}}
\newcommand{\PDL}{\mathsf{PDL}}
\newcommand{\dL}{\mathsf{d}\mathcal{L}}
\newcommand{\flow}{\varphi}
\newcommand{\orbit}{\gamma^\varphi}
\newcommand{\lipschitz}{\ell}
\newcommand{\Pow}{\mathcal{P}}
\newcommand{\id}{\mathit{id}}
\newcommand{\Id}{\mathit{Id}}
\newcommand{\reals}{\mathbb{R}}
\newcommand{\bools}{\mathbb{B}}
\newcommand{\true}{\top}
\newcommand{\ad}{\mathit{ad}}
\newcommand{\ar}{\mathit{ar}}
\newcommand{\wlp}{\mathit{wlp}}
\newcommand{\Set}{\mathbf{Set}}
\newcommand{\Rel}{\mathbf{Rel}}
\newcommand{\Sols}{\mathop{\mathsf{Sols}}}
\newcommand{\sta}{\mathsf{Sta}}
\newcommand{\rel}{\mathsf{Rel}}
\newcommand{\inv}{\mathsf{Inv}}
\newcommand{\gorbit}[1]{\gamma^#1_G}
\renewcommand{\d}{\operatorname{d\!}{}}
\newtheorem{theorem}{Theorem}[section]
\newtheorem{proposition}[theorem]{Proposition}
\newtheorem{lemma}[theorem]{Lemma}
\newtheorem{corollary}[theorem]{Corollary}
\theoremstyle{definition}
\newtheorem{example}{Example}[section]
\definecolor{jcolor}{cmyk}{1,0,1,0}
\definecolor{gcolor}{cmyk}{1,0,0,0}
\begin{document}

\title{Predicate Transformer Semantics for Hybrid Systems\\\large{---Verification Components for Isabelle/HOL---}}

\author{Jonathan Juli\'an Huerta y Munive\\
University of Sheffield\\
United Kingdom
\and Georg Struth\\
University of Sheffield\\
United Kingdom
}

% \institute{Jonathan Juli\'an Huerta y Munive \at
%               Department of Computer Science, University
%            of Sheffield, Sheffield, UK\\
%               \email{jjhuertaymunive1@sheffield.ac.uk}           %  \\
% %             \emph{Present address:} of F. Author  %  if needed
%            \and
%            Georg Struth \at Department of Computer Science, University
%            of Sheffield, Sheffield, UK\\
%            \email{g.struth@sheffield.ac.uk}
% }

%\date{Received: date / Accepted: date}
% The correct dates will be entered by the editor

\maketitle

\begin{abstract}
  We present a semantic framework for the deductive verification of
  hybrid systems with Isabelle/HOL. It supports reasoning about the
  temporal evolutions of hybrid programs in the style of differential
  dynamic logic modelled by flows or invariant sets for vector
  fields. We introduce the semantic foundations of this framework and
  summarise their Isabelle formalisation as well as the resulting
  verification components. A series of simple examples shows our
  approach at work.

\vspace{\baselineskip}

\noindent Keywords: hybrid systems, predicate transformers, modal
Kleene algebra, hybrid program verification, interactive theorem
proving
% \PACS{PACS code1 \and PACS code2 \and more}
% \subclass{MSC 34A38 \and MSC 68Q55 \and 68Q60}
\end{abstract}

%%%%%%%%%%%%%%%%%%%%%%%%%%%%%%%%%%%%

\section{Introduction}\label{sec:intro}

\begin{figure}[t]
 \begin{center}
{\small
\tikzstyle{block} = [rectangle, draw, fill=blue!20, 
    text width=9em, text centered, rounded corners, minimum height=3em]
\tikzstyle{line} = [draw, -latex']
\tikzstyle{container} = [draw, rectangle, rounded corners, inner sep=.4cm, fill=gray!20,minimum height=2.2cm]

\begin{tikzpicture}[node distance = 1.7cm and -1.1cm, auto]

\node [block] (mka) {modal Kleene algebras};
\node[block, below right = of mka] (sta) {state transformers};
\node [block, above right = of sta] (back) {predicate transformer quantales};
\node [block, below right = of back] (rel) {binary relations};
\node [block, above right = of rel] (mon) {predicate transformer quantaloids};
\node [block, below left = of sta] (dyn) {dynamical systems};
\node [block, below right = of sta] (pl) {Lipschitz continuous vector fields};
\node [block, below right = of rel] (gen) {continuous vector fields};
\begin{scope}[on background layer]
\node [container,fit = (dyn) (pl) (gen)] (container) {};
\end{scope}
\node [below of = pl, node distance = .8cm] (hyb) {hybrid store dynamics};

\path [line] (mka) -- (sta);
\path [line] (mka) -- (rel);
\path [line] (back) -- (sta);
\path [line] (back) -- (rel);
\path [line] (mon) -- (sta);
\path [line] (mon) -- (rel);

\path [line] (sta) -- (dyn);
\path [line] (sta) -- (pl);
\path [line] (sta) -- (gen);
\path [line] (rel) -- (dyn);
\path [line] (rel) -- (pl);
\path [line] (rel) -- (gen);
\end{tikzpicture}
}
  \caption{Isabelle framework for hybrid systems verification}
  \label{fig:framework}
\end{center}
\end{figure}
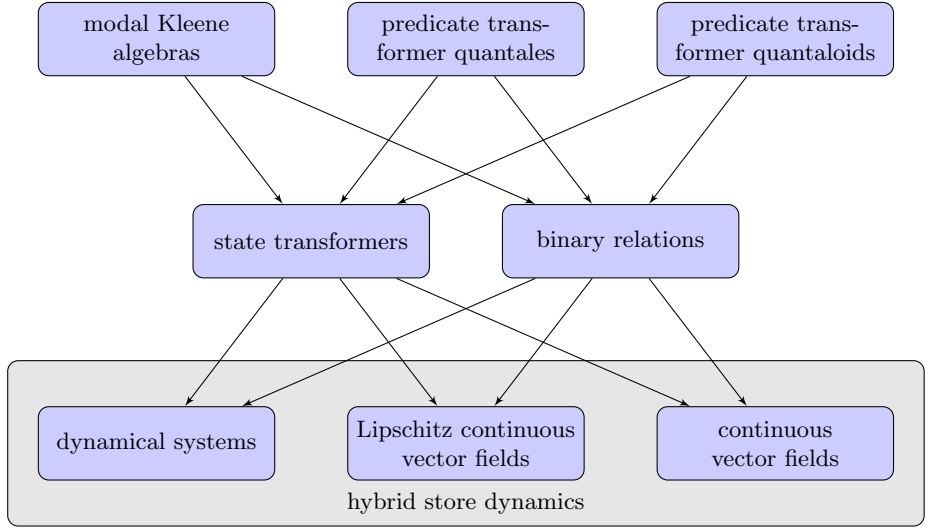

%  \begin{tabular}{|c|c|c|c|c|}
%  \hline
%    & $\alpha\to \Pow\, \beta$ & $(\alpha\times
%                                                             \beta)\,  set$ &
%                                                             $\alpha\ \mathit{nd}{\isacharunderscore}\mathit{fun}$
%    & $\alpha\,  rel$\\
%  \hline
%  $\MKA$ & & & $\checkmark$ &$\checkmark$\\
%  \hline
%  cat. operators (\isa{fb}) & $\checkmark$ &$\checkmark$&&\\
% \hline
%  verif. components &$\checkmark$&&$\checkmark$&$\checkmark$\\
% \hline
%  \end{tabular}\\

Hybrid systems combine continuous dynamics with discrete control.
Their verification is receiving increasing attention as the number of
computing systems controlling real-world physical systems is growing.
Mathematically, hybrid system verification requires integrating
continuous system dynamics, often modelled by systems of differential
equations, and discrete control components into hybrid automata,
hybrid programs or similar domain-specific modelling formalisms, and
into analysis techniques for these.  Such techniques include state
space exploration, reachability or safety analyses by model checking
and deductive verification with domain-specific
logics~\cite{DoyenFPP18}.

One of the most prominent deductive approaches is differential dynamic
logic $\dL$~\cite{Platzer10}, an extension of dynamic
logic~\cite{HarelKT00} to hybrid programs for reasoning with
autonomous systems of differential equations, their solutions and
invariant sets.  It is supported by the KeYmaera X
tool~\cite{FultonMQVP15} and has proved its worth in numerous case
studies~\cite{JeanninGKSGMP17,LoosPN11,Platzer10}.  KeYmaera X
verifies partial correctness specifications for hybrid programs using
a combination of domain-specific sequent and Hilbert calculi, which
itself is based on an intricate uniform substitution calculus. For
pragmatic reasons, its language is restricted to differential terms of
real arithmetic~\cite{FultonMQVP15} (that of hybrid automata is
usually restricted to polynomial or linear
constraints~\cite{DoyenFPP18}).

Our initial motivation for this work has been the formalisation of a
$\dL$-style approach to hybrid program verification in the
Isabelle/HOL proof assistant~\cite{MuniveS18} by combining Isabelle's
mathematical components for analysis and ordinary differential
equations~\cite{HolzlIH13,Immler12,ImmlerH12a,ImmlerT19} with
verification components for modal Kleene algebras~\cite{GomesS16}. We
are using a shallow embedding that, in general, encodes semantic
representations of domain-specific formalisms within a host-language
(deep embeddings start from syntactic representations using data types
to program abstract syntax trees).  This benefits not only from the
well known advantages of shallowness: more rapid developments and
simpler, more adaptable components. It has also shifted our focus from
encoding $\dL$'s complex syntactic proof system to developing
denotational semantics for hybrid systems and supporting the natural
style in which mathematicians, physicists or engineers reason about
them---without proof-theoretic baggage. After all, we get Isabelle's
own proof system and proof methods for free, and our expressive power
is only limited by its type theory and higher-order logic.

Our main contribution is an open compositional semantic framework for
the deductive verification of hybrid programs in a general purpose
proof assistant. In a nutshell, hybrid programs are while programs, or
simply programs with control loops, in which an evolution command for
the continuous system dynamics complements the standard assignment
command for the discrete control. Evolution commands roughly specify
vector fields (via systems of ordinary differential equations)
together with guards that model boundary conditions. Here we restrict
our attention to abstract predicate transformer algebras using modal
Kleene algebras~\cite{DesharnaisS11}, quantales of lattice
endofunctions or quantaloids of functions between
lattices~\cite{BackW98}. They are instantiated first to intermediate
relational or state transformer semantics for $\dL$-style hybrid
programs, and then to concrete semantics over program stores for
hybrid programs: for dynamical systems with global flows, Lipschitz
continuous vector fields with local flows and continuous vector fields
with multiple solutions.  Another verification component is based
directly on flows. This array of components demonstrates the
compositionality and versatility of our
framework. Figure~\ref{fig:framework} shows its basic anatomy.

Our framework benefits from compositionality and algebra in various
ways.  Using algebra allows us to derive most of the semantic
propreties needed for verification by equational reasoning, and it
reduces the overhead of developing different concrete semantics to a
minimum. Using modal Kleene algebras and predicate transformer
algebras, in particular, makes large parts of verification condition
generation equational, and thus accessible to Isabelle's simplifiers.
Compositionality of our extant framework for classical programs allows
us to localise the development of concrete semantics for hybrid
programs to the specification and formalisation of a semantics for
evolutions commands. We only need to replace standard models of the
program store by a hybrid store model. In our denotational state
transformer semantics, evolution commands are interpreted as unions of
all orbits of solutions of the vector field at some initial value,
subject to the guards constraining the durations of evolutions.  This
covers situations beyond the remits of the Picard-Lindel{\"o}f
theorem~\cite{Hirsch09,Teschl12} and supports general reasoning about
guarded invariant sets. Ultimately, we can simply plug the predicate
transformers for evolution commands into the generic algebras for
while programs and their rules for verification condition generation.

Verification condition generation for evolution commands is supported
by three workflows that are inspired by $\dL$, but work differently in
practice:

\begin{itemize}
\item The first one asks users to supply a flow and a Lipschitz
  constant for the vector field specified by the evolution command. We
  usually obtain this data using an external computer algebra system
  (integrating one into Isabelle seems routine and is left for future
  work).  After certifying the flow conditions and checking Lipschitz
  continuity of the vector field, as dictated by the Picard-Lindel\"of
  theorem, the orbit for the flow can be used to compute the weakest
  liberal preconditions for the evolution command. This workflow
  deviates from $\dL$ in allowing users to supply an interval of
  interest as domain of the flow.

\item The second workflow works more generally in situations where
  unique solutions need not exist or are difficult to work with. It
  requires users to supply an invariant set for the vector field in
  the sense of dynamical systems
  theory~\cite{Hirsch09,Teschl12}. After certifying the properties for
  invariant sets, a correctness specification for the evolution
  command and the invariant set is used in place of a weakest liberal
  precondition. Here, beyond $\dL$, we support working with solutions
  defined over chosen intervals and using $\dL$-style inference rules
  as well as arbitrary higher-order logic.

\item The third workflow uses flows ab initio in the specification and
  semantic analysis of evolution commands. This circumvents checking
  any continuity, existence, uniqueness or invariant conditions of
  vector fields mentioned. This is not at all supported by $\dL$.
\end{itemize}

With all three workflows, hybrid program verification is ultimately
performed within the concrete semantics. But, as with classical
program verification, verification condition generation eliminates all
structural conditions automatically so that proof obligations are
entirely about the dynamics of the hybrid program store.  They can be
calculated in mathematical textbook style by equational reasoning, and
of course by external solvers and decision procedures for arithmetic
(their integration, as oracles or as verified components, is very
important, but left for future work). For the introductory examples
presented, we have merely formalised some simple tactics that help
automating the computation of derivatives in multivariate Banach
spaces or that of polynomials and transcendental functions. Yet for
those who prefer $\dL$-style reasoning we have formalised a
rudimentary set of its inference rules that are sound relative to our
semantics. Overall, unlike $\dL$, which prescribes its domain-specific
set of inference rules, we grant users the freedom of choice between
various workflows and even of developing their own one within our
semantic framework.

The entire framework, including the mathematical development in this
article, has been formalised with Isabelle/HOL. All Isabelle
components can be found in the Archive of Formal
Proofs~\cite{afp:kad,afp:vericomp,afp:transem,afp:hybrid}. We are
currently using them to verify hybrid programs post hoc in the
standard weakest liberal precondition style outlined above.  Yet the
approach is flexible enough to support the verification of hybrid
systems using Hoare logic~\cite{FosterMS20}, symbolic execution with
strongest postconditions, program refinement with predicate
transformers in the style of Back and von Wright~\cite{BackW98} and
Morgan~\cite{FosterMS20}, and reasoning about hybrid program
equivalences in the elegant equational style of Kleene algebra with
tests~\cite{Kozen97}.

While our approach is powerful enough to tackle most problems of a
recent systems competition~\cite{MitschMJZWZ20}, the work documented
in this article focuses mainly on the semantic foundations and the
proof of concept that the approach works. A more user-friendly
specification language, a less simplistic hybrid store model, enhanced
tactics for reasoning with flows and invariants, and mathematical
background theories for reasoning about affine and linear systems of
differential equations have been added while this article has been
under review~\cite{Munive20,FosterGMS21}.  The doctoral dissertation
of the first author contains a more comprehensive description of the
framework and further generalisations~\cite{Munive21}.

The remainder of this article is organised as follows:
Section~\ref{sec:KA}-\ref{sec:pt-monad} introduce the algebras of
relations, state and predicate transformers needed.
Section~\ref{sec:discrete-store} explains the shallow embedding used
to formalise verification components for while programs. After
recalling the basics of differential equations in
Section~\ref{sec:ODE}, we introduce our semantics for evolution
commands in
Section~\ref{sec:hybrid-store}-\ref{sec:differential-invariants} and
explain our procedures for computing weakest liberal preconditions and
reasoning with differential invariants for
them. Section~\ref{sec:isa-pt}-\ref{sec:isa-wlp} summarise the
corresponding Isabelle components. Section~\ref{sec:dL} and
\ref{sec:isa-dL} briefly list the derivation and formalisation of
semantic variants of $\dL$ inference rules.
Section~\ref{sec:examples} presents four verification examples in our
framework using the main two workflows. Section~\ref{sec:flow-component}
presents our third workflow and a brief example for it.
Sections~\ref{sec:related-work} and~\ref{sec:conclusion} discuss related
work and conclude the article. A glossary of cross-references between theorems
in the text and the Isabelle theories is presented in
Appendix~\ref{sec:crossref}.

%%%%%%%%%%%%%%%%%%%%%%%%%%%%%%%%%%%%

\section{Kleene Algebra}\label{sec:KA}

This section summarises the mathematical foundations of our simplest
and most developed predicate transformer algebra---modal Kleene
algebra. It introduces the basics of Kleene algebras, and the state
transformer model and relational model used.  The relational model is
standard for Kleene algebra. The state transformer model has so far
received less attention and is therefore explained in detail.

A \emph{dioid} $(S,+,\cdot,0,1)$ is an additatively idempotent
semiring, $\alpha+\alpha=\alpha$ holds for all $\alpha\in S$. The
underlying abelian monoid $(S,+,0)$ is therefore a semilattice with
order defined by $\alpha\le \beta\leftrightarrow \alpha+\beta=\beta$.
The order is preserved by $\cdot$ and $+$ in both arguments; $0$ is
its least element.

A \emph{Kleene algebra} $(K,+,\cdot,0,1,^\ast)$ is a dioid expanded by
the Kleene star $(-)^\ast:K\to K$ that satisfies the left and right
unfold and induction axioms
\begin{align*}
  1+\alpha\cdot\alpha^\ast &\le \alpha^\ast, \qquad \gamma+\alpha\cdot
  \beta\le \beta\rightarrow \alpha^\ast \cdot \gamma\le \beta,\\
  1+\alpha^\ast\cdot\alpha &\le \alpha^\ast, \qquad \gamma+\beta\cdot
  \alpha\le \beta\rightarrow \gamma\cdot \alpha^\ast \le \beta.
\end{align*}
By these axioms, $\alpha^\ast\cdot \gamma$ is the least fixpoint of
the function $\gamma+\alpha\cdot (-)$ and $\gamma\cdot \alpha^\ast$
that of $\gamma+(-)\cdot \alpha$, where we use $-$ as a wildcard for
function arguments.  The fixpoint $\alpha^\ast$ arises as a special
case. The more general induction axioms combine its definition with
sup-preservation or continuity of left and right multiplication.

Opposition is an important duality of Kleene algebras: swapping the
order of multiplication in any Kleene algebra yields another one. The
class of Kleene algebras is therefore closed under opposition.

Kleene algebras were conceived as algebras of regular expressions. But
here we interpret their elements as programs.  Addition models their
nondeterministic choice, multiplication their sequential composition
and the Kleene star their unbounded finite iteration.  The element $0$
models abort; $1$ models the ineffective program.  Two programs are
deemed equal if they lead from the same inputs to the same
outputs. These intuitions are grounded in concrete program semantics.

With the relational composition of $R\subseteq X\times Y$ and
$S\subseteq Y\times Z$ defined as $(R;S)\, x \,z$ if $R\, x\, y$ and
$S\, y\, z$ for some $y\in Y$, with $\Id_X\, x\, y$ if $x=y$, and the
reflexive-transitive closure of $R\subseteq X\times X$ defined as
$R^\ast=\bigcup_{i\in\mathbb{N}} R^i$, where $R^0=\Id_X$ and
$R^{i+1}=R;R^i$, where we write $R\, x\, y$ instead of $(x,y)\in R$,
the following holds.

\begin{proposition}\label{P:rel-ka}
  Let $X$ be a set. Then
  $\rel\, X = (\Pow\, (X\times X),\cup,;,\emptyset,\Id_X,^\ast)$ forms
  a Kleene algebra---the \emph{full relation Kleene algebra} over $X$.
\end{proposition}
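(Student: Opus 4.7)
The plan is to unpack the proposition into two layers: first verify that $\rel\, X$ is a dioid, then verify the four Kleene star axioms using the explicit description $R^\ast = \bigcup_{i\in\mathbb{N}} R^i$. Since the carrier is a powerset, I would work pointwise throughout, reducing every inclusion to an implication between first-order statements about pairs $(x,y)\in X\times X$.

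For the dioid part, I would check in turn: that $(\Pow(X\times X),\cup,\emptyset)$ is a commutative idempotent monoid (immediate from Boolean algebra on sets); that $(\Pow(X\times X),;,\Id_X)$ is a monoid, where associativity of $;$ and the identity laws $\Id_X;R = R = R;\Id_X$ follow by unfolding the definitions and using the existential quantifier over intermediate points; that $;$ distributes over $\cup$ on both sides, which reduces to distributing an existential over a disjunction; and that $\emptyset$ annihilates $;$ on both sides, which is trivial because no intermediate point can witness a composition through the empty relation. Monotonicity of $;$ in both arguments then follows from distributivity and the characterisation of $\subseteq$ via $\cup$, confirming the dioid order matches set inclusion.

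For the Kleene star, the two unfold laws follow from the definition: $\Id_X = R^0 \subseteq R^\ast$ and, by distributivity of $;$ over the union $R^\ast = \bigcup_i R^i$, one has $R;R^\ast = \bigcup_i R^{i+1} \subseteq R^\ast$, and symmetrically on the right. For the left induction axiom, assuming $T \cup R;S \subseteq S$, I would show $R^i;T \subseteq S$ for all $i\in\mathbb{N}$ by induction on $i$: the base case $i=0$ gives $\Id_X;T = T \subseteq S$ from the hypothesis, and the step uses $R^{i+1};T = R;(R^i;T) \subseteq R;S \subseteq S$, invoking monotonicity of $;$ and again the hypothesis. Since $R^\ast;T = \bigcup_i R^i;T$ by distributivity, the conclusion $R^\ast;T \subseteq S$ follows. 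The right induction axiom is dual, with the only subtlety being to use $R^{i+1} = R^i;R$ (which itself needs a small sub-induction, or the observation that both $R^i;R$ and $R;R^i$ describe paths of length $i+1$).

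The only mildly delicate step is ensuring that $;$ commutes with arbitrary unions on either side, since that is what turns the pointwise induction on $i$ into the required inclusion about $R^\ast$. This is a routine consequence of the existential-quantifier definition of $;$ commuting with set-theoretic union, so I do not anticipate a real obstacle; the entire proof is essentially bookkeeping against the definitions, and the work lies in being systematic rather than in any clever construction.
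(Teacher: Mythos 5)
Your proof is correct and is the standard verification that this result rests on; the paper itself states Proposition~2.1 without proof, treating it as well known and deferring to the Archive of Formal Proofs formalisation. Your decomposition (dioid axioms pointwise, star unfold from $R^\ast=\bigcup_i R^i$ plus distributivity of $;$ over arbitrary unions, induction axioms by induction on $i$, with the correctly flagged sub-lemma $R^{i+1}=R^i;R$ for the right-hand versions) is exactly the argument one would formalise.
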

A \emph{relation Kleene
  algebra} over $X$ is thus any subalgebra of $\rel\, X$.

Opposition can be expressed in $\rel\, X$ by conversion, where the
\emph{converse} of relation $R$ is defined by
$R^\smallsmile\, x\, y \leftrightarrow R\, y\, x$. It satisfies in particular
$(R;S)^\smallsmile = S^\smallsmile ; R^\smallsmile$.

The isomorphism $\Pow\, (X\times Y) \cong (\Pow\, Y)^X$ between
categories of relations and non-deterministic functions---so-called
\emph{state transformers}---yields an alternative representation.  It
is given in terms of the bijections
$\mathcal{F}:\Pow\, (X\times Y) \to (\Pow\, Y)^X$ and
$\mathcal{R}:(\Pow\, Y)^X\to \Pow (X\times Y)$ defined by
$\mathcal{F}\, R\, x = \{y\in Y\mid R\, x\, y\}$ and by
$\mathcal{R}\, f\, x\, y \Leftrightarrow y \in f\, x$.  Following
Isabelle syntax, we use juxtaposition with a space to denote function
application. State transformers $f:X\to \Pow\, Y$ and
$g:Y\to \Pow\, Z$ are composed by the (forward) Kleisli composition of
the powerset monad
\begin{equation*}
(f\circ_K g)\, x = \bigcup\{g\, y\mid y \in f\ x \}.
\end{equation*}
The function $\eta_X = \{-\}$ is a unit of this monad. The functors
$\mathcal{F}$ and $\mathcal{R}$ preserve arbitrary sups and infs,
extended pointwise to state tranformers, and stars
$f^{\ast_K}\, x = \bigcup_{i\in\mathbb{N}} f^{i_K}\, x$, which are
defined with respect to Kleisli composition.

\begin{proposition}\label{P:kleisli-ka}
  Let $X$ be a set. Then
  $\sta\, X = ((\Pow\, X)^X,\cup,\circ_K,\lambda x.\ \emptyset,
  \eta_X,^{\ast_K})$ forms a Kleene algebra---the \emph{full state
    transformer Kleene algebra} over $X$.
\end{proposition}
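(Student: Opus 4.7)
The plan is to transport the Kleene algebra structure from $\rel\, X$ (Proposition~\ref{P:rel-ka}) across the bijection $\mathcal{F}:\Pow\, (X\times X)\to (\Pow\, X)^X$ with inverse $\mathcal{R}$. Because the Kleene algebra axioms are universally quantified equations and Horn clauses, it suffices to show that $\mathcal{F}$ (equivalently $\mathcal{R}$) is an isomorphism of the entire signature $(\cup,;,\emptyset,\Id_X,{}^\ast)$ onto $(\cup,\circ_K,\lambda x.\emptyset,\eta_X,{}^{\ast_K})$; the axiomatic identities and implications then pull back to $\sta\, X$ by applying them to $\mathcal{R}\, f,\mathcal{R}\, g,\mathcal{R}\, h$ and then applying $\mathcal{F}$.

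First, I would discharge the constants and basic operations directly from the definitions. The zero clause $\mathcal{F}(\emptyset)\, x = \emptyset$ and the unit clause $\mathcal{F}(\Id_X)\, x = \{y\mid x=y\} = \eta_X\, x$ are immediate. Preservation of sups, $\mathcal{F}(R\cup S) = \mathcal{F}(R)\cup\mathcal{F}(S)$ with pointwise union on the right, is already asserted in the text. The main calculation is composition: unfolding the definitions gives
\[
\mathcal{F}(R;S)\, x = \{z\mid \exists y.\ R\, x\, y\wedge S\, y\, z\} = \bigcup\{\mathcal{F}(S)\, y\mid y\in\mathcal{F}(R)\, x\} = (\mathcal{F}(R)\circ_K \mathcal{F}(S))\, x,
\]
so Kleisli composition is the image of relational composition under $\mathcal{F}$.

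For the star, the plan is to first verify by induction on $i$ that $\mathcal{F}(R^{i}) = \mathcal{F}(R)^{i_K}$, using the base case $R^0 = \Id_X$ and the composition clause above for the step. Combined with sup-preservation one then obtains
\[
\mathcal{F}(R^\ast)\, x = \mathcal{F}\Bigl(\bigcup_{i\in\mathbb{N}} R^i\Bigr)\, x = \bigcup_{i\in\mathbb{N}} \mathcal{F}(R^i)\, x = \bigcup_{i\in\mathbb{N}} \mathcal{F}(R)^{i_K}\, x = \mathcal{F}(R)^{\ast_K}\, x,
\]
which is the star-preservation clause mentioned in the text. I expect this to be the only mildly nontrivial step; once it is in place the dioid axioms and the two Kleene star unfold/induction laws for $\sta\, X$ are obtained by conjugating the corresponding identities of $\rel\, X$ with $\mathcal{F}$ and $\mathcal{R}$, using that $\mathcal{R}$ is monotone (which follows from pointwise preservation of sups). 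The main obstacle, such as it is, is simply bookkeeping: ensuring that the order induced by $\cup$ on state transformers agrees with the pointwise subset order so that the Horn-clause induction axioms transfer correctly.
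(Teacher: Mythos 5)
Your proposal is correct and follows essentially the route the paper sets up: the text introduces the bijections $\mathcal{F}$ and $\mathcal{R}$ and records precisely the preservation properties (sups, infs, Kleisli composition, units and stars) immediately before stating the proposition, so that the Kleene algebra axioms--including the Horn-clause induction laws, since the isomorphism preserves and reflects the pointwise order--transfer from $\rel\, X$ to $\sta\, X$. Your explicit calculation for composition and the induction on $i$ for star preservation fill in exactly the steps the paper leaves implicit.
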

A \emph{state transformer Kleene algebra} over $X$ is any subalgebra
of $\sta\, X$.  Opposition is now expressed using the (contravariant)
functor
$(-)^{\mathit{op}} = \mathcal{F}\circ (-)^\smallsmile\circ
\mathcal{R}$ that associates $f^{\mathit{op}}:Y\to \Pow\, X$ with
every $f:X\to \Pow\, Y$.

The category $\mathbf{Rel}$, with relations of type $X\times Y$ or
state transformers of type $X\to \Pow\, Y$ as arrows, is beyond
mono-type Kleene algebra.

For a more refined hierarchy of variants of Kleene algebras, their
calculational properties and the most important computational models,
see our formalisation in the Archive of Formal
Proofs~\cite{afp:ka}. The state transformer model has been formalised
with Isabelle for this article.

%%%%%%%%%%%%%%%%%%%%%%%%%%%%%%%%%%%%

\section{Modal Kleene Algebra}\label{sec:MKA}

Kleene algebras must be extended to express conditionals or
while-loops more faithfully. This requires tests, which are not prima
facie actions, but propositions. Assertions and correctness
specifications cannot be expressed directly either.

Two standard extensions bring Kleene algebra closer to program
semantics. Kleene algebra with tests~\cite{Kozen97} yields a simple
algebraic semantics for while-programs and a partial correctness
semantics for these in terms of an algebraic propositional Hoare
logic---ignoring assignments.  Predicate transformer
semantics, however, cannot be expressed~\cite{Struth16}.
Alternatively, Kleene algebras can be enriched by modal box and
diamond operators in the style of propositional dynamic logic
($\PDL$), which yields test and assertions as well as predicate
transformers. Yet once again, assignments cannot be expressed within
the algebra. We outline the second approach.

An \emph{antidomain semiring} \cite{DesharnaisS11} is a semiring $S$
expanded by an \emph{antidomain operation} $\ad:S\to S$ axiomatised by
\begin{align*}
 \ad\, \alpha \cdot \alpha = 0,\qquad
\ad\, \alpha + \ad^2\, \alpha = 1,\qquad \ad\, (\alpha\cdot \beta) \le \ad\, (\alpha \cdot \ad^2\, \beta).
\end{align*}
By opposition, an \emph{antirange semiring} \cite{DesharnaisS11} is a
semiring $S$ expanded by an \emph{antirange operation} $\ar:S\to S$
axiomatised by
\begin{equation*}
 \alpha\cdot \ar\, \alpha  = 0,\qquad
\ar\, \alpha + \ar^2\, \alpha = 1,\qquad \ar\, (\alpha\cdot \beta) \le
\ar\, (\ar^2\, \alpha \cdot \beta).
\end{equation*}
Antidomain and antirange semirings 
are a fortiori dioids.

The antidomain $\ad\, \alpha$ of program $\alpha$ models the set of
those states from which $\alpha$ cannot be executed. The operation
$d=\ad^2$ thus defines the \emph{domain} of a program: the set of
those states from which it can be executed.  By opposition, the
antirange $\ar\, \alpha$ of $\alpha$ yields those states into which
$\alpha$ cannot be executed and $r=\ar^2$ defines the \emph{range} of
$\alpha$: those states into which it can be executed.

A \emph{modal Kleene algebra} ($\MKA$) \cite{DesharnaisS11} is a
Kleene algebra that is both an antidomain and an antirange Kleene
algebra in which $d\circ r=r$ and $r\circ d = d$.

In a $\MKA$ $K$, the set $\Pow\, \ad\, K$---the image of $K$ under
$\ad$---models the set of all tests or propositions. We henceforth
often write $p,q,\dots$ for its elements. Moreover,
$\Pow\, \ad\, K=\Pow\, d\, K=\Pow\, r\, K =\Pow\, ar\, K = K_d=K_r$,
where $K_f=\{\alpha\in S\mid f\, \alpha = \alpha\}$ for $f\in\{d,r\}$.
Hence $p\in \Pow\, \ad\, K \leftrightarrow d\, p = p$. It follows that
the class $\MKA$ is closed under opposition. In addition, $K_d$ forms a
boolean algebra with least element $0$, greatest element $1$, join
$+$, meet $\cdot$ and complementation $\ad$---the algebra of
propositions, assertions or tests.

Axiomatising $\MKA$ based on domain and range would lack the power to
express complementation: $K_d$ would only be a distributive lattice.

The programming intuitions for $\MKA$ are once again grounded in
concrete semantics.
\begin{proposition}\label{P:rel-mka}
  If $X$ is a set, then $\rel\, X$ is the full relation $\MKA$
  over $X$ with
  \begin{align*}
    \ad\, R\, x\, x \leftrightarrow \neg \exists y \in X.\
    R\, x\, y \qquad\text{ and }\qquad
    \ar\, R = \ad\, R^\smallsmile.
  \end{align*}
\end{proposition}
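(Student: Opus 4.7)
The plan is to leverage Proposition~\ref{P:rel-ka}, which already gives the Kleene algebra structure of $\rel\, X$, and then verify only the modal part: the three antidomain axioms, the three antirange axioms, and the two compatibility conditions $d\circ r=r$ and $r\circ d=d$. Before doing anything else, I would unfold the candidate operations on sample inputs to build intuition: the definition forces $\ad\, R$ to be the subidentity $\{(x,x)\mid \neg\exists y.\ R\, x\, y\}$, and then $\ad^2\, R$ comes out as $\{(x,x)\mid \exists y.\ R\, x\, y\}$, i.e.\ the usual domain of $R$ confined to the diagonal. Dually, $\ar\, R=\{(y,y)\mid \neg\exists x.\ R\, x\, y\}$ and $\ar^2\, R$ is the range of $R$ on the diagonal.

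Next, I would check the three antidomain laws by direct set-theoretic pointwise reasoning. The equation $\ad\, R; R = \emptyset$ is immediate: any $(x,z)$ in the composite would require some $y$ with $(x,y)\in\ad\, R$ and $(y,z)\in R$, but $(x,y)\in\ad\, R$ forces $y=x$ and $\neg\exists w.\ R\, x\, w$, contradicting $R\, x\, z$. The law $\ad\, R\cup\ad^2\, R = \Id_X$ is a case split on whether $\exists y.\ R\, x\, y$ for a given $(x,x)\in\Id_X$. The locality law $\ad(R;S)\subseteq \ad(R;\ad^2\, S)$ is the only one requiring a tiny chain of implications: if $(x,x)\in\ad(R;\ad^2\, S)$ fails, there is $y$ with $R\, x\, y$ and $\ad^2\, S\, y\, y$, hence some $w$ with $S\, y\, w$, so $R;S$ is nonempty at $x$, making $(x,x)\notin\ad(R;S)$.

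The antirange axioms then follow entirely by opposition: since $\rel\, X$ is closed under the converse $(-)^\smallsmile$, which is an antiisomorphism satisfying $(R;S)^\smallsmile = S^\smallsmile;R^\smallsmile$ and $\Id_X^\smallsmile=\Id_X$, substituting $R^\smallsmile$ for $R$ into each antidomain axiom and converting both sides yields exactly the antirange axiom for $\ar\, R=\ad\, R^\smallsmile$. This avoids re-doing the same pointwise calculations.

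Finally, I would verify $d\circ r=r$ and $r\circ d=d$. Both $d\, R$ and $r\, R$ are subidentities, so applying $d=\ad^2$ or $r=\ar^2$ to them reduces to computing the domain/range of a subidentity, which is that same subidentity; concretely, $d(r\, R)=\{(y,y)\mid \exists z.\ r\, R\, y\, z\}$, and $r\, R\, y\, z$ forces $y=z$ and $\exists x.\ R\, x\, y$, recovering $r\, R$, and symmetrically for $r(d\, R)=d\, R$. None of the steps is genuinely hard; the only place to be careful is the third antidomain axiom, where one must unfold two layers of $\ad$ cleanly to expose the existence witness—and of course to remember that $\ad\, R$ and $\ar\, R$ live on the diagonal, so all comparisons with $\Id_X$ reduce to pointwise conditions on single states.
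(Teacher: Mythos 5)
Your proposal is correct and is exactly the routine verification this proposition calls for: the paper itself states the result without a textual proof (deferring to the Isabelle interpretation statement), and the intended argument is precisely your pointwise check of the three antidomain axioms, transfer of the antirange axioms via the converse antiisomorphism, and the observation that domain and range of subidentities are fixed points, giving $d\circ r=r$ and $r\circ d=d$. No gaps.
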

Every subalgebra of a full relation $\MKA$ is a \emph{relation}
$\MKA$.

Similarly, $\ar = \ad \circ (-)^\smallsmile$,
$d = r \circ (-)^\smallsmile$ and $r = d\circ (-)^\smallsmile$.
Furthermore, 
\begin{equation*}
(\Pow\, (X\times X))_d= \{P\mid P\subseteq \Id_X\}.
\end{equation*}
Henceforth we often identify such relational subidentities, sets and
predicates and their types via the isomorphisms
$(\Pow\, (X\times X))_d\, \cong\, X\to \bools\, \cong\, \Pow\, X$.

\begin{proposition}\label{P:kleisli-mka}
  Let $X$ be a set. Then $\sta\, X$ is the full state
  transformer $\MKA$ over $X$ with
  \begin{align*}
    \ad\, f\, x = 
    \begin{cases}
      \eta_X\, x, & \text{ if } f\, x = \emptyset,\\
\emptyset, & \text{ otherwise},
    \end{cases}
\qquad\text{ and }\qquad
\ar\, f = \ad\, f^{\mathit{op}}.
  \end{align*}
\end{proposition}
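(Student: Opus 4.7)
The plan is to transfer the $\MKA$ structure from the full relation $\MKA$ $\rel\, X$ to $\sta\, X$ via the isomorphism $(\mathcal{F}, \mathcal{R})$ set up before Proposition~\ref{P:kleisli-ka}, and then to unfold the transferred antidomain and antirange operations to see that they take the form stated.

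First I would note that $(\mathcal{F}, \mathcal{R})$ is a Kleene algebra isomorphism between $\rel\, X$ and $\sta\, X$: this is exactly the content of Proposition~\ref{P:kleisli-ka} together with the already-remarked preservation of arbitrary sups/infs and of the star, plus the routine verification that $\mathcal{R}\, (f \circ_K g) = \mathcal{R}\, f\,;\,\mathcal{R}\, g$ and $\mathcal{R}\,\eta_X = \Id_X$. Consequently, setting $\ad\, f := \mathcal{F}\, (\ad\, (\mathcal{R}\, f))$ and $\ar\, f := \mathcal{F}\, (\ar\, (\mathcal{R}\, f))$ defines an $\MKA$ structure on $\sta\, X$: the three antidomain axioms, the three antirange axioms, and the compatibility conditions $d \circ r = r$ and $r \circ d = d$ are all purely equational and already hold in $\rel\, X$ by Proposition~\ref{P:rel-mka}, so they transport along the isomorphism.

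Second I would compute the transferred operations explicitly. For $f \in (\Pow\, X)^X$ put $R = \mathcal{R}\, f$, so that $R\, x\, y \leftrightarrow y \in f\, x$. Then by Proposition~\ref{P:rel-mka},
\[
\ad\, R\, x\, y \leftrightarrow x = y \wedge \neg \exists z \in X.\ R\, x\, z \leftrightarrow x = y \wedge f\, x = \emptyset,
\]
and applying $\mathcal{F}$ gives $(\mathcal{F}\, (\ad\, R))\, x = \{y \mid x = y \wedge f\, x = \emptyset\}$, which equals $\eta_X\, x = \{x\}$ when $f\, x = \emptyset$ and $\emptyset$ otherwise, matching the stated formula for $\ad\, f$. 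For the antirange I would combine $\ar\, R = \ad\, R^\smallsmile$ from Proposition~\ref{P:rel-mka} with the given identity $(-)^{\mathit{op}} = \mathcal{F} \circ (-)^\smallsmile \circ \mathcal{R}$, which yields $\mathcal{R}\,(f^{\mathit{op}}) = (\mathcal{R}\, f)^\smallsmile$ and hence $\ar\, f = \mathcal{F}(\ad((\mathcal{R}\, f)^\smallsmile)) = \mathcal{F}(\ad(\mathcal{R}(f^{\mathit{op}}))) = \ad\, f^{\mathit{op}}$.

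The only mildly non-trivial step is the bookkeeping verification that $\mathcal{R}$ transports Kleisli composition to relational composition, since this compatibility is used implicitly whenever a multiplicative antidomain axiom is transported across the isomorphism. But this reduces to the one-line check
\[
\mathcal{R}\,(f \circ_K g)\, x\, z \leftrightarrow \exists y.\ y \in f\, x \wedge z \in g\, y \leftrightarrow (\mathcal{R}\, f\,;\,\mathcal{R}\, g)\, x\, z,
\]
so no real obstacle arises, and the proposition follows by transport of structure from Proposition~\ref{P:rel-mka}.
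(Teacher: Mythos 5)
Your proposal is correct. The paper gives no textual proof of this proposition --- it is established in the Isabelle formalisation by the statement \isa{\isakeyword{instantiation}\ nd{\isacharunderscore}fun}, i.e.\ by verifying the antidomain and antirange axioms directly on state transformers --- whereas you obtain it by transport of structure along the isomorphism $(\mathcal{F},\mathcal{R})$ from Proposition~\ref{P:rel-mka}. Both routes work. Your transport argument is arguably the more economical one on paper: the only genuine obligations are the compatibility checks you list ($\mathcal{R}\,(f\circ_K g)=\mathcal{R}\, f\,;\,\mathcal{R}\, g$, $\mathcal{R}\,\eta_X=\Id_X$, and preservation of sups, hence of the order and the star), after which every axiom of $\MKA$ --- including the inequational ones, since $\alpha\le\beta$ is the equation $\alpha+\beta=\beta$ and $+$ is preserved --- carries over from $\rel\, X$ for free; your explicit unfolding of $\mathcal{F}\,(\ad\,(\mathcal{R}\, f))$ and of $\ar\, f=\ad\, f^{\mathit{op}}$ via $\mathcal{R}\,(f^{\mathit{op}})=(\mathcal{R}\, f)^\smallsmile$ then recovers exactly the stated formulas. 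The direct verification, by contrast, keeps the state transformer model self-contained and independent of the relational one, which matters in the Isabelle development (where, as Section~\ref{sec:isa-wlp} notes, theorems cannot be propagated automatically along such isomorphisms outside of locales), but on paper it buys nothing over your argument. No gap.
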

Every subalgebra of a full relation $\MKA$ is a \emph{state
  transformer} $\MKA$. Similarly, 
\begin{align*}
     d\, f\, x = 
    \begin{cases}
      \emptyset, & \text{ if } f\, x = \emptyset,\\
\eta_X\, x, & \text{ otherwise},
    \end{cases}
\qquad\text{ and }\qquad
r\, f = d\, f^{\mathit{op}}.
\end{align*}
These propositions generalise again beyond mono-types, but algebras of
such typed relations and state transformers cannot be captured by
$\MKA$.

In every $\MKA$, $p\cdot \alpha$ and $\alpha\cdot p$
model the domain and range restriction of $\alpha$ to states
satisfying $p$.  Conditionals and while loops can thus be
expressed:
\begin{align*}
  \IF{p}{\alpha}{\beta} = p\cdot \alpha + \bar p \cdot
  \beta\qquad\text{ and }\qquad
\WHILE{p}{\alpha} = (p\cdot \alpha)^\ast \cdot \bar p,
\end{align*}
where we write $\bar p = \ad\, p = \ar\, p$. Together with sequential
composition $\alpha ; \beta= \alpha\cdot \beta$ this yields an
algebraic semantics of while programs without assignments. It
is grounded in the relational and the state transformer semantics.
A more refined hierarchy of variants of $\MKA$s, starting from domain
and antidomain semigroups, their calculational properties and the most
important computational models, can be found in the Archive of Formal
Proofs~\cite{afp:kad}. The state transformer model of $\MKA$ has been
formalised with Isabelle for this article.

%%%%%%%%%%%%%%%%%%%%%%%%%%%%%%%%%%%%

\section{Modal Kleene Algebra, Predicate Transformers and Invariants}\label{sec:mka-pt}

$\MKA$ can express the modal operators of $\PDL$, both with a
relational Kripke semantics and a coalgebraic state transformer semantics.
\begin{equation*}
|\alpha\rangle p = d\, (\alpha\cdot p), \qquad |\alpha ] p = \ad\,
(\alpha\cdot \ad\, p),\qquad
\langle \alpha | p  = r\, (p\cdot \alpha), \qquad [\alpha |p = \ar\, (\ar
\, p\cdot \alpha).
\end{equation*}
This is consistent with J\'onsson and Tarski's boolean algebras with
operators~\cite{JonssonT51}: Each of $|\alpha\rangle$,
$\langle\alpha |$, $|\alpha ]$ and $[\alpha |$ is an endofunction
$K_d\to K_d$ on the boolean algebra $K_d$.  Yet another view of modal
operators is that of predicate transformers.  The function $|-]-$
yields the \emph{weakest liberal precondition} operator $\wlp$;
$\langle -|-$ the \emph{strongest postcondition} operator.

The boxes and diamonds of $\MKA$ are related by De Morgan duality:
\begin{equation*}
  |\alpha\rangle p = \overline{ |\alpha ]\bar p}, \qquad |\alpha ] p  =
  \overline{ |\alpha \rangle \bar p},\qquad
  \langle\alpha| p = \overline{ [\alpha |\bar p}, \qquad [\alpha | p  =
  \overline{ \langle \alpha | \bar p}\, ;
\end{equation*}
 their dualities are captured by the adjunctions and conjugations
\begin{alignat*}{4}
  |\alpha\rangle p \le q &\leftrightarrow p \le [\alpha|q, &\qquad
  \langle \alpha |p \le q &\leftrightarrow p \le |\alpha]q,\\
  |\alpha\rangle p\cdot q = 0 &\leftrightarrow p\cdot \langle
  \alpha|q=0,&\qquad |\alpha]p+ q = 1&\leftrightarrow
  p+[\alpha|q=1. 
\end{alignat*}

In $\rel\, X$, as in standard Kripke semantics of modal logics in
general, and of $\PDL$ in particular,
\begin{equation*}
|R\rangle P = \{x\mid \exists y\in X.\ R\, x\, y \wedge 
P\, y\}\quad\text{ and }\quad
|R]P = \{x\mid \forall y\in X.\ R\, x\, y \rightarrow P\, y\},
\end{equation*}
where we identify predicates and subidentity relations. For the
remaining two modalities, $\langle -|=|-\rangle\circ (-)^\smallsmile$
and $ [-|=|-]\circ (-)^\smallsmile$.  Hence $|R\rangle P$ is the
preimage of $P$ under $R$ and $\langle R|P$ the image of $P$ under
$R$.  The isomorphism between subidenties, predicates and sets also
allows us to see $|R\rangle$, $\langle R|$, $|R]$ and $[R|$ as
operators on the complete atomic boolean algebra $\Pow\, X$, which
carries algebraic structure beyond $K_d$ that is reminiscent of a
module.

In $\sta\, X$, alternatively,
\begin{equation*}
\langle f | P =\{y\mid \exists x.\ y \in
                f\, x \wedge P\, x\}\quad \text{ and }\quad
|f]P = \{x\mid f\, x \subseteq P\}. 
\end{equation*} 
Moreover, $|-\rangle = \langle-|\circ (-)^{\mathit{op}}$ and
$[-| = |-]\circ (-)^{\mathit{op}}$.  Here, $\langle f|$ is the Kleisli
extension of $f$ for the powerset monad and $|f\rangle$ that of the
opposite function (see Section~\ref{sec:pt-monad}).

The isomorphism $\Pow\, (X\times X) \cong (\Pow\, X)^X$  makes the
approaches coherent: 
\begin{align*}
  |f\rangle = |\mathcal{R}\,  f\rangle,\qquad |R\rangle =
  |\mathcal{F}\, 
  R\rangle, \qquad
  | f] = |\mathcal{R}\,  f],\qquad |R] = |\mathcal{F}\,  R],
\end{align*}
and, dually,   $\langle f| = \langle \mathcal{R}\,  f|$,  $\langle R| =
  \langle \mathcal{F}\, 
  R|$,  $[f| = [\mathcal{R}\,  f|$ and $[R| = [\mathcal{F}\,  R|$.

Predicate transformers are useful for specifying program correctness
conditions and for verification condition generation.  The identity
\begin{equation*}
p\le |\alpha]q
\end{equation*}
captures the standard partial correctness specification for programs:
if $\alpha$ is executed from states where precondition $p$ holds, and
if it terminates, then postcondition $q$ holds in the states where it
does.  Verifying it amounts to computing $|\alpha]q$ recursively over
the program structure from $q$ and checking that the result is greater
or equal to
$p$.  Intuitively, $|\alpha]q$ represents the largest set of states
from which one must end up in set $q$ when executing $\alpha$, or
alternatively the weakest precondition from which postcondition $q$
must hold when executing $\alpha$.

Calculating $|\alpha]q$ for straight-line programs is completely
equational, but loops require invariants.  To this end one usually
adds annotations to loops,
\begin{equation*}
\WHILEI{p}{i}{\alpha} = \WHILE{p}{\alpha},
\end{equation*}
where $i$ is the \emph{loop invariant} for $\alpha$, and calculates
$\wlp$s as follows~\cite{GomesS16,afp:vericomp}.  For all
$p,q,i,t\in K_d$ and $\alpha,\beta\in K$,
  \begin{gather}
|\alpha \cdot \beta] q = |\alpha] |\beta] q,\label{eq:wlp-seq}\tag{wlp-seq}\\
|\IF{p}{\alpha}{\beta}] q = (\bar p + |\alpha] q)\cdot (p + |\beta] q) = p
\cdot |\alpha] q+ \bar p \cdot |\beta] q,\label{eq:wlp-cond}\tag{wlp-cond}\\
  i\le |\alpha]i \rightarrow i\le |\alpha^\ast]i,\label{eq:wlp-star}\tag{wlp-star}\\
p \le i \wedge i\cdot t\le |\alpha] i \wedge i\cdot \bar t\le q\, \rightarrow \, p \le
  |\WHILEI{t}{i}{\alpha}] q.\label{eq:wlp-loop}\tag{wlp-while}
  \end{gather}
  In the rule (\ref{eq:wlp-star}), $i$ is a an invariant for the star
  as well.  In addition we support a while rule without an invariant
  annotation.

  More generally, beyond loops, an element $i\in K_d$ is an
  \emph{invariant} for $\alpha$ if it is a postfixpoint of $|\alpha]$
  in $K_d$:
\begin{equation*}
  i \le |\alpha]i.
\end{equation*}
By the adjunction between boxes and diamonds, this is the case if and
only if $\langle\alpha | i \le i$, that is, $i$ is a prefixpoint of
$\langle \alpha|$ in $K_d$.  We return to this equivalence in the
context of differential invariants and invariant sets of vector fields
in Section~\ref{sec:differential-invariants}. We write
$\inv\, \alpha $ for the set of invariants of $\alpha$.

\begin{lemma}\label{P:inv-lemma}
In every $\MKA$, if $i,j\in \inv\, \alpha$, then  $i+j,i\cdot j\in \inv\, \alpha$. 
\end{lemma}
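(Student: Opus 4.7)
The plan is to reduce both closure properties to two standard facts about the box operator in a modal Kleene algebra: (i) monotonicity of $|\alpha]$ as an endofunction on $K_d$, and (ii) preservation of binary meets, $|\alpha](p\cdot q) = |\alpha]p\cdot |\alpha]q$. Monotonicity follows from the adjunction and De Morgan dualities already displayed in Section~\ref{sec:mka-pt} (alternatively, straight from the antidomain axioms for $\ad$). Meet preservation holds because $|\alpha]$ is a right adjoint on $K_d$ (dually to the diamond $\langle\alpha|$ being a left adjoint), and right adjoints preserve all existing meets.

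For the join case, I start from the hypotheses $i \le |\alpha]i$ and $j \le |\alpha]j$. Since $i \le i+j$ and $j \le i+j$ in $K_d$, monotonicity of $|\alpha]$ yields $|\alpha]i \le |\alpha](i+j)$ and $|\alpha]j \le |\alpha](i+j)$. Chaining with the hypotheses and taking the join in $K_d$ gives $i+j \le |\alpha](i+j)$, which is exactly $i+j\in\inv\,\alpha$.

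For the meet case, monotonicity already gives $i\cdot j \le i \le |\alpha]i$ and $i\cdot j \le j \le |\alpha]j$, so $i\cdot j \le |\alpha]i\cdot |\alpha]j$. Applying the meet-preservation identity $|\alpha](i\cdot j) = |\alpha]i\cdot|\alpha]j$ rewrites the right-hand side to $|\alpha](i\cdot j)$, giving $i\cdot j\in\inv\,\alpha$.

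The only step that is not completely routine is justifying meet preservation of $|\alpha]$ on $K_d$; this is where I expect the main (small) obstacle to lie, although in our setting it is immediate once one recalls the adjunction $\langle\alpha|p\le q \leftrightarrow p\le|\alpha]q$ from Section~\ref{sec:mka-pt}, since right adjoints preserve meets. No induction on the structure of $\alpha$ is needed; the argument is purely algebraic in $K_d$.
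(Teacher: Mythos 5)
Your proof is correct and is exactly the standard argument (the paper states Lemma~\ref{P:inv-lemma} without a textual proof, deferring to the Isabelle formalisation, but the intended reasoning is precisely this): monotonicity of $|\alpha]$ handles the join case, and multiplicativity $|\alpha](p\cdot q)=|\alpha]p\cdot|\alpha]q$ --- which, as you note, follows from $|\alpha]$ being the right adjoint of $\langle\alpha|$ on the boolean algebra $K_d$ --- handles the meet case. The only cosmetic quibble is that $i\cdot j\le i$ comes from $j\le 1$ (or from $\cdot$ being the meet on $K_d$), not from monotonicity of the box, but this does not affect the argument.
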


As a generalisation of the rule (\ref{eq:wlp-loop}) for annotated
while-loops we can derive a rule for commands annotated with
tentative invariants $\alpha\, \mathbf{inv}\ i = \alpha$. For all
$i,p,q\in K_d$ and $\alpha\in K$,
\begin{equation}
  p\le i \land i \le |\alpha] i \land i\le q\rightarrow p\le
  |\alpha\, \mathbf{inv}\, i] q. \label{eq:wlp-cmd}\tag{wlp-cmd}
\end{equation}
Combining (\ref{eq:wlp-cmd}) with (\ref{eq:wlp-star}) then yields, for
$\mathbf{loop}\, \alpha\, \mathbf{inv}\, i = \alpha^\ast$,

\begin{equation}
    p\le i \land i\le |\alpha]i \land i\le q\rightarrow p\le |\mathbf{loop}\, \alpha\,
    \mathbf{inv}\, i]q. \label{wlp-loop-inv}\tag{wlp-loop-inv}
\end{equation}
We use such annotated commands for reasoning about differential
invariants and loops of hybrid programs below.

The modal operators of $\MKA$ have, of course, a much richer algebra
beyond verification condition generation. For a comprehensive list see
the Archive of Formal Proofs~\cite{afp:kad}.  We have already derived
the rules of propositional Hoare logic, which ignores assignments, and
those for verification condition generation for symbolic execution
with strongest postconditions in this setting~\cite{afp:vericomp}. A
component for total correctness is also available. It supports
refinement proofs in the style of Back and von
Wright~\cite{BackW98}. But this is beyond the scope of this
article. The other two abstract predicate transformer algebras from
Figure~\ref{fig:framework} are surveyed in the following two sections.

%%%%%%%%%%%%%%%%%%%%%%%%%%%%%%%%%%%%

\section{Predicate Transformers \`a la Back and von
  Wright}\label{sec:pt-backvwright}

While $\MKA$ has so far been our most developed setting for verifying
(hybrid) programs, our framework is compositional and supports other
predicate transformer algebras as well. Two of them are outlined in
this and the following section. Their Isabelle
formalisation~\cite{afp:transem} is discussed in
Section~\ref{sec:pt-monad}.

The first approach follows Back and von Wright~\cite{BackW98} in
modelling predicate transformers, or simply \emph{transformers}, as
functions between complete lattices.  Readers not familiar with
lattice theory can freely skip this section. To obtain useful laws for
program construction or verification, conditions are imposed.

A function $f:L_1\to L_2$ between two complete lattices $(L_1,\le_1)$
and $(L_2,\le_2)$ is \emph{order-preserving} if
$x\le_1 y \rightarrow f\, x \le_2 f\, y$, \emph{sup-preserving} if
$f \circ \bigsqcup = \bigsqcup \mathop{\circ} \Pow\, f$ and
\emph{inf-preserving} if
$f \circ \bigsqcap = \bigsqcap \mathop{\circ} \Pow\, f$. All sup- or
inf-preserving functions are order-preserving.

We write $\mathcal{T}(L)$ for the set of transformers over the
complete lattice $L$, and $\mathcal{T}_\le(L)$,
$\mathcal{T}_{\sqcup}(L)$, $\mathcal{T}_{\sqcap}(L)$ for the subsets
of order-, sup- and inf-preserving transformers. Obviously,
$\mathcal{T}_{\sqcap}(L)=\mathcal{T}_{\sqcup}(L^\mathit{op})$. The
following fact is well known~\cite{BackW98,GierzHKLMS80}.
\begin{proposition}\label{P:pt-lattice}
  Let $X$ be a set, let $L$ be a complete lattice. Then $L^X$ forms a
  complete lattice with order and sups extended pointwise.
\end{proposition}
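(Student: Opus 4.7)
The plan is to lift the complete lattice structure of $L$ pointwise to $L^X$. First I would define the pointwise order on $L^X$ by $f\le g \iff \forall x\in X.\ f\, x \le g\, x$, where $\le$ on the right is the order of $L$. Reflexivity, antisymmetry and transitivity then transfer immediately from $L$ to $L^X$, since each condition is universally quantified over $x$ and can be checked pointwise. So $(L^X,\le)$ is a partial order.

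Next I would exhibit arbitrary suprema. Given $F\subseteq L^X$, define $(\bigsqcup F)\, x = \bigsqcup\{f\, x\mid f\in F\}$, which is well-defined because $L$ is complete. The two sup conditions reduce to the corresponding conditions in $L$ at each point: $(\bigsqcup F)\, x$ is an upper bound of $\{f\, x\mid f\in F\}$, so $\bigsqcup F$ is an upper bound of $F$ in the pointwise order; and if $g$ is any other upper bound, then $g\, x$ is an upper bound of $\{f\, x\mid f\in F\}$ for each $x$, whence $(\bigsqcup F)\, x\le g\, x$ for all $x$, that is, $\bigsqcup F\le g$.

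Finally, completeness follows either by invoking the standard fact that any partial order with all suprema is automatically a complete lattice (infima arise as suprema of the sets of lower bounds), or by defining $(\bigsqcap F)\, x = \bigsqcap\{f\, x\mid f\in F\}$ pointwise and running the dual argument.

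There is no real obstacle here: the proposition is a textbook instance of the preservation of complete lattice structure under arbitrary products, and everything is a mechanical pointwise check. The only mild subtlety worth flagging is that the pointwise definitions are forced by the requirement that evaluation maps $\mathrm{ev}_x:L^X\to L$ preserve sups and infs, which is precisely what makes the lifted operations behave as expected in the subsequent development of transformer quantales.
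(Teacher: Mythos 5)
Your proof is correct and is exactly the standard pointwise-lifting argument that the paper has in mind when it cites this proposition as a well-known fact (the paper itself gives no proof, deferring to Back--von Wright and Gierz et al.). Both the pointwise construction of sups and the fallback to the ``all sups implies complete lattice'' lemma are sound, so there is nothing to add.
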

Infs, least and greatest elements can then be defined from sups on
$L^X$ as usual. Function spaces $L^L$, in particular, form monoids
with respect to function composition $\circ$ and $\id_L$.  In
addition, $\circ$ preserves sups and infs in its first argument, but
not necessarily in its second one. Algebraically, this is captured as
follows.

A \emph{near-quantale} $(Q,\le,\cdot)$ is a complete lattice $(Q,\le)$
with an associative composition $\cdot$ that preserves sups in its
first argument. It is \emph{unital} if composition has a unit $1$. A
\emph{prequantale} is a near-quantale in which composition is
order-preserving in its second argument. A \emph{quantale} is a near
quantale in which composition preserves sups in its second argument.
See~\cite{Rosenthal90} for more information about quantales.

\begin{proposition}\label{P:trafo-quantale}
Let $L$ be a complete lattice. Then
  \begin{enumerate}
  \item $\mathcal{T}(L)$ and
    $\mathcal{T}(L^\mathit{op})$ form unital near-quantales;
  \item $\mathcal{T}_\le(L)$ ($\mathcal{T}_\le(L^\mathit{op}))$ forms
    a unital sub-prequantale of $\mathcal{T}(L)$ ($\mathcal{T}(L^\mathit{op})$);
  \item $\mathcal{T}_\sqcup(L)$ ($\mathcal{T}_\sqcap(L)$) forms a
    unital sub-quantale of $\mathcal{T}_\le(L)$ ($\mathcal{T}_\le(L^\mathit{op})$).
   \end{enumerate}
\end{proposition}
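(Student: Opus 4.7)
The plan is to verify each clause by peeling back to the pointwise definitions on $L^L$ and then invoke opposition to halve the work, since $\mathcal{T}(L^\mathit{op})$ carries the same underlying set but reversed order, $\mathcal{T}_\sqcap(L)=\mathcal{T}_\sqcup(L^\mathit{op})$, and order-preservation is self-dual. Throughout I would use that, by Proposition~\ref{P:pt-lattice}, $L^L$ is a complete lattice with pointwise order and pointwise sups, so $(\bigsqcup_{i}h_i)\,x = \bigsqcup_i h_i\,x$.

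For clause (1), I would check the near-quantale axioms for $\mathcal{T}(L)$ directly: composition $\circ$ is associative, $\id_L$ is a two-sided unit, and sups in the first argument are preserved, since for every $x\in L$
\begin{equation*}
\bigl(\bigl(\textstyle\bigsqcup_i f_i\bigr)\circ g\bigr)x
= \bigl(\textstyle\bigsqcup_i f_i\bigr)(g\,x)
= \textstyle\bigsqcup_i (f_i(g\,x))
= \textstyle\bigsqcup_i (f_i\circ g)\,x.
\end{equation*}
Crucially this argument uses no property of $g$, which is why the near-quantale structure requires nothing beyond the pointwise lattice. The $L^\mathit{op}$ case follows by opposition.

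For clause (2), I would verify that $\mathcal{T}_\le(L)$ is a sub-near-quantale and that composition additionally preserves order in the second argument. Closure under composition is immediate: a composite of order-preserving maps is order preserving. Closure under pointwise sups: if each $f_i$ is order preserving and $x\le y$, then $f_i\,x\le f_i\,y\le \bigsqcup_j f_j\,y$ for every $i$, whence $\bigsqcup_i f_i\,x\le \bigsqcup_i f_i\,y$; and $\id_L$ is order preserving. Finally, if $g\le h$ pointwise and $f\in\mathcal{T}_\le(L)$, then $f(g\,x)\le f(h\,x)$ for every $x$, so $f\circ g\le f\circ h$. The $L^\mathit{op}$ version is again obtained by opposition.

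For clause (3), I would show that $\mathcal{T}_\sqcup(L)$ is closed under the sub-prequantale structure and that composition preserves sups in the second argument as well. Sup-preservation implies order-preservation, so $\mathcal{T}_\sqcup(L)\subseteq\mathcal{T}_\le(L)$, and the composite of sup-preserving maps is sup-preserving by a one-line chain of equalities. For pointwise sups, a double-sup swap gives
\begin{equation*}
\bigl(\textstyle\bigsqcup_i f_i\bigr)\bigl(\textstyle\bigsqcup_j x_j\bigr)
= \textstyle\bigsqcup_i\bigsqcup_j f_i\,x_j
= \textstyle\bigsqcup_j\bigsqcup_i f_i\,x_j
= \textstyle\bigsqcup_j\bigl(\textstyle\bigsqcup_i f_i\bigr)x_j,
\end{equation*}
and $\id_L$ preserves sups trivially. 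Right-sup-preservation follows since, for $f$ sup-preserving, $f\bigl(\bigsqcup_i g_i\,x\bigr)=\bigsqcup_i f(g_i\,x)$ pointwise in $x$. The $\mathcal{T}_\sqcap(L)$ case is then free from the identification $\mathcal{T}_\sqcap(L)=\mathcal{T}_\sqcup(L^\mathit{op})$ already noted in the text. No step presents a genuine obstacle; the only place one must be attentive is keeping pointwise sups in $L^L$ and sups in $L$ notationally straight, and consistently exploiting opposition so that the $L^\mathit{op}$ clauses need no separate calculation.
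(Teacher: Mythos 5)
Your proposal is correct and is exactly the routine pointwise verification the paper leaves implicit: it states the proposition without proof, having already noted that $L^L$ is a complete lattice under pointwise sups and that $\circ$ preserves sups (and infs) in its first argument unconditionally. Your handling of the second-argument conditions for clauses (2) and (3) and your use of $\mathcal{T}_\sqcap(L)=\mathcal{T}_\sqcup(L^\mathit{op})$ to dispose of the dual cases match the paper's intended reading precisely.
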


Transformers for while-loops are obtained by connecting quantales with
Kleene algebras.  This requires fixpoints of
$\varphi_{\alpha\gamma}= \gamma\sqcup \alpha\cdot (-)$ and
$\varphi_\alpha=1\sqcup \alpha\cdot (-)$ as well as the Kleene star
$\alpha^\ast=\bigsqcup_{i\in \mathbb{N}}\alpha^i$.  A \emph{left
  Kleene algebra} is a dioid in which $\varphi$ has a least fixpoint
that satisfies
$\mathit{lfp}\, \varphi_{\alpha\gamma} = \mathit{lfp}\, \varphi_\alpha
\cdot \gamma$.  Hence $\varphi_\alpha$ satisfies the left unfold and
left induction axioms
$1\sqcup \alpha\cdot \varphi_\alpha\le \varphi_\alpha$ and
$\gamma\sqcup \alpha\cdot \beta\le \beta\rightarrow
\varphi_\alpha\cdot \gamma\le \beta$.  By opposition, a \emph{right
  Kleene algebra} is a dioid in which the least fixpoint of a dual
function $1\sqcup (-)\cdot \alpha$ satisfies the right unfold and
right induction axioms.

\begin{proposition}\label{P:quantale-ka}~
\begin{enumerate}
\item   Every near-quantale is a right Kleene algebra with
  $\mathit{lfp}\, \varphi_\alpha = \alpha^\ast$. 
\item Every prequantale is also a left Kleene algebra. 
\item Every quantale is a Kleene algebra with
  $\mathit{lfp}\, \varphi_\alpha = \alpha^\ast$.
\end{enumerate}
\end{proposition}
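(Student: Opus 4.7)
The plan is to realise $\alpha^\ast$ as the least fixpoint of the appropriate endofunction on the underlying complete lattice and derive the Kleene axioms from Knaster--Tarski together with fixpoint-fusion identities.

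For clause (1), the near-quantale axiom makes $\psi_\alpha(x) = 1 \sqcup x \cdot \alpha$ sup-continuous, hence monotone, so $\mathit{lfp}\,\psi_\alpha$ exists by Knaster--Tarski and coincides with $\bigsqcup_{i \geq 0} \alpha^i$ via Kleene iteration; I take this as $\alpha^\ast$. Right unfold is then just the fixpoint equation. For right induction $\gamma \sqcup \beta\alpha \leq \beta \rightarrow \gamma\alpha^\ast \leq \beta$, I would argue that $\gamma\alpha^\ast$ is the least fixpoint of the perturbed map $\chi_\gamma(x) = \gamma \sqcup x\alpha$: fixpointness follows from $\chi_\gamma(\gamma\alpha^\ast) = \gamma + \gamma(\alpha^\ast\alpha) = \gamma(1 + \alpha^\ast\alpha) = \gamma\alpha^\ast$ by associativity and distributivity, and leastness together with Knaster--Tarski applied to a pre-fixpoint $\beta$ of $\chi_\gamma$ then yields $\gamma\alpha^\ast \leq \beta$.

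Clause (2): a prequantale additionally makes composition order-preserving in the second argument, so the dual map $\varphi_\alpha(x) = 1 \sqcup \alpha x$ is monotone and $\mathit{lfp}\,\varphi_\alpha$ exists. The defining left-Kleene-algebra identity $\mathit{lfp}\,\varphi_{\alpha\gamma} = \mathit{lfp}\,\varphi_\alpha \cdot \gamma$ follows by fixpoint fusion with $h(x) = x \cdot \gamma$, which is sup-preserving in any near-quantale, after verifying the commuting square $h \circ \varphi_\alpha = \varphi_{\alpha\gamma} \circ h$ from one-sided distributivity. Clause (3): in a quantale composition preserves sups in both arguments, so both $\psi_\alpha$ and $\varphi_\alpha$ are sup-continuous, their least fixpoints exist and coincide with $\bigsqcup_i \alpha^i$, and all four Kleene-algebra axioms (both unfolds and both inductions) drop out by applying (1) and (2) on both sides.

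I expect the main obstacle to be the leastness step in (1): the obvious fusion map $\gamma \cdot (-)$ is only sup-preserving in a genuine quantale, so direct fixpoint fusion is not available in a bare near-quantale. Instead one would match the Kleene approximants $\chi_\gamma^n(\bot) = \bigsqcup_{i < n}\gamma\alpha^i$ with $\gamma$ times the partial sums $\bigsqcup_{i < n}\alpha^i$, using associativity and the semiring's distributivity over binary joins, and close the limit step via sup-continuity of $\chi_\gamma$. The remaining clauses are either the order-theoretic dual of (1) or a straightforward combination with it.
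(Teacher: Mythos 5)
Your handling of clauses (2) and (3) is essentially the paper's argument: for (2), Knaster--Tarski gives existence of $\mathit{lfp}\,\varphi_{\alpha\gamma}$ once order preservation in the second argument makes $\varphi_{\alpha\gamma}$ monotone, and fusion along the sup-preserving map $(-)\cdot\gamma$ with the commuting square $(1\sqcup\alpha x)\cdot\gamma=\gamma\sqcup\alpha(x\gamma)$ yields the defining identity; for (3), sup-continuity of both iteration functions plus Kleene's fixpoint theorem does the rest. The computation of $\mathit{lfp}\,\psi_\alpha=\bigsqcup_i\alpha^i$ and the right unfold law in (1) are also fine, since the approximants only ever multiply by $\alpha$ on the right.

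Clause (1) contains a genuine gap, however, and it sits exactly where you suspected. Every route you propose to right induction requires composition to distribute over joins in its \emph{second} argument: the fixpoint verification uses $\gamma\sqcup\gamma(\alpha^\ast\alpha)=\gamma(1\sqcup\alpha^\ast\alpha)$; the approximant-matching repair invokes ``the semiring's distributivity over binary joins'' to rewrite $\bigsqcup_{i<n}\gamma\alpha^i$ as $\gamma\cdot\bigsqcup_{i<n}\alpha^i$; and the limit step needs $\gamma\cdot(-)$ to preserve the directed sup $\bigsqcup_i\alpha^i$. None of these is an axiom of a near-quantale: beyond associativity and completeness it only postulates $(\bigsqcup X)\cdot y=\bigsqcup\{x\cdot y\mid x\in X\}$, so it is not a dioid, and even the binary law $\gamma\cdot(x\sqcup y)=\gamma x\sqcup\gamma y$ fails in the motivating model $\mathcal{T}(L)$ of arbitrary lattice endofunctions, where $(\gamma\circ(a\sqcup b))\,p=\gamma(a\,p\sqcup b\,p)$ need not equal $\gamma(a\,p)\sqcup\gamma(b\,p)$ for non-sup-preserving $\gamma$. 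What first-argument sup-preservation does deliver is everything on the other side of the comparison: $\psi_{\alpha\gamma}=\gamma\sqcup(-)\cdot\alpha$ is sup-continuous, its Kleene approximants are $\bigsqcup_{i<n}\gamma\alpha^i$, and $\mathit{lfp}\,\psi_{\alpha\gamma}=\bigsqcup_i\gamma\alpha^i\le\beta$ for every prefixpoint $\beta$. The single missing link is $\gamma\cdot\bigsqcup_i\alpha^i\le\bigsqcup_i\gamma\alpha^i$, which is precisely an instance of the second-argument sup-preservation that separates quantales (and, in weakened form, prequantales) from near-quantales. Your proposed workaround therefore assumes the very law whose absence you correctly identified as the obstacle; as written, the right-induction step of clause (1) does not go through.
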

The proofs of (1) and (3) use sup-preservation and Kleene's fixpoint
theorem. That of (2) uses the Knaster-Tarski fixpoint theorem to show
that $\varphi_{\alpha\gamma}$ has a least fixpoint, and fixpoint
fusion~\cite{MeijerFP91} to derive
$\mathit{lfp}\, \varphi_{\alpha\gamma} = \mathit{lfp}\, \varphi_\alpha
\cdot \gamma$,
which yields the left Kleene algebra axioms. In prequantales,
$\mathit{lfp}\, \varphi_\alpha \cdot \gamma\le \alpha^\ast \cdot
\gamma$;
equality generally requires sup-preservation in the first argument of
composition.

The fixpoint and iteration laws on functions spaces, which follow from
Proposition~\ref{P:quantale-ka} and~\ref{P:trafo-quantale}, still need
to be translated into laws for transformers operating on the
underlying lattice. This is achieved again by fixpoint
fusion~\cite{BackW98}.  In $\mathcal{T}_\le(L)$,
\begin{equation*}
  \mathit{lfp}\, (\lambda g.\ \id_L\sqcup f \circ g)\, x =
  \mathit{lfp}\, (\lambda y.\ x
\sqcup f\, y),
\end{equation*}
and $\mathit{lfp}$ preserves isotonicity.  In $\mathcal{T}_\sqcup(L)$,
moreover,
\begin{equation*}
f\, x \le x\rightarrow f^\ast \, x \le x,
\end{equation*}
$\id_L\sqcup f\circ f^\ast = f^\ast=f^\ast \circ f \sqcup \id_L$ and
$(-)^\ast$ preserves sups.  All results dualise to inf-preserving
transformers.

Relative to $\MKA$, backward diamonds correspond to sup-preserving
forward transformers and forward boxes to inf-preserving backward
transformers in the opposite quantale, where the lattice has been
dualised and the order of composition been swapped. An analogous
correspondence holds for forward diamonds and backward boxes. Sup- and
inf-preserving transformers over complete lattices are less
general than $\MKA$ in that preservation of arbitrary sups or infs is
required, whereas that of $\MKA$ is restricted to finite sups and
infs. Isotone transformers, however, are more general, as not even
finite sups or infs need to be preserved, and finite sup- or
inf-preservation implies order preservation.

We are mainly using the $\wlp$ operator for verification condition
generation and hence briefly outline $\wlp$s for conditionals and
loops in this setting. We assume that the underlying lattice $L$ is a
complete boolean algebra, that is, a complete lattice as well as a
complemented distributive lattice. We can then lift elements of $L$ to
$\wlp$s as $|p]q= p \to q$ and define, in
$\mathcal{T}_\le(L^\mathit{op})$,
\begin{align*}
  \mathbf{if}\, p\, \mathbf{then}\, f\, \mathbf{else}\, g = |p]\circ f
  \sqcap |\bar p]\circ g\qquad \text{ and } \qquad
\mathbf{while}\, p\, \mathbf{do}\, f =\mathit{lfp}\, \varphi_{|p]f}
                                       \circ |\bar p].
\end{align*}
In $\mathcal{T}_\sqcap(L)$, we even obtain
\begin{equation*}
  \mathbf{while}\, p\, \mathbf{do}\, f = (|p]\circ f)^\ast \circ |\bar p].
\end{equation*}
These equations allow generating verification conditions as with
(\ref{eq:wlp-cond}) and (\ref{eq:wlp-loop}) from
Section~\ref{sec:mka-pt}. Overall, our Isabelle components for
lattice-based predicate transformers in the Archive of Formal
Proofs~\cite{afp:transem} contain essentially the same equations and
rules for verification condition generation as those for $\MKA$.

We have so far restricted the approach to endofunctions on a complete
lattice to relate it to $\MKA$. Yet it generalises to functions in
$L_2^{L_1}$ and hence to categories~\cite{BackW98}. The corresponding
poly-typed generalisations of quantales are known as
\emph{quantaloids}~\cite{Rosenthal96}. In particular, composition is
then a partial operation.

%%%%%%%%%%%%%%%%%%%%%%%%%%%%%%%%%%%%

\section{Predicate Transformers from the Powerset Monad}\label{sec:pt-monad}

A second, more coalgebraic approach to predicate transformers starts
from monads~\cite{Maclane71}.  In addition, it details the relational
and state transformer semantics of $\MKA$ in a more modern algebraic
approach. We need to assume basic knowledge of categories and
monads. Once again, readers unfamiliar with these concepts can freely
skip it.

Recall that $(\Pow,\eta_X,\mu_X)$, for $\Pow:\Set\to \Set$,
$\eta_X:X\to \Pow\, X$ defined by $\eta_X= \{-\}$ and
$\mu_X:\Pow^2\, X\to \Pow\, X$ defined by $\mu_X = \bigcup$ is the
monad of the powerset functor in the category $\Set$ of sets and
functions. The morphisms $\eta$ and $\mu$ are natural transformations.
They satisfy, for every $f:X\to Y$,
\begin{equation*}
\eta_Y\circ \id\, f = \Pow\, f \circ \eta_X\qquad\text{ and }\qquad
\mu_Y \circ \Pow^2\, f = \Pow\, f \circ \mu_X.
\end{equation*}

From the monadic point of view, state transformers $X\to \Pow\, Y$ are
arrows $X\to Y$ in the Kleisli category $\Set_\Pow$ of $\Pow$ over
$\Set$. They are composed by (forward) Kleisli composition
$f\circ_K g = \mu\circ \Pow\, g\circ f$ as explained before
Proposition~\ref{P:kleisli-ka} in Section~\ref{sec:KA}.  The category
$\Set_\Pow$ is known to be isomorphic to $\Rel$, the category of sets
and binary relations.

The isomorphism between state and forward predicate transformers is
based on the contravariant functor
$(-)^\dagger:\Set_\Pow(X,\Pow\, Y)\to \Set_\Pow(\Pow\, X,\Pow\,
Y)$---the Kleisli extension. Its definition
$f^\dagger = \mu \circ \Pow\, f$ implies that
$(-)^\dagger =\langle - |$ on morphisms, which is the strongest
postcondition operator.

The structure of state  spaces---boolean algebras for $\MKA$, complete
lattices  in  Back and  von  Wright's  approach---is captured  by  the
Eilenberg-Moore algebras of the powerset  monad. It is well known that
$(-)^\dagger$ embeds $\Set_\Pow$ into  their category. Its objects are
complete (sup-semi)lattices;  its morphisms  sup-preserving functions,
hence transformers. More precisely, $(-)^\dagger$ embeds into powerset
algebras, complete atomic  boolean algebras that are  the free objects
in this category.

The isomorphism
$\Set_\Pow(X, \Pow\, Y)\cong
\Set^\sqcup(\Pow\, X,\Pow\, Y)$
between state transformers and sup-preserving predicate transformers
then arises as follows. The embedding $\langle -|$ has an injective
inverse $\langle-|^{-1}$ on the subcategory of sup-preserving
transformers. It is defined by $\langle -|^{-1} = (-)\circ \eta$,
which can be spelled out as
$\langle \varphi|^{-1}\, x = \{y\mid y \in \varphi\, \{x\}\}$. The
isomorphism preserves the quantaloid structures of state and predicate
transformers that is, compositions (contravariantly), units and sups,
hence least elements, but not necessarily infs and greatest
elements. These results extend to 
$\Set^\sqcup(\Pow\, X,\Pow\, Y)
\cong\Rel(X, Y)$
via $\Set_\Pow\cong \Rel$. In addition, predicate
transformers $\langle f|: \Pow\, X\to \Pow\, Y$ preserve of course
sups in powerset lattices, hence least elements, but not necessarily
infs and greatest elements.

Forward boxes or $\wlp$s emerge from state transformers via a
functor
$|-]:\Set_\Pow(X,\Pow\, Y)\to
\Set(\Pow\, Y,\Pow\, X)$,
embedding Kleisli arrows into the opposite of the category of
Eilenberg-Moore algebras formed by complete (inf-semi)lattices and
inf-preserving functions. It is defined on morphisms as
$|-]=\partial_F\circ \langle- |\circ (-)^{\mathit{op}}$, where
$\partial_F\, f = \partial\circ f \circ \partial$ and $\partial$
dualises the lattice. Unfolding definitions, once again
$|f]\, P=\{x\mid f\, x\subseteq P\}$.

Furthermore, its inverse $|-]^{-1}$ on the subcategory of
inf-preserving transformers is
$|\varphi]^{-1}\, x = \bigcap\{P\mid x \in \varphi\, P\}$. The duality
$\Set_\Pow(X,\Pow\, Y)\cong \Set^\sqcap(\Pow\, Y,\Pow\, X)$
reverses Kleisli arrows and preserves the quantaloid structures up-to
lattice duality, mapping sups to infs and vice versa.  It extends to
relations as before.  In addition, predicate transformers $|f]$
preserve of course infs of powerset lattices, hence greatest elements,
but not necessarily sups and least elements.

The remaining transformers $|-\rangle$ and $[-|$ and their inverses
arise from $\langle -|$ and $|-]$ by opposition:
$|-\rangle= \langle -| \circ (-)^\mathit{op}$,
$|-\rangle^{-1} = (-)^\mathit{op}\circ \langle -|^{-1}$,
$[-|= |-] \circ (-)^\mathit{op}$ and
$[-|^{-1} = (-)^\mathit{op}\circ |-]^{-1}$. Taken together, the four
modal operators satisfy the laws of the $\MKA$ modalities outlined in
Section~\ref{sec:mka-pt} and those of the abstract sup/inf-preserving
transformers discussed in Section~\ref{sec:pt-backvwright}. They give
in fact semantics to the algebraic developments, when restricted to
mono-types, and once again yield the same rules for verification
condition in the state transformer and the relational semantics,
albeit in a more general categorical setting.

The categorical approach to predicate transformers outlined is not
new, apart perhaps from the emphasis on quantales and quantaloids. The
emphasis on monads is due at least to Manes~\cite{Manes92}. More
recently, Jacob's work on state-and-effect triangles~\cite{Jacobs17}
has explored similar connections and their generalisation beyond
sequential programs. A formalisation with Isabelle, which is further
discussed in Section~\ref{sec:isa-pt}, is a contribution of this
article.

%%%%%%%%%%%%%%%%%%%%%%%%%%%%%%%%%%%%

\section{Assignments}\label{sec:discrete-store}

Two important ingredients for concrete program semantics and
verification condition generation are still missing: a mathematical
model of the program store and program assignments, and rules for
calculating $\wlp$s for these basic commands. To prepare for hybrid
programs (see Section~\ref{sec:hybrid-store} for a syntax) we model
stores and assignments as discrete dynamical systems over state
spaces.

Formally, a \emph{dynamical system}~\cite{Arnold,Teschl12} is an
\emph{action} of a monoid $(M,\star,u)$ on a set or state space $S$,
that is, a monoid morphism $\flow :M\to S\to S$ into the
\emph{transformation monoid} $(S^S,\circ, \id_S)$ on $S^S$. Thus, by definition,
\begin{equation*}
\flow\, (m\star n) = (\flow\, m) \circ (\flow\, n)\qquad\text{ and }\qquad
\flow\, u = \id_S.
\end{equation*} The first action axiom captures the inherent
determinism of dynamical systems. Conversely, each transformation
monoid $(S^S,\circ,\id_S)$ determines a monoid action in which the
action $\flow:S^S\to S\to S$ is function application.

States of simple while programs can be modelled simply as maps
$s:V\to E$ from program variables in $V$ to values in $E$. State
spaces for such discrete dynamical systems are function spaces
$S=E^V$.

An update function $f_a:V\to (S \to E) \to S\to S$ for assignment
commands can be defined as
\begin{equation*}
f_a\, v\, e\, s = s[v\mapsto e\, s],
\end{equation*}
where $f[a\mapsto b]$ updates $f:A\to B$ by associating $a\in A$ with
$b$ and every $y\neq a$ with $f\, y$.  The ``expression'' ${e:S\to E}$
is evaluated in state $s$ to $e\, s$. The maps $f_a\, v\, e$ generate
a transformation monoid, hence a monoid action $S^S\to S \to S$ on
$S^S$. They also connect the concrete program store semantics with the
$\wlp$ semantics used for verification condition generation.

We lift $f_a\, v\, e:S\to S$ to a state transformer
$v:=_\mathcal{F} e:S \to \Pow\, S$ as
\begin{equation*}
  (v:=_\mathcal{F} e) = \eta_S \circ (f_a\, v\, e) = \lambda s.\ \{ 
  f_a\, v\, e\, s\},
\end{equation*}
thus creating a semantic illusion for syntactic assignment commands in
the $\MKA$ $\sta\, S$.  For $\rel\, S$, the isomorphism between
$\Set_\Pow$ and $\Rel$ yields
\begin{equation*}
  (v :=_\mathcal{R} e) = \mathcal{R}\, (v:=_\mathcal{F} e),
\end{equation*}
hence
$(v :=_\mathcal{R} e) = \{(s,f_a\, v\, e\, s)\mid s\in
E^V\}=\{(s,s[v\mapsto e\, s])\mid s\in E^V\}$.
Alternatively, we could have defined the state transformer semantics
from the relational one via
$(v :=_\mathcal{F} e) = \mathcal{F}\, (v:=_\mathcal{R} e)$.

The $\wlp$s for assignment commands in $\rel\, S$ and $\sta\, S$ are
of course the same. Hence we drop the indices $\mathcal{F}$ and
$\mathcal{R}$ and write
\begin{equation}
|v:=  e] Q = \lambda s.\ Q\,
(s[v\mapsto e\, s]) = \lambda s.\ Q\,  (f_a\, v\, e\,
s).\label{eq:wlp-asgn}\tag{wlp-asgn}
\end{equation}

Adding the $\wlp$ law for assignments in either semantics to the
algebraic ones for the program structure suffices to generate
data-level verification conditions for while programs. 

The approach outlined so far is suited for building verification
components via shallow embeddings with proof assistants such as
Isabelle. The predicate transformer algebras of the previous sections,
as shown in the first row of Figure~\ref{fig:framework}, can all be
instantiated to intermediate state transformer and relational
semantics, as shown in Proposition~\ref{P:rel-mka} and
\ref{P:kleisli-mka} for $\MKA$. These form the second row in
Figure~\ref{fig:framework}. Each of these can be instantiated further
to concrete semantics with predicate transformers for assignments, as
described in this section.

In Isabelle, these instantiations are enabled by type polymorphism. If
modal Kleene algebras have type ${\isacharprime}a$, then the
intermediate semantics have the type of relations or state
transformers over ${\isacharprime}a$, and Proposition~\ref{P:rel-mka}
and \ref{P:kleisli-mka} can be formalised, so that all facts known for
$\MKA$ are available in the intermediate semantics.  The concrete
semantics then require another simple instantiation of the types of
relations or state transformers to those of program stores.  All facts
known for $\MKA$ and the two intermediate semantics are then available
in the concrete predicate transformer semantics for while programs. A
particularity of the semantic approach and the shallow embedding is
that assignment semantics are based on function updates instead of
substitutions---see the rule (\ref{eq:wlp-asgn})---so that an explicit
substitution calculus like that of $\dL$ is not needed. We can simply
rely on that of Isabelle/HOL.

The use of algebra and the modularity of the shallow semantics
simplify the construction of program verification
components~\cite{afp:vericomp} considerably.  The overall approach
discussed has been developed initially for Hoare logics
in~\cite{ArmstrongGS16}. It has been extended to predicate transformer
semantics based on $\MKA$ in~\cite{GomesS16}.

%%%%%%%%%%%%%%%%%%%%%%%%%%%%%%%%%%%%

\section{Ordinary Differential Equations}\label{sec:ODE}

Before developing relational and state transformer models for the
basic evolution commands of hybrid programs in the next section, we
briefly review some basic facts about continuous dynamical systems and
ordinary differential equations.

Continuous dynamical systems $\flow:T\to S\to S$ are \emph{flows},
which often represent solutions to systems of ordinary differential
equations (ODEs)~\cite{Arnold,Hirsch09,Teschl12}. They are called
continuous because $T$, which models time, is assumed to form a
non-discrete submonoid of $(\reals,+,0)$, and the state space or phase
space $S$ is usually a manifold with topological structure.  By
definition, flows are monoid actions. Hence $\flow$ satisfies, for all
$t_1,t_2\in T$,
\begin{equation*}
\flow\, (t_1 + t_2) = \flow\,
t_1 \circ \flow\, t_2\qquad \text{ and } \qquad \flow\, 0 = \id.
\end{equation*}
We always assume that $T$ is an open interval in $\reals$ and $S$ an
open subset of $\reals^n$.  Beyond that, one usually assumes that
actions are compatible with the structure on $S$. As $S$ is a
manifold, we assume that flows are continuously differentiable.

The \emph{trajectory} of $\flow$ through state $s\in S$ is the
function $\flow_s:T \to S$ defined by
$\flow_s = \lambda t.\ \flow\, t\, s$, that is,
$\flow_s\, t=\flow\, t\, s$. It describes the system's evolution in
time passing through state $s$.

The \emph{orbit} of $s$ is the set of all states on the trajectory
passing through $s$, but not necessarily starting in this state. We
model it as the function $\orbit:S\to \Pow\, S$ defined by
\begin{equation*}
\orbit\, s= \Pow\, \flow_s\, T,
\end{equation*}
the canonical map sending each $s\in S$ to its equivalence class
$\gamma^\flow\, s$.  Orbit functions are state transformers, as their
type indicates. They form our basic semantics for evolution commands
and hybrid programs.

Flows arise from ODEs as follows.  In a system of ODEs
\begin{equation*}
x_i'\ t  = f_i\ (t, (x_1\, t),\dots,(x_n\, t)),\qquad (1\le i\le n),
% x_2'\ t & = f_2\ (t, (x_1\ t),\dots,(x_n\ t)),\\ 
% 		 &\, \ \vdots\\ 
% x_n'\ t & = f_n\ (t, (x_1\ t),\dots,(x_n\ t)),
\end{equation*}
each $f_i$ is a continuous real-valued function and
$t\in T\subseteq\reals$. Any such system can be made
time-independent---or \emph{autonomous}---by adding the equation
$x_0'\, t=1$. We henceforth restrict our attention to autonomous
systems and write
\begin{equation*}
X'\, t=
\begin{pmatrix}x_1'\, t\\ x_2'\, t\\ \vdots\\ x_n'\, t \end{pmatrix}=
\begin{pmatrix}f_1\, (x_1\, t) \dots (x_n\, t)\\ 
f_2\, (x_1\, t)\dots (x_n\, t)\\
\vdots\\ f_n\, (x_1\ t)\dots (x_n\, t)\end{pmatrix}= 
f\, (X\,t).
\end{equation*}
The continuous function $f:S\to S$ on $S\subseteq \reals^n$ is a
\emph{vector field}.  It assigns a vector to each point in $S$.

An autonomous system of ODEs is thus simply a
vector field $f$, and a \emph{solution} a continuously differentiable
function $X:T\to S$ that satisfies $X'\, t=f\, (X\,t)$ for all
$t\in T$, or more briefly $X'=f\circ X$.

An \emph{initial value problem} (IVP) is a pair $(f,s)$ of a vector
field $f$ and an initial value
$(0,s)\in T\times S$~\cite{Hirsch09,Teschl12}, where $t_0=0$ and $s$
represent the initial time and initial state of the system.  A
solution to the IVP $(f,s)$ satisfies
\begin{equation*}
X'=f\circ X\qquad \text{ and }\qquad X\, 0 = s.
\end{equation*} 

If solutions $X$ to an IVP $(f,s)$ are unique and $T=\reals$, then it
is easy to show that $X = \flow^f_s$ is the trajectory of the flow
$\flow^f$ through $s$.

Geometrically, $\flow_s^f$ is the unique curve in $S$
that is parametrised by $t$, passes through $s$ and is tangential to
$f$ at any point. As trajectories arise from integrating both sides of
$(\flow_s^f)'=f\circ\flow_s^f$, they are also called \emph{integral
  curves}. We henceforth write $\flow_s$ when the dependency on $f$
is clear.

The following example provides some physical intuition for readers
unfamiliar with these concepts. 

\begin{example}[Particles in fluid]\label{ex:fluid}
We use the autonomous system of ODEs
\begin{equation*}
x'\, t = v,\qquad
y'\, t = 0,\qquad
z'\, t = - \sin\,  (x\, t),
\end{equation*} 
where $v\in \reals\setminus\{0\}$ is a constant, as a simple model for the
movement of particles in a three-dimensional fluid. Its vector field
$f:\reals^3\to\reals^3$,
\begin{equation*}
f\begin{pmatrix}x\\ y\\ z\end{pmatrix} = 
\begin{pmatrix}v\\ 0\\ -\sin x\end{pmatrix},
\end{equation*} 
associates a velocity vector with each point of $S=\reals^3$
(vectors in Figure \ref{fig:velvecs}).

For each point $s=(s_1,s_2,s_3)^T$, the solutions
$\flow_s:\reals\to \reals^3$ of the IVP $(f,s)$ are uniquely
defined. They are the trajectories of particles through time passing
through state $s$ (dot and line in Figure \ref{fig:velvecs}),
given by
\begin{equation*}
\flow_s\, t=\begin{pmatrix}s_1\\ s_2\\ s_3\end{pmatrix}+\begin{pmatrix}vt\\ 0\\ \frac{\cos\,  (s_1+vt)}{v}-\frac{\cos
  s_1}{v}\end{pmatrix},
\end{equation*}
where we use juxtaposition without spaces as multiplication of real numbers.

Checking that they are indeed solutions to the IVP requires simple calculations:
\begin{align*}
\flow_s'\, t &=
\begin{pmatrix}v\\ 0\\ -\sin\, (s_1+vt)\end{pmatrix}=
f\, \begin{pmatrix}s_1 + vt\\ s_2\\ s_3+\frac{\cos\, (s_1+vt)}{v}-\frac{\cos s_1}{v}\end{pmatrix}
= f\, (\flow_s\, t),\\
  \flow_s\, 0 &= 
\begin{pmatrix}s_1\\ s_2\\ s_3\end{pmatrix}+\begin{pmatrix}v0\\ 0\\ \frac{\cos\, (s_1 + v0)}{v}-\frac{\cos s_1}{v}\end{pmatrix}
= 
\begin{pmatrix}s_1\\ s_2\\ s_3\end{pmatrix} = s.
\end{align*}
Checking that $\flow:\reals\to\reals^3\to\reals^3$,
$\flow\, t\, s =\flow_s\, t$, is a flow is calculational, too:
\begin{align*}
\flow\, t_1 (\flow\, t_2\, s)
&=\begin{pmatrix}s_1+vt_2\\ s_2\\ s_3+\frac{\cos\,  (s_1+vt_2)}{v}-\frac{\cos
  s_1}{v}\end{pmatrix}
+
\begin{pmatrix}
  vt_1\\
0\\
\frac{\cos\,  (s_1+vt_2+vt_1)}{v}-\frac{\cos\, (s_1+vt_2)}{v}
\end{pmatrix}\\
&=\begin{pmatrix}
s_1\\ 
s_2\\ 
s_3
\end{pmatrix}
+
\begin{pmatrix}
v(t_1+t_2)\\ 
0\\ 
\frac{\cos\, (s_1 + v(t_1 + t_2))}{v}-\frac{\cos\, s_1}{v}
\end{pmatrix}\\
&= \flow\, (t_1 + t_2)\, s.
\end{align*}
The condition $\flow\, 0\, s = s$ has already been checked. \qed
\end{example}

\begin{figure}
\begin{center}
\includegraphics[scale=0.5, trim=0 20 0 30, clip]{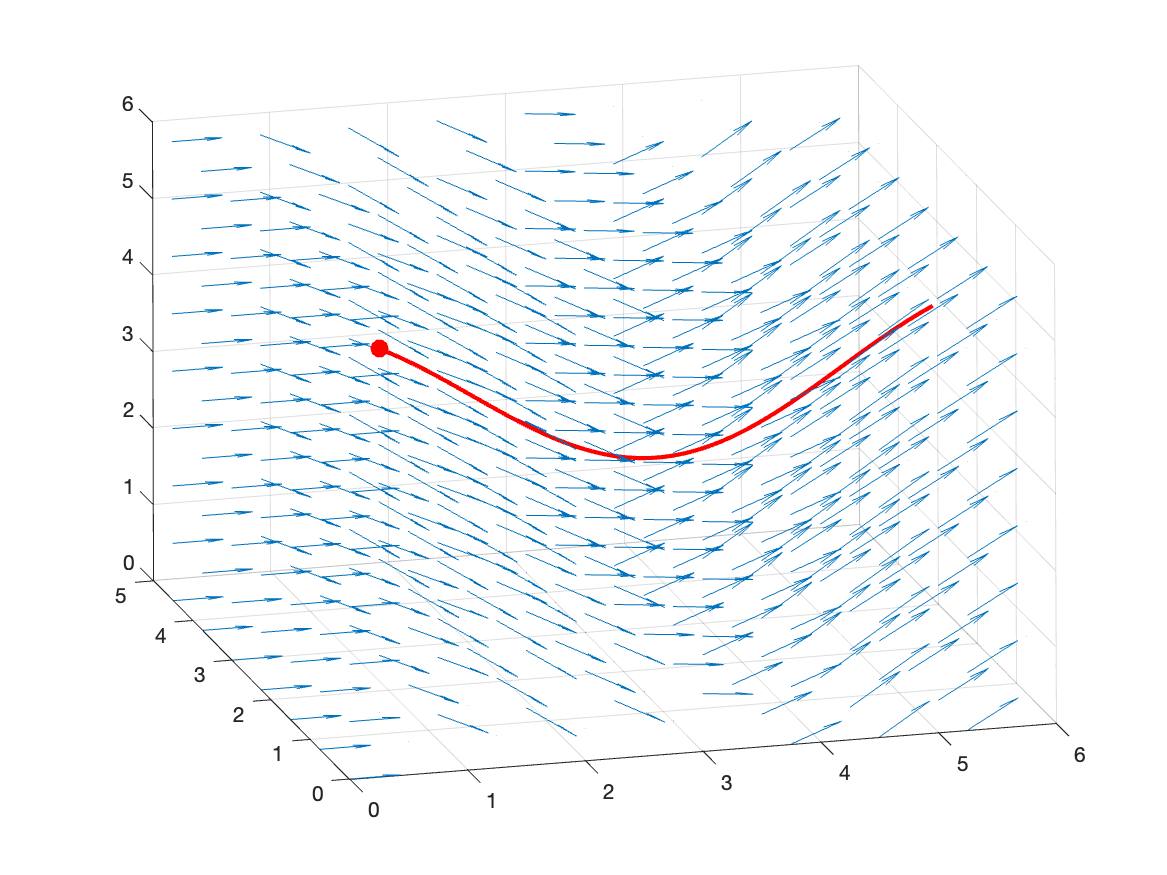} % left bottom right top
\caption{Vector field and trajectory for a particle in a fluid
(Example~8.1)}
\label{fig:velvecs} 
\end{center}
\end{figure}

It is well known that not all IVPs admit flows: not all of them have
unique solutions, and in many situations, flows exist locally on a
subset of $\reals$ that does not form a submonoid. Peano's theorem
guarantees the local existence of solutions for systems of ODEs whose
associated vector field is continuous. Conditions for local existence
and uniqueness are provided by the Picard-Lindel\"of
theorem~\cite{Hirsch09,Teschl12}, which we briefly discuss, as we use
it for our first workflow.

By the fundamental theorem of calculus, any solution to an IVP must
satisfy
\begin{equation*}
X\, t - X\, 0 = \int_0^t f\, (X\, \tau)\d\tau.
\end{equation*}
It can be shown that this equation holds if, for $X\, 0= s$, the
function
\begin{equation*}
h\, x\, t = s + \int_0^t f\, (x\, \tau)\d\tau
\end{equation*}
has a fixpoint. This, in turn, is the case if the limit $X$ of the sequence
$(h^n)_{n\in\mathbb{N}}$, defined by $h^0\, x\, t= s$ and
$h^{n+1}= h\circ h^n$, exists. Indeed, with this assumption,
\begin{equation*}
X\, t = \lim_{n\to\infty} \left(s + \int_0^t f\, (h^{n-1}\, \tau)d\tau \right) = s + \int_0^t f\, (X\, \tau)\d\tau,
\end{equation*}
using continuity of addition, integration and
$f$ in the second step.  Finally, existence of the limit of $(h^n)_{n\in\mathbb{N}}$ is
guaranteed by constraining the domain of the $h^n$, and by Banach's
fixpoint theorem there must be a \emph{Lipschitz constant}
$\lipschitz\geq 0$ such that
\begin{equation*}
  \lVert f\, s_1 - f\, s_2 \rVert \le \lipschitz \lVert s_1- s_2\rVert,
\end{equation*}
for any $s_1,s_2\in S$, where $\lVert-\rVert$ is the Euclidean norm on
$\reals^n$.  Vector fields satisfying this condition are called
\emph{Lipschitz continuous}.

\begin{theorem}[Picard-Lindel\"of]\label{P:picard-lindeloef}
  Let $S\subseteq\reals^n$ be an open set and $f:S\to S$ a Lipschitz
  continuous vector field. The IVP $(f,s)$ then has a unique solution
  $X:T_s\to S$ on some open interval $T_s\subseteq\reals$.
\end{theorem}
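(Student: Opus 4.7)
The plan is to follow exactly the Picard iteration sketched in the paragraphs preceding the theorem, turning the informal outline into an application of Banach's fixpoint theorem on a carefully chosen function space. First I would use openness of $S$ to pick $r>0$ with $\overline{B_r(s)}\subseteq S$, and since $\overline{B_r(s)}$ is compact and $f$ is continuous (being Lipschitz), the quantity $M=\sup\{\lVert f\, u\rVert\mid u\in \overline{B_r(s)}\}$ is finite. Then I would fix $a>0$ small enough that both $aM\le r$ and $a\lipschitz<1$, and set $T_s=(-a,a)$.

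Next I would take as ambient space the set $\mathcal{C}$ of continuous maps $x:\overline{T_s}\to \overline{B_r(s)}$ with $x\, 0=s$, equipped with the sup norm. As a closed subset of the Banach space of continuous $\reals^n$-valued functions on $\overline{T_s}$, it is a complete metric space. On $\mathcal{C}$ I would define the Picard operator $h$ of the excerpt by $h\, x\, t=s+\int_0^t f\, (x\, \tau)\, d\tau$ and check the two properties needed for Banach's theorem: (i) $h$ maps $\mathcal{C}$ into itself, because $\lVert h\, x\, t-s\rVert\le \lvert t\rvert M\le aM\le r$ and $h\, x$ is continuous with $h\, x\, 0=s$; (ii) $h$ is a contraction, since the Lipschitz estimate gives $\lVert h\, x\, t-h\, y\, t\rVert\le \lvert t\rvert \lipschitz\,\lVert x-y\rVert_\infty\le a\lipschitz\,\lVert x-y\rVert_\infty$ with $a\lipschitz<1$.

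Banach's fixpoint theorem then produces a unique $X\in\mathcal{C}$ with $h\, X=X$. Differentiating under the integral (which is legitimate because $\tau\mapsto f\, (X\, \tau)$ is continuous) yields $X'\, t=f\, (X\, t)$ on $T_s$, and the integral form forces $X\, 0=s$; conversely, by the fundamental theorem of calculus any solution of the IVP on $T_s$ lies in $\mathcal{C}$ and is a fixpoint of $h$, so uniqueness of $X$ in $\mathcal{C}$ gives uniqueness of the solution on $T_s$.

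The main obstacle I expect is the bookkeeping around domains rather than anything conceptually deep: one must simultaneously guarantee that the iterates stay inside $\overline{B_r(s)}\subseteq S$ (so that $f\circ x$ even makes sense), that the Picard operator is a strict contraction, and that the resulting interval $T_s$ is open. A minor refinement one could add, to avoid the artificial constraint $a\lipschitz<1$, is to use a Bielecki-type weighted norm $\lVert x\rVert_\lambda=\sup_t e^{-\lambda \lvert t\rvert}\lVert x\, t\rVert$ with $\lambda>\lipschitz$, which makes $h$ a contraction on any bounded interval; this is mainly a matter of taste and leaves the overall structure of the proof unchanged.
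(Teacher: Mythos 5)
Your proposal is the standard Banach-fixpoint proof and it matches the paper exactly as far as the paper goes: the paper does not actually prove Theorem~\ref{P:picard-lindeloef} (it is cited from the classical literature and, on the formal side, imported from Immler and H\"olzl's Isabelle development), but the informal derivation it gives in the preceding paragraphs --- the integral equation, the Picard operator $h$, the iterates $h^n$, the Lipschitz constant, and Banach's fixpoint theorem --- is precisely the argument you have fleshed out. Your execution of that argument is correct.

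One step deserves more care than you give it: the claim that \emph{any} solution of the IVP on $T_s$ lies in $\mathcal{C}$, i.e.\ stays inside $\overline{B_r(s)}$. A solution is only assumed to map into $S$, and the bound $\lVert X\,t-s\rVert\le |t|M$ uses $\lVert f\rVert\le M$, which you only know on $\overline{B_r(s)}$; so the estimate cannot be applied before you know the curve has not escaped the ball. The standard repair is a continuity/bootstrap argument: let $t^*$ be the supremum of times up to which $X$ remains in $\overline{B_r(s)}$; on that interval the estimate gives $\lVert X\,t-s\rVert\le t^*M\le aM\le r$, and choosing $aM<r$ strictly shows $X\,t^*$ is interior to the ball, so $t^*$ could not be maximal unless $t^*=a$. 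Without this (or an equivalent localisation), Banach's theorem only yields uniqueness among curves confined to $\overline{B_r(s)}$, not uniqueness of the solution. This is exactly the ``bookkeeping around domains'' you anticipated, and it is routine, but it is the one point where the written argument as it stands has a hole.
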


The Picard-Lindel\"of theorem makes it possible to patch together
intervals $T_s$ to a set
$U=\bigcup_{s \in S} T_s \times \{s\}\subseteq \reals \times S$, from
which a largest interval of existence $T=\bigcup_{s\in S} T_s$ can be
extracted.  One can then define a \emph{local flow}
$\flow:T\to S\to S$ such that $\flow_s\, t$ is the maximal integral
curve at $s$. The monoid action identities $\flow\, 0 = \mathit{id}$
and $\flow\, (t_1+t_2)\, s = \flow\, t_1 (\flow\, t_2\, s)$ can thus
be shown for all $t_2,t_1+t_2\in T_s$~\cite{Teschl12}, but $U$ need
not be closed under addition. The Picard-Lindel\"of theorem, in the
form presented, thus provides sufficient conditions for the existence
and uniqueness of local flows for autonomous systems of ODEs.  Flows
are global and hence monoid actions if $T$ is equal to $\reals$ or its
non-negative or non-positive subset.

Hybrid systems often deal with dynamical systems where $T=T_s= \reals$
for any $s\in S$ and $S$ is isomorphic to $\reals^n$ for some
$n\in\mathbb{N}$. Our approach supports local flows with
$T\subset \reals$ and $S\subset \reals^n$ as well, and even IVPs with
multiple solutions beyond the realm of Picard-Lindel\"of.

%%%%%%%%%%%%%%%%%%%%%%%%%%%%%%%%%%%% 

\section{Evolution Commands for Lipschitz Continuous Vector Fields}\label{sec:hybrid-store}

Simple hybrid programs of $\dL$~\cite{Platzer10} are defined by the syntax
\begin{equation*}
\mathcal{C}\ ::= \ x:=e \mid x' = f \, \&\, G \mid ?P\mid \mathcal{C};\mathcal{C}\mid \mathcal{C}+\mathcal{C}\mid \mathcal{C}^*,
\end{equation*}
which adds \emph{evolution commands} $x' = f \ \&\ G$ to the program
syntax of dynamic logic.  Intuitively, evolution commands introduce a
vector field $f$ for an autonomous system of ODEs and a \emph{guard}
$G$, which models boundary conditions or similar constraints that
restrict temporal evolutions. Guards are also known as \emph{evolution
  domain restrictions} or \emph{invariants} in the hybrid automata
literature~\cite{DoyenFPP18}, but henceforth we consistently refer to
them as ``guards''.  Nondeterministic choice and finite iteration can
be adapted for modelling conditionals and while loops as with $\MKA$
or predicate transformer semantics.

We are only interested in the semantics of hybrid programs. Relative
to the semantics of standard while programs, it thus remains to define
the $\wlp$s for evolution commands. This requires relational and state
transformer semantics for evolution commands over hybrid program
stores.  In this section we describe our first workflow that certifies
solutions using the Picard-Lindel\"of theorem. We thus assume that
vector fields are Lipschitz continuous, such that the
Picard-Lindel\"of theorem guarantees at least local flows. This is
more general than needed for dynamical systems. A further
generalisation to continuous vector fields is presented in the next
section in preparation for our second, more powerful workflow.

We begin with hybrid program stores for $\dL$~\cite{Platzer12}. These
are maps $s:V\to\reals$ that assign real numbers to program variables
in $V$. Variables may appear both in differential equations and the
discrete control of a hybrid system.  One usually assumes that $|V|=n$
for some $n\in\mathbb{N}$, which makes $\reals^V$ isomorphic to the
vector space $\reals^n$. The results from Section \ref{sec:ODE} then
apply to any state space $S\subseteq\reals^V$.

Next we describe a state transformer semantics and a $\dL$-style
relational semantics of evolution commands with Lipschitz continuous
vector fields.  Intuitively, the semantics of $x' = f \, \&\, G$ in
state $s\in S\subseteq \reals^V$ is the longest segment of the
trajectory $\flow^f_s$ at $s$ along which all points satisfy $G$.

For the remainder of this section, we fix  a Lipschitz 
continuous vector field $f:S\to S$ and a guard $G:S\to\bools$, for 
$S\subseteq \reals^V$. We freely consider $G$, and any other 
function of that type, as a set or a predicate. As explained in 
Section~\ref{sec:ODE}, there is a (local) flow $\flow:T\to S\to S$ 
defined on a maximal interval $T\subseteq \reals$ with $0\in T$. 
Thus, we can pick any interval $U\subseteq T$ with $0 \in U$ to 
compute $wlp$s over subintervals of the interval of existence $T$.
In examples, we typically use the subinterval $[0,t]$, from
the time at which the system dynamics starts to a maximal time $t$ of
interest, or the subinterval $\reals_+$, the set of non-negative real
numbers.

For each $t\in U$, let ${\downarrow} t= \{t'\in U \mid t' \le t\}$. The 
$G$-\emph{guarded orbit} on $U$ at $s\in S$ is then defined as
$\gamma^\flow_{G,U}:S\to \Pow\, S$ by
\begin{equation*}
  \gamma^\flow_{G,U}\, s  = \bigcup\{\Pow\, \flow_s\, {\downarrow}t
    \mid t\in U \land \Pow\, \flow_s\, {\downarrow}t \subseteq G\}.
\end{equation*}
Intuitively, $\gamma^\flow_{G,U}\, s$ is the orbit at $s$ defined
along the longest interval of time in $U$ that satisfies guard
$G$.  This intuition is more apparent in the following lemma.
\begin{lemma}\label{P:g-orbit-props}
Let $s\in S$. Then
\begin{enumerate}
\item $ \gamma^\flow_{G,U}\, s   = \bigcup\{\gamma^{\flow|_{{\downarrow}t}}\, s
    \mid  t\in U\land \gamma^{\flow|_{{\downarrow}t}}\, s\subseteq G\}$, 
% \item $\gamma^\flow_{G,U}\, s = \bigcup_{U'\in \mathit{pid}\, U}\{\Pow\, \flow_s\, U'
%     \mid \forall t\in U'.\ G\, (\flow_s\, t)\}$,
\item $\gamma^\flow_{G,U}\, s= \{\flow_s\, t
    \mid t\in U \land \forall \tau\in {\downarrow}t.\ G\, (\flow_s\,
    \tau)\}$. 
\end{enumerate}
\end{lemma}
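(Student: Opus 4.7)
The plan is to prove both parts essentially by unfolding definitions, with the only genuine work occurring in the left-to-right containment of part~(2).

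For part~(1), I would observe that, by the definition of the orbit function $\gamma^\flow$ in Section~\ref{sec:ODE} applied to the restricted flow $\flow|_{{\downarrow}t}$, we have $\gamma^{\flow|_{{\downarrow}t}}\, s = \Pow\, \flow_s\, {\downarrow}t$ simply because ${\downarrow}t$ is the relevant time set for the restriction. Substituting this identity into the defining equation for $\gamma^\flow_{G,U}\, s$ rewrites the set comprehension term-for-term, so the two sides are literally equal. The only thing to check is that the guard condition $\Pow\, \flow_s\, {\downarrow}t \subseteq G$ agrees with $\gamma^{\flow|_{{\downarrow}t}}\, s \subseteq G$, which is immediate.

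For part~(2), I would start by unfolding $\Pow\, \flow_s\, {\downarrow}t = \{\flow_s\, \tau \mid \tau \in {\downarrow}t\}$ and noting that $\Pow\, \flow_s\, {\downarrow}t \subseteq G$ is just $\forall \tau \in {\downarrow}t.\ G\, (\flow_s\, \tau)$. The definition of $\gamma^\flow_{G,U}\, s$ then becomes $\bigcup\{\{\flow_s\, \tau \mid \tau \in {\downarrow}t\} \mid t \in U \land \forall \tau \in {\downarrow}t.\ G\, (\flow_s\, \tau)\}$, and the goal is to identify this with the indexed family on the right-hand side of~(2). The $\supseteq$ direction is immediate: given a point $\flow_s\, t$ from the RHS witnessed by $t \in U$ and the guard condition on ${\downarrow}t$, reflexivity $t \le t$ places $t \in {\downarrow}t$, so $\flow_s\, t$ appears in the $t$-indexed piece of the union.

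The $\subseteq$ direction is the one subtle step. Given $x = \flow_s\, \tau'$ with $\tau' \in {\downarrow}t$ for some witnessing $t \in U$ satisfying the guard condition, I would reindex by $\tau'$ itself: since $\tau' \in U$ and ${\downarrow}\tau' \subseteq {\downarrow}t$ by transitivity of $\le$, the guard condition for $t$ restricts to the guard condition for $\tau'$, so $x = \flow_s\, \tau'$ lies in the RHS with witness $\tau'$. This reindexing is the main obstacle—not because it is hard, but because one must notice that the point $\flow_s\, \tau'$ need not be of the form $\flow_s\, t$ for any \emph{originally} witnessing $t$, and so a new, smaller witness has to be chosen to match the shape of the RHS. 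No analytic fact about $\flow$, Lipschitz continuity, or Picard-Lindel\"of is used; the argument is purely set-theoretic and relies only on the monotonicity of ${\downarrow}(-)$ and on $0 \in U$ playing no role beyond ensuring the sets in question are inhabited when relevant.
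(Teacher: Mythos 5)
Your proof is correct and matches the paper's (implicit) argument: the paper states this lemma without a displayed proof, deferring to the Isabelle lemma \isa{g{\isacharunderscore}orbit{\isacharunderscore}eq}, which is likewise established by unfolding the definition of the guarded orbit and the observation that $\Pow\,\flow_s\,{\downarrow}t\subseteq G$ is exactly the universally quantified guard condition. You correctly isolate the one non-trivial step, namely reindexing a point $\flow_s\,\tau'$ with $\tau'\in{\downarrow}t$ by the smaller witness $\tau'$ itself, using $\tau'\in U$ and ${\downarrow}\tau'\subseteq{\downarrow}t$.
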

We have not formalised (1) with Isabelle because reasoning with
partial functions may be tedious.  As a special case, for $U=T_+$, any
subinterval of $\reals_+$,
\begin{equation*}
\gamma^\flow_{G,T_+}\, s = \{\flow_s\, t \mid t\in T_+\land \forall
\tau\in [0,t].\ G\, (\flow_s\, \tau)\}.
\end{equation*}

We can now define the state transformer semantics of $x'= f\, \&\, G$
simply as
\begin{equation*}
{(x'=_\mathcal{F} f\, \&\, G)_U} = \gamma^\flow_{G,U}.
\end{equation*}
Hence the denotation of an evolution command in state $s$ is the
guarded orbit at $s$ in time interval $U$.  Alternatively, in $\rel\,
S$, 
\begin{equation*}
{(x'=_\mathcal{R} f\, \&\, G)_U} = \mathcal{R}\, {(x'=_\mathcal{F} f\, \&\,
  G)_U} = \{(s,\flow\, t\, s)\mid t\in U\land \forall \tau\in {\downarrow}t.\ G\, (\flow_s\,
    \tau)\}
\end{equation*}
like in Section \ref{sec:discrete-store}. Restricting this further to
$U=\reals_+$ yields the standard semantics of evolution 
commands of $\dL$.

It remains to derive the $\wlp$s for evolution commands. These are the
same in $\rel\, S$ and $\sta\, S$, so we drop $\mathcal{F}$ and
$\mathcal{R}$.

\begin{proposition}\label{P:wlpprop}
  Let $Q: S\to\bools$. Then
\begin{equation*}
  |{(x'=f\, \&\, G)_U}] Q  = \lambda s\in S.\ \{s\mid \forall t\in U.\ 
  \Pow\, \flow_s\, {\downarrow}t \subseteq G \rightarrow \Pow\, \flow_s\, {\downarrow}t
\subseteq Q\}.
\end{equation*}
\end{proposition}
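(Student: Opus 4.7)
The plan is to unfold definitions on both sides and then verify that the resulting first-order formulas are equivalent by a short downward-closure argument.

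First I would compute the left-hand side in the state transformer semantics. By the definition of the box modality in $\sta\, S$ given in Section~\ref{sec:mka-pt}, $|f]P = \{x \mid f\, x \subseteq P\}$, so
\begin{equation*}
|(x'=f\,\&\,G)_U]Q\, s \;\Longleftrightarrow\; \gamma^\flow_{G,U}\, s \subseteq Q.
\end{equation*}
Then I would apply Lemma~\ref{P:g-orbit-props}(2) to rewrite $\gamma^\flow_{G,U}\, s$ as $\{\flow_s\, t \mid t\in U \wedge \forall \tau\in {\downarrow}t.\ G\,(\flow_s\,\tau)\}$. The containment in $Q$ then becomes the pointwise statement
\begin{equation*}
\forall t\in U.\ \bigl(\forall\tau\in{\downarrow}t.\ G\,(\flow_s\,\tau)\bigr) \rightarrow Q\,(\flow_s\,t). \tag{$\ast$}
\end{equation*}

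Next I would unfold the right-hand side: $\Pow\, \flow_s\, {\downarrow}t \subseteq G$ says $\forall\tau\in{\downarrow}t.\ G\,(\flow_s\,\tau)$, and similarly for $Q$. Thus the right-hand side reads
\begin{equation*}
\forall t\in U.\ \bigl(\forall\tau\in{\downarrow}t.\ G\,(\flow_s\,\tau)\bigr) \rightarrow \bigl(\forall\tau\in{\downarrow}t.\ Q\,(\flow_s\,\tau)\bigr). \tag{$\dagger$}
\end{equation*}

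The remaining task is to show $(\ast)\leftrightarrow(\dagger)$. The direction $(\dagger)\Rightarrow(\ast)$ is immediate: given $t\in U$ with the $G$-hypothesis, apply $(\dagger)$ at $t$ and specialise $\tau=t\in{\downarrow}t$. For $(\ast)\Rightarrow(\dagger)$, fix $t\in U$ with the $G$-hypothesis and an arbitrary $\sigma\in{\downarrow}t$; since ${\downarrow}\sigma\subseteq{\downarrow}t$, the $G$-hypothesis still holds on ${\downarrow}\sigma$ and $\sigma\in U$, so $(\ast)$ applied at $\sigma$ yields $Q\,(\flow_s\,\sigma)$, as required. The only subtlety — the step I would flag to the reader — is this observation that both the set ${\downarrow}(\cdot)$ and the $G$-safety hypothesis are downward-closed along ${\downarrow}t$, which is what lets the endpoint formulation $(\ast)$ and the whole-prefix formulation $(\dagger)$ agree; no obstacle of substance arises, since everything is a direct unwinding of the definitions of $\gamma^\flow_{G,U}$, of $|{-}]$ in $\sta\, S$, and of $\Pow$.
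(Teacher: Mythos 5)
Your proof is correct, but it takes a slightly different route from the paper's. Both arguments begin the same way, with $|f]Q = \{s \mid f\, s \subseteq Q\}$ in $\sta\, S$ reducing the claim to characterising when $\gamma^\flow_{G,U}\, s \subseteq Q$. The paper then unfolds the \emph{original} definition $\gamma^\flow_{G,U}\, s = \bigcup\{\Pow\, \flow_s\, {\downarrow}t \mid t\in U \land \Pow\, \flow_s\, {\downarrow}t \subseteq G\}$, so that the stated right-hand side drops out immediately from the fact that a union is contained in $Q$ iff each of its members is --- no further argument is needed. You instead pass through Lemma~\ref{P:g-orbit-props}(2) to the endpoint formulation $(\ast)$, which is really the statement of Lemma~\ref{P:wlpprop-var} (the rule (\ref{eq:wlp-evl})), and then must do genuine work to reconcile $(\ast)$ with the whole-prefix formulation $(\dagger)$; your downward-closure argument (${\downarrow}\sigma \subseteq {\downarrow}t$ for $\sigma \in {\downarrow}t$, plus $t \in {\downarrow}t$) is exactly right and is the content that separates the two formulations. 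The trade-off: the paper's route proves the proposition in one line but must separately establish Lemma~\ref{P:wlpprop-var} afterwards, whereas your route proves both results in one pass at the cost of making the proposition itself depend on the order structure of $U$ (reflexivity and transitivity of $\le$), which the paper's direct calculation does not need.
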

\noindent By Lemma~\ref{P:g-orbit-props}, alternatively,
\begin{equation*}
|{(x'=f\, \&\, G)_U}] Q  
= \lambda s\in S.\ \{s\mid \forall t\in U.\
  \gamma^{\flow|_{{\downarrow}t}}\, s \subseteq G \rightarrow \gamma^{\flow|_{{\downarrow}t}}\,
  s\subseteq Q\}.
\end{equation*}
For verification condition generation, the following variant is most useful.
\begin{lemma}\label{P:wlpprop-var}
  Let $Q: S\to\bools$. Then
\begin{equation}
|{(x'=f\, \&\, G)_U}] Q  = \lambda s\in S.\forall t\in U.\ (\forall
\tau\in {\downarrow}t.\ G\, (\flow_s\, \tau)) \rightarrow Q\, (\flow_s\, t).\label{eq:wlp-evl}\tag{wlp-evl}
\end{equation}
\end{lemma}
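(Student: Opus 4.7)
The plan is to derive the claimed identity directly from Proposition~\ref{P:wlpprop} by unfolding the two set-inclusion conditions pointwise along the trajectory and observing that the conclusion $\Pow\, \flow_s\, {\downarrow}t \subseteq Q$ can be weakened to the single-point condition $Q\, (\flow_s\, t)$ without loss.

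First I would unfold the inclusions in Proposition~\ref{P:wlpprop}. For any predicate $P : S\to \bools$, we have $\Pow\, \flow_s\, {\downarrow}t \subseteq P$ iff $\forall \tau \in {\downarrow}t.\ P\, (\flow_s\, \tau)$. Applying this to both $G$ and $Q$, $|(x'=f\, \&\, G)_U]Q$ evaluated at $s$ becomes
\begin{equation*}
\forall t\in U.\ \bigl(\forall \tau\in {\downarrow}t.\ G\, (\flow_s\, \tau)\bigr) \rightarrow \bigl(\forall \tau\in {\downarrow}t.\ Q\, (\flow_s\, \tau)\bigr).
\end{equation*}
It then suffices to show this is equivalent to the statement of the lemma, in which the conclusion is only $Q\, (\flow_s\, t)$ rather than the universal statement over ${\downarrow}t$.

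The forward implication is immediate: since $t\le t$ and $t\in U$, we have $t\in {\downarrow}t$, so universally instantiating at $\tau = t$ yields $Q\, (\flow_s\, t)$. The reverse implication is the only mildly delicate step. Assume the pointwise formulation from the lemma, fix $t\in U$, assume $\forall \tau\in{\downarrow}t.\ G\, (\flow_s\, \tau)$, and take an arbitrary $\tau\in {\downarrow}t$. I would instantiate the pointwise hypothesis at $\tau$ (which lies in $U$ by definition of ${\downarrow}t$); this yields $Q\, (\flow_s\, \tau)$ provided $\forall \tau'\in {\downarrow}\tau.\ G\, (\flow_s\, \tau')$. But $\tau\le t$ gives ${\downarrow}\tau \subseteq {\downarrow}t$, so the guard condition on ${\downarrow}t$ transfers, closing the argument.

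The main obstacle, such as it is, lies in that last monotonicity observation ${\downarrow}\tau \subseteq {\downarrow}t$ for $\tau\le t$; everything else is pure rewriting of set-membership under the trajectory map $\flow_s$. The proof is therefore essentially a reformulation of Proposition~\ref{P:wlpprop}, packaged in the form most convenient for verification condition generation via (\ref{eq:wlp-evl}).
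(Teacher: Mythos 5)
Your proof is correct, but it reaches the lemma by a slightly different route than the paper's development suggests. The paper sets up Lemma~\ref{P:g-orbit-props}(2), which rewrites the guarded orbit pointwise as $\gamma^\flow_{G,U}\, s= \{\flow_s\, t \mid t\in U \land \forall \tau\in {\downarrow}t.\ G\, (\flow_s\, \tau)\}$; combined with the characterisation $|f]P = \{x\mid f\, x\subseteq P\}$ of the forward box, the stated identity then falls out in one step, since $\gamma^\flow_{G,U}\, s\subseteq Q$ says precisely that $Q\,(\flow_s\, t)$ holds for every $t\in U$ whose down-segment satisfies $G$. You instead start from Proposition~\ref{P:wlpprop}, whose conclusion carries the stronger clause $\Pow\, \flow_s\, {\downarrow}t\subseteq Q$, and then prove the logical equivalence between the two quantified forms. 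That equivalence is the real content of your argument, and you handle it correctly: the forward direction instantiates at $\tau=t$ (legitimate since $t\in U$ and $t\le t$ give $t\in{\downarrow}t$), and the reverse direction uses ${\downarrow}\tau\subseteq{\downarrow}t$ for $\tau\le t$, which holds by transitivity of $\le$ and the definition ${\downarrow}t=\{t'\in U\mid t'\le t\}$, to transfer the guard hypothesis. What your route buys is an explicit demonstration that the two published forms of the $\wlp$ (Proposition~\ref{P:wlpprop} and equation~(\ref{eq:wlp-evl})) are interderivable without going back to the orbit definition; what it costs is the extra down-set monotonicity argument, which the direct computation from Lemma~\ref{P:g-orbit-props}(2) avoids entirely. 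Either way the lemma is established.
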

In particular, for $T=\reals$ and $U=\reals_+$, 
\begin{equation*}
|{(x'=f\, \&\, G)_{\reals_+}}] Q  = \lambda s\in S.\forall t\in \reals_+.\ (\forall
\tau\in [0,t].\ G\, (\flow_s\, \tau)) \rightarrow Q\, (\flow_s\,
t).
\end{equation*}

Accordingly, and consistently with $\dL$, $Q$ is no longer a
postcondition in the traditional sense: by definition it is supposed
to hold along the trajectory and therefore on any orbit at any
particular initial condition $s$ guarded by $G$.

For a more categorical view on the weakest liberal precondition of evolution commands,
remember from Section~\ref{sec:pt-monad} that 
$\langle (x'=f\, \&\, G)_U| = (\gamma^\flow_{G,U})^\dagger$, where
$(-)^\dagger$ is the Kleisli extension map, and that the $\wlp$ of
$(x'=f\, \&\, G)_U$ is its right adjoint. It therefore satisfies
\begin{align*}|(x'=f\,
\&\, G)_U]P 
=\bigcup\{Q\mid (\gamma^\flow_{G,U})^\dagger\,  Q \subseteq P\}
= \{s \mid \gamma^\flow_{G,U}\, s \subseteq P\}.
\end{align*}
The identity in Proposition~\ref{P:wlpprop} can then be calculated
from there.

The $\wlp$ laws in Proposition~\ref{P:wlpprop} and
Lemma~\ref{P:wlpprop-var} complete the laws for verification condition
generation for hybrid programs in the relational and state transformer
semantics.  In practice, Proposition~\ref{P:wlpprop},
Lemma~\ref{P:wlpprop-var} and the Picard-Lindel\"of theorem support
our first workflow for computing the $\wlp$ of an evolution command
$x'=f\, \&\, G$ on a set $U$ for a Lipschitz continuous vector field:
\begin{enumerate}
\item check that the vector field $f$ is indeed Lipschitz continuous and
  $S\subseteq R^V$ open;
\item supply the (local) flow $\flow$ for $f$ with $U$, a subinterval of 
the interval of existence around $0$;
\item certify that $\flow_s$ is indeed the unique solution for $(f,s)$
  for any $s\in S$ and for $U$:
  \begin{enumerate}
  \item $\flow_s' = f \circ \flow_s$ on $U$ for any $s\in S$,
\item $\flow_s\, 0 = s$ for any $s\in S$, 
\item $U$ is subset of open set $T$ with $0\in U$;
  \end{enumerate}
\item if successful, apply the identity in Proposition~\ref{P:wlpprop} or Lemma~\ref{P:wlpprop-var}.
\end{enumerate}
In practice, computer algebra tools are helpful for finding flows.
Their integration into proof assistants for this purpose is routine
and therefore not pursued in this article. The existence of unique
solutions can be guaranteed uniformly, for instance, for affine or
linear systems of ordinary differential equations. See~\cite{Munive20}
for the formalisation of such an approach with Isabelle.

The following classical example illustrates our algebraic approach and
gives a first glimpse of the mathematics involved. It should be noted
that we are not embellishing our natural semantical notation with any
fa\c{c}ade program syntax in this article; see~\cite{FosterGMS21} for
such an extension. A formal verification with Isabelle can be found in
Example~\ref{ex:bouncing-ball-flow} below.

\begin{example}[Bouncing ball]\label{ex:ball}
  A ball of mass $m$ is dropped from height $h\geq 0$. Its state space
  is $s\in\reals^V$ for $V=\{x,v\}$, where $x$ denotes its position
  and $v$ its velocity. Its kinematics is specified by the vector field
  $f:\reals^V\to\reals^V$ with

  \begin{equation*}
    f\,
    \begin{pmatrix}
      s_x\\
      s_v
    \end{pmatrix}
=
\begin{pmatrix}
  s_v\\
-g
\end{pmatrix},
\end{equation*}

\noindent where $g$ is the acceleration due to gravity and we
abbreviate $s_x = s\, x$ and $s_v = s\, v$.  The ball is assumed to
bounce back from the ground in an elastic collision. This is modelled
using a discrete control, which checks for $s_x=0$ and then flips the
velocity.  A guard $G=(\lambda s.\ s_x\geq 0)$ precludes any motion
below the ground. The system is modelled by the hybrid
program~\cite{Platzer10}
\begin{align*}
	\mathsf{Cntrl} &= \IF {(\lambda\, s.\ s_x=0)} {v:=(\lambda\, s.\ - s_v)} \mathit{skip},\\
	\mathsf{Ball} &= ({x'=f\, \&\, G}\,  {;}\, \mathsf{Cntrl})^\ast,
\end{align*}
where $\isa{skip}$ denotes the program that maps each state to itself
(represented by $1$ in $\MKA$). Its correctness specification is
\begin{equation*}
    P\leq|\mathsf{Ball}]Q\qquad\text{ for }\quad
  P= (\lambda s.\ s_x = h\land s_v = 0)\quad\text{ and }\quad Q  = (\lambda s.\ 0\leq s_x\leq h).
\end{equation*}
We also need the loop invariant
\begin{equation*}
      I = \left(\lambda s.\ 0\le s_x \land \frac{1}{2}s_v^2= g(h - s_x)\right),
\end{equation*}
which uses a variant of energy conservation with $m$ cancelled out. 

The first step of our verification proof shows that $P\le I$ and
$I\le Q$. The first inequality holds because
$\frac{1}{2} 0^2 = 0 = h - h$; the second one because $0\le s_x$
appears both in $I$ and in $Q$ and because $s_x \le h$ is guaranteed
by $g(h-s_x)\ge 0$, which holds as $\frac{1}{2}s_v^2 \ge 0$. With
transitivity and isotonicity of boxes, we can thus bring the
correctness specification into the form $I \leq|\mathsf{Ball}]I$.

Applying (\ref{eq:wlp-star}) then yields the proof obligation
$I\le |{x'=f\ \&\ G}\, {;}\, \mathsf{Cntrl}]I$.  To discharge it, we
use (\ref{eq:wlp-seq}) to calculate the $\wlp$s
\begin{align*}
J &=|\IF {(\lambda\, s.\ s_x=0)} {v:=(\lambda\, s.\ - s_v)} \mathit{skip}]I,\\
K &=|{x'=f\, \&\, G}]J
\end{align*}
incrementally and finally show that $I\le K$.

For the first $\wlp$ we calculate, with (\ref{eq:wlp-cond}) and for $T=
(\lambda\, s.\ s_x=0)$, 
\begin{align*}
   J &= (T \to |v:=(\lambda\, s.\ - s_v)] I) \cdot (\overline{T} \to I)\\
&= \left( T \to |v:=(\lambda\, s.\ - s_v)] \left(\lambda s.\ 0\le s_x
   \land \frac{1}{2}s_v^2= g(h - s_x)\right)\right) \cdot (\overline{T} \to I)\\
 &= \left(T \to \left(\lambda s.\ 0\le s_x \land \frac{1}{2}(-s_v)^2=
   g(h - s_x)\right)\right) \cdot (\overline{T} \to I)\\
 &= (T \to I) \cdot (\overline{T} \to I)\\
&= I.
\end{align*}
For the second $\wlp$, we wish to apply (\ref{eq:wlp-evl}). This
requires checking that $f$ is Lipschitz continuous---$\lipschitz = 1$
does the job, supplying a flow and checking that it solves the IVP
$(f,s)$ for all $s\in S$ and satisfies the flow conditions for
$T=\reals$ and $S=\reals^V$.  We leave it to the reader to verify that
$\flow:\reals\to\reals^V\to\reals^V$ defined by
\begin{equation*}
  \flow_s\, t = 
  \begin{pmatrix}
    s_x\\
     s_v
  \end{pmatrix}
+
\begin{pmatrix}
  s_v\\
-g
\end{pmatrix}
t
-
\frac{1}{2}\begin{pmatrix}
  g\\
  0
\end{pmatrix}
t^2
\end{equation*}
meets the requirements in the procedure outlined above,
cf. Example~\ref{ex:fluid}.  Then, expanding definitions and applying
(\ref{eq:wlp-evl}) from Lemma~\ref{P:wlpprop-var},
\begin{align*}
&K\, s \\
&= \left(\forall t\in\reals_+.\ (\forall\tau\in
                        [0,t].\ 0 \le \flow_s\, \tau\, x)\rightarrow 0\le
                                      \flow_s\, t\,
                                      x \land \frac{1}{2}\left(\flow_s\, t\, v\right)^2=
                                      g(h - \flow_s\, t\, x)\right)\\
\ & = \left(\forall t.\ (\forall\tau\in
                        [0,t].\ 0 \le \flow_s\, \tau\, x)\rightarrow \frac{1}{2}(\flow_s\, t\, v)^2=
                                      g(h - \flow_s\, t\, x)\right)\\
\ & = \left(\forall t. \left(\forall\tau\in
                        [0,t].\ 0 \le
  s_x + s_vt -\frac{1}{2}g\tau^2\right)\right.\\
& \qquad\qquad\left.\rightarrow \frac{1}{2}(s_v-gt)^2=
                                      g\left(h -s_x - s_v t + \frac{1}{2}gt^2\right)\right).
\end{align*}
Finally, for $I\le K$, suppose $0\le s_x$,
$\frac{1}{2}s_v^2= g\left(h - s_x\right)$ and
$0 \le s_x+s_v\tau -\frac{1}{2}g\tau^2$ for all $\tau\in [0,t]$.  It
remains to show that
$\frac{1}{2}\left(s_v-gt\right)^2= g\left(h -s_x-s_vt
  +\frac{1}{2}gt^2\right)$. Indeed, using the second assumption in the
second step,
\begin{align*}
  \frac{1}{2}(s_v-gt)^2 
&= \frac{1}{2}s_v^2-g\left(s_v t+\frac{1}{2}gt^2\right)\\
&= g(h-s_x)-g\left(s_v t+\frac{1}{2}gt^2\right) \\
&=g\left(h-s_x +s_vt + \frac{1}{2}gt^2\right).
\end{align*}
The verification with Isabelle described in
Example~\ref{ex:bouncing-ball-flow} is far more automatic than this
proof on paper suggests, and there is ample scope for further
automation.  As already pointed out: the main purpose of this example
is to illustrate our first workflow and give an impression of the
mathematical reasoning involved. \qed
\end{example}

Certifying solutions of systems of ODEs can be tedious and hard to
automate and many ODEs do not admit analytic solutions.  It is
possible to circumvent these obstacles to practical verification
applications in various ways.  One approach, using invariant sets for
systems of ODEs, is pursued by $\dL$ and described in the following
sections. It constitues the second workflow supported by our
framework.  Another approach aims at particular types of vector fields
for which (global) flows always exist and are easy to compute. A
classical example are linear systems of ODEs~\cite{Hirsch09,Teschl12},
for which the first author has already developed methods in a
successor article~\cite{Munive20}.  A final approach abandons
differential equations and vector fields altogether and starts from
flows---as known from hybrid automata~\cite{DoyenFPP18}. This requires
changing the syntax of hybrid programs. The approach is outlined in
Section~\ref{sec:flow-component}. It constitutes the third workflow
supported by our framework.

%%%%%%%%%%%%%%%%%%%%%%%%%%%%%%%%%%%%

\section{Evolution Commands for Continuous Vector
  Fields}\label{sec:generalisation}

As the semantic approach to evolution commands developed in the
previous section depends mainly on orbits, which are nothing but sets
of states, it can be generalised beyond trajectories and flows. In
this section we drop the requirement of uniqueness of solutions to
IVPs and hence assume that vector fields are merely continuous. In
fact, if vector fields are non-continuous, the set of solutions
defined below will simply be empty. We therefore generalise the
definitions in the previous section to obtain weakest liberal
preconditions for evolution commands that do not admit unique
solutions, for instance, IVPs of the form $x'\, t=k\sqrt{x\, t}$ with
$x\, 0=0$ for any $k\in \reals$~\cite{HubbardW91}. Our second workflow
using invariant sets is based on this generalisation. 

Consider the IVP $(f,s)$ for  continuous vector field $f:S\to S$ and initial state
$s\in S\subseteq\reals^V$.  Let
\begin{equation*}
\Sols f\, T\, s = \{X \mid \forall t\in T.\  X'\, t = f\, (X\, t)\land X\, 0 = s\}
\end{equation*}
denote its set of solutions on $T\subseteq \reals$ with $0\in T$.  
Here, $T$ is no longer the maximal interval of existence defined by 
the Picard-Lindel\"of theorem; it can be changed like the set
$U$ in the previous section. Then each solution $X$ is still
continuously differentiable and thus $f\circ X$ integrable in
$T$.

For all $X\in \Sols\, f\, T\, s$ and $G:S\to\bools$, we define the $G$-\emph{guarded
  orbit} of $X$ along $T$ in $s$ via the function $\gamma^X_G:S\to \Pow\, S$ as
\begin{equation*}
  \gamma^X_{G}\, s = \bigcup\{\Pow\, X\,
  {\downarrow}t \mid t\in T\land \Pow\, X\, {\downarrow}t\subseteq G\},
\end{equation*}
which simplifies to
$\gamma^X_{G}\, s= \{X\, t\mid t\in T\land \forall \tau\in
{\downarrow}t.\ G\, (X\, \tau)\}$.  By Kneser's
theorem~\cite{Kneser1923}, when non-uniqueness occurs at some point,
infinitely many solutions exist for it. Thus, we define the
$G$-\emph{guarded orbital} of $f$ along $T$ in $s$ via the function
$\gamma^f_G:S\to \Pow\, S$ as
\begin{equation*}
  \gamma^f_G\ s = \bigcup\{\gamma^X_G\, s\mid X\in \Sols\, f\, T\, s\}.
\end{equation*}
We thus patch the guarded orbit of each solution to the associated IVP
together so that $\gamma^f_G\ s$ represents all possible evolutions in
time that pass through $s$. This is evident from the following result.
\begin{lemma}\label{P:gorbital}
Let $f:S\to S$ be continuous and $G:S\to \bools$. Then
  \begin{equation*}
  \gamma^f_G\, s = \{X\, t \mid  t\in T \land \Pow\, X\,
  {\downarrow}t\subseteq G \land X\in \Sols\, f\, T\, s\}.
\end{equation*}
\end{lemma}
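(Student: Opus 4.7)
The proof is essentially an unfolding of definitions followed by flattening a nested union, so I would aim to present it as a short equational calculation rather than anything more structural.

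First I would recall the definition $\gamma^f_G\, s = \bigcup\{\gamma^X_G\, s \mid X\in \Sols\, f\, T\, s\}$, and then substitute in the simplified formula for $\gamma^X_G\, s$ that is given immediately after its definition, namely $\gamma^X_G\, s = \{X\, t \mid t\in T\land \forall \tau \in {\downarrow}t.\ G\, (X\, \tau)\}$. The equivalence $\forall \tau\in {\downarrow}t.\ G\, (X\, \tau)\ \Leftrightarrow\ \Pow\, X\, {\downarrow}t \subseteq G$ is immediate from the interpretation of predicates as sets used throughout the paper (and already exploited in Lemma~\ref{P:g-orbit-props}(2)).

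With these two substitutions in place, the proof reduces to observing that
\[
\bigcup\{\{X\, t \mid t\in T\land \Pow\, X\, {\downarrow}t\subseteq G\} \mid X\in \Sols\, f\, T\, s\}
= \{X\, t \mid t\in T\land \Pow\, X\, {\downarrow}t\subseteq G\land X\in \Sols\, f\, T\, s\},
\]
which is just the standard rule for flattening a union indexed by a set-builder. I would spell this out by showing both inclusions: an element $y$ of the left-hand side has the form $y = X\, t$ for some $X\in \Sols\, f\, T\, s$ and some $t\in T$ with $\Pow\, X\, {\downarrow}t\subseteq G$, and conversely any such triple $(X,t,y)$ witnesses membership in the corresponding $\gamma^X_G\, s$.

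There is no real obstacle here; the statement is a bookkeeping lemma giving a uniform description of the guarded orbital. The only mild subtlety is to make sure the equivalence between the ``$\forall \tau$'' and ``$\Pow\, X\, {\downarrow}t \subseteq G$'' formulations is explicit, since the definition of $\gamma^X_G\, s$ is stated in the second form whereas the conclusion is also stated in the second form; no conversion is needed after the unfolding, so the lemma follows by pure rewriting.
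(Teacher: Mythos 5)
Your proof is correct and matches what the paper does: the paper gives no written proof for this lemma (it is the Isabelle fact \isa{g{\isacharunderscore}orbital{\isacharunderscore}eq}, proved by unfolding the definitions of $\gamma^f_G$ and $\gamma^X_G$ and flattening the indexed union, exactly as you describe). The only cosmetic wrinkle is the closing remark that ``no conversion is needed'' after you have just invoked the $\forall\tau$/$\subseteq$ equivalence --- you do use one form of $\gamma^X_G\,s$ and restate the result in the other --- but this is a triviality and does not affect the argument.
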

If $G=\top$, the constantly true predicate on $S$ or the set $S$
itself, we simply write $\gamma^f$ instead of $\gamma^f_\top$. 

The state transformer semantics of the evolution command for a
continuous vector field $f$ can then be defined as
\begin{equation*}
{(x'=_\mathcal{F} f\, \&\, G)} = \gamma^f_G.
\end{equation*}
The corresponding relational semantics is 
\begin{equation*}
{(x'=_\mathcal{R} f\, \&\, G)}  = \{(s,X\, t)\mid t\in T\land
  \forall \tau\in {\downarrow}t.\ G\, (X\, \tau) \land X\in \Sols f\, T\, s\}.
\end{equation*}
Once again, $\langle x'= f\, \&\, G| = (\gamma^f_G)^\dagger$. This
leads to a $\wlp$ for evolution commands.
\begin{proposition}\label{P:wlpprop-gen}
  Let $S\subseteq\reals^V$ and $T\subseteq \reals$. Let $f: S\to S$ be a
  continuous vector field and $G,Q: S\to\bools$. Then
\begin{align*}
|x'=f\, \&\, G] Q 
 = \lambda s\in S.\ \{s\mid \forall X\in \Sols\, f\, T\, s.\forall t\in T.\ 
  \Pow\, X\, {\downarrow}t \subseteq G \rightarrow \Pow\, X\, {\downarrow}t
  \subseteq Q\}.
\end{align*}
\end{proposition}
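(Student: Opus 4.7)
The plan is to reduce the computation to the explicit $\wlp$ formula for state transformers recalled in Section~\ref{sec:pt-monad}, namely $|h]P = \{s \mid h\,s \subseteq P\}$, and apply it to the state transformer semantics of the evolution command. Since ${(x' =_\mathcal{F} f\,\&\,G)} = \gamma^f_G$ by definition, this immediately yields
\begin{equation*}
|x' = f\,\&\,G]\,Q \;=\; \lambda s\in S.\ \{s \mid \gamma^f_G\,s \subseteq Q\},
\end{equation*}
under the standing identification between predicates and their characteristic sets used throughout the paper.

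Next, I would unfold $\gamma^f_G\,s \subseteq Q$ using Lemma~\ref{P:gorbital}, which describes the $G$-guarded orbital as the set of $X\,t$ with $t\in T$, $\Pow\,X\,{\downarrow}t \subseteq G$ and $X \in \Sols\,f\,T\,s$. Containment in $Q$ thus reduces to the pointwise condition: for every $X \in \Sols\,f\,T\,s$ and every $t \in T$, if $\Pow\,X\,{\downarrow}t \subseteq G$ then $X\,t \in Q$.

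The remaining step is to show this pointwise condition is equivalent to the segmentwise condition in the statement, obtained by replacing $X\,t \in Q$ with $\Pow\,X\,{\downarrow}t \subseteq Q$. The segmentwise form trivially implies the pointwise one since $t \in {\downarrow}t$. Conversely, assume the pointwise form and fix $t \in T$ with $\Pow\,X\,{\downarrow}t \subseteq G$; for any $\tau \in {\downarrow}t$ we have $\Pow\,X\,{\downarrow}\tau \subseteq \Pow\,X\,{\downarrow}t \subseteq G$ by monotonicity of the down-closure, so the pointwise hypothesis at $\tau$ gives $X\,\tau \in Q$, and therefore $\Pow\,X\,{\downarrow}t \subseteq Q$.

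I do not expect a genuine obstacle here: everything is driven by Lemma~\ref{P:gorbital}, the monadic $\wlp$ formula, and a short quantifier-juggling argument about down-closures. The only mild subtlety is notational bookkeeping, keeping straight the implicit identifications between predicates, subsets and subidentity relations, together with the fact that $|f]P = \{s \mid f\,s \subseteq P\}$ in $\sta\,S$ is exactly the right adjoint of $\langle f|=f^\dagger$ used to motivate the formula in the discussion preceding the proposition.
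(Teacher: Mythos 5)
Your proposal is correct and follows essentially the same route the paper indicates: it computes $|x'=f\,\&\,G]Q$ as $\{s\mid \gamma^f_G\,s\subseteq Q\}$ via the right adjoint of the Kleisli extension $\langle x'=f\,\&\,G|=(\gamma^f_G)^\dagger$, and then unfolds the guarded orbital with Lemma~\ref{P:gorbital}. The down-closure monotonicity argument you give to pass between the pointwise condition $X\,t\in Q$ and the segmentwise condition $\Pow\,X\,{\downarrow}t\subseteq Q$ is exactly the calculation the paper leaves implicit, and it is sound.
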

This identity can be rewritten, for predicates,  as
\begin{equation*}
|{x'=f\, \&\, G}] Q  = \lambda s\in S.\forall X\in \Sols\, f\, T\, s.\forall t\in T.\ (\forall
\tau\in {\downarrow}t.\ G\, (X\, \tau)) \rightarrow Q\, (X\, t).
\end{equation*}

Whether this fact is useful for verification applications, as outlined
above, remains to be seen. Yet the next section shows that it is
certainly useful for reasoning with invariant sets. The following
corollary is important for verification proofs with invariants as
well.
\begin{corollary}\label{P:wlpprop-gen2}
  Let $f:S\to S$, $S\subseteq\reals^V$, be a continuous vector field,
  $T\subseteq\reals$
  and $G,Q: S\to\bools$. Then
\begin{equation*}
|{x'=f\, \&\, G}] Q  = |{x'=f\, \&\, G}](G\cdot Q).
\end{equation*}
\end{corollary}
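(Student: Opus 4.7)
The plan is to apply Proposition~\ref{P:wlpprop-gen} (in its predicate form) to both sides and then exploit the fact that the guard premise $\forall\tau\in{\downarrow}t.\ G(X\,\tau)$ already forces $G(X\,t)$, because $t$ itself lies in ${\downarrow}t$. With that observation in hand, the two $\wlp$s have the same logical content.

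In more detail, I would first unfold, using the predicate reformulation of Proposition~\ref{P:wlpprop-gen}, the two expressions as
\begin{align*}
|x'{=}f\,\&\,G]Q\,s &\;\leftrightarrow\;\forall X\in\Sols f\,T\,s.\,\forall t\in T.\ (\forall\tau\in{\downarrow}t.\ G(X\,\tau))\rightarrow Q(X\,t),\\
|x'{=}f\,\&\,G](G\cdot Q)\,s &\;\leftrightarrow\;\forall X\in\Sols f\,T\,s.\,\forall t\in T.\ (\forall\tau\in{\downarrow}t.\ G(X\,\tau))\rightarrow G(X\,t)\wedge Q(X\,t).
\end{align*}
The ``$\ge$'' inclusion is then immediate: if $G(X\,t)\wedge Q(X\,t)$ holds under the guard premise, then in particular $Q(X\,t)$ holds. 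For the ``$\le$'' inclusion, fix $s$, an arbitrary $X\in\Sols f\,T\,s$ and $t\in T$, and assume the guard premise $\forall\tau\in{\downarrow}t.\ G(X\,\tau)$; since $t\in{\downarrow}t$ (as $t\le t$ and $t\in T$), instantiating $\tau=t$ yields $G(X\,t)$, while the assumed weakest liberal precondition of $Q$ delivers $Q(X\,t)$, and the conjunction $(G\cdot Q)(X\,t)$ follows.

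There is no real obstacle here; the only subtlety worth emphasising is the membership $t\in{\downarrow}t$, which hinges on the definition of ${\downarrow}t$ used in Section~\ref{sec:hybrid-store} and carried over to the continuous setting. Once this is noted, the corollary is a one-line consequence of Proposition~\ref{P:wlpprop-gen}, and, dually to the algebraic rendering $p\cdot|\alpha]q=p\cdot|\alpha](p\cdot q)$, it expresses that the guard may be silently conjoined to any postcondition within a $\wlp$ of an evolution command.
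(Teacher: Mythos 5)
Your proof is correct and is exactly the intended derivation: the paper presents the corollary as an immediate consequence of Proposition~\ref{P:wlpprop-gen}, and the key observation that $t\in{\downarrow}t$ (so the guard hypothesis $\forall\tau\in{\downarrow}t.\ G(X\,\tau)$ already yields $G(X\,t)$) is precisely what makes the two unfoldings logically coincide. The only blemish is the closing aside: the correct algebraic analogue is range restriction, $|\alpha\cdot p]q=|\alpha\cdot p](p\cdot q)$, rather than $p\cdot|\alpha]q=p\cdot|\alpha](p\cdot q)$, which fails in general $\MKA$ --- but this remark is not load-bearing and does not affect the argument.
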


%%%%%%%%%%%%%%%%%%%%%%%%%%%%%%%%%%%%

\section{Invariants for Evolution
  Commands}\label{sec:differential-invariants}

In $\dL$, differential invariants are predicates $I$ that satisfy
$I\leq |{x'=f\ \&\ G}] I$~\cite{Platzer12}. In the terminology of
Section~\ref{sec:mka-pt}, they are simply invariants for evolution
commands. They play a crucial role in $\dL$ and KeYmaera X because of
the limited support for solving ODEs and their greater generality.

In dynamical systems theory, when all guards are $\top$ and global
flows exist, and in (semi)group theory, \emph{invariant sets} for
actions or flows $\flow:T\to S\to S$ are sets $I\subseteq S$
satisfying $\gamma^\flow\, s\subseteq I$ for all
$s\in I$~\cite{Teschl12}.  Based on the results from
Section~\ref{sec:generalisation}, we generalise both notions uniformly.

A predicate or set $I:S\to\bools$ is an \emph{invariant} of the 
continuous vector field $f:S\to S$ and guard $G:S\to\bools$
\emph{along} $T\subseteq \reals$ if
\begin{equation*}
 (\gamma^f_G)^\dagger\,  I\subseteq I.
\end{equation*}
Note that the parameter $T$ is hidden in the definition of
$\gamma^f_G$. 
For $G=\top$, when $(\gamma^f)^\dagger\,  I\subseteq I$, we 
call $I$ simply an \emph{invariant} of $f$ along $T$. 

The following proposition yields a structural insight in the
relationship between invariant sets of dynamical systems and
differential invariants of $\dL$ in terms of an adjunction.
\begin{proposition}\label{P:inv-prop}
  Let $f:S\to S$ be continuous, $G:S\to\bools$ and
  $T\subseteq \reals$. Then the following are equivalent.
\begin{enumerate}
\item $I$ is an invariant for $f$ and $G$ \emph{along} $T$;
\item $\langle x'= f\, \&\, G | I \subseteq I$; 
\item $I \subseteq | x'= f\, \&\, G ] I$.
\end{enumerate}
\end{proposition}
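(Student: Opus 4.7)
The plan is to treat this as a routine unfolding plus one adjunction, which gives the three equivalences almost immediately once the notation is lined up.

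First I would show $(1) \Leftrightarrow (2)$ by pure unfolding. By the definition of invariant given just above the proposition, $(1)$ says $(\gamma^f_G)^\dagger\, I \subseteq I$. From Section~\ref{sec:pt-monad} we have the identity $(-)^\dagger = \langle - |$ on morphisms (Kleisli extension equals the strongest postcondition operator). Combined with the state transformer semantics $(x'=_{\mathcal{F}} f\, \&\, G) = \gamma^f_G$ from Section~\ref{sec:generalisation}, this gives $\langle x' = f\, \&\, G | = (\gamma^f_G)^\dagger$, so $(1)$ and $(2)$ are literally the same inequality.

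Next I would obtain $(2) \Leftrightarrow (3)$ from the diamond/box adjunction recorded in Section~\ref{sec:mka-pt}, namely $\langle \alpha | p \le q \leftrightarrow p \le |\alpha] q$. Specialising to $\alpha = x' = f\, \&\, G$ and $p = q = I$ yields $\langle x' = f\, \&\, G | I \subseteq I \leftrightarrow I \subseteq | x' = f\, \&\, G ] I$, which is exactly the equivalence of $(2)$ and $(3)$. One may alternatively view this as the fact that an element is a prefixpoint of a left adjoint iff it is a postfixpoint of its right adjoint, which is the same observation already used in Section~\ref{sec:mka-pt} to relate differential invariants to the general $\MKA$ notion of invariant.

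There is essentially no obstacle: the only thing to be slightly careful about is that the adjunction stated abstractly in $\MKA$ must be valid in the concrete predicate-transformer-over-$\Pow\, S$ setting where evolution commands live. This is precisely what Section~\ref{sec:pt-monad} establishes---the four modalities $\langle - |, |-\rangle, |-], [-|$ arising from the powerset monad satisfy the $\MKA$ modality laws, in particular the box/diamond adjunctions---so the chain $(1) \Leftrightarrow (2) \Leftrightarrow (3)$ goes through without further work.
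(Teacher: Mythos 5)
Your proposal is correct and follows exactly the paper's own argument: $(1)\Leftrightarrow(2)$ by unfolding the Kleisli-extension definition of the backward diamond together with the state transformer semantics of evolution commands, and $(2)\Leftrightarrow(3)$ by the box/diamond adjunction from Section~\ref{sec:mka-pt}. Nothing to add.
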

\begin{proof}
 \begin{equation*}
   (\gamma^f_G)^\dagger I \subseteq I
   \leftrightarrow \langle x'= f\, \&\, G | I \subseteq I
   \leftrightarrow I \subseteq | x'= f\, \&\, G ] I.
  \end{equation*}
  The first step uses the definition of backward diamonds as Kleisli
  extensions in Section~\ref{sec:pt-monad} and that of the semantics
  of evolution commands in Section~\ref{sec:generalisation}. The final
  step uses the adjunction between boxes and diamonds from
  Section~\ref{sec:mka-pt}.
\end{proof}
For our $\wlp$-calculus, condition (3) is of course most useful. Yet
instead of checking that a flow is a solution to a vector field, as
previously, we now need to check whether a predicate is an
invariant---without having to solve the system of ODEs. This may in
some case be a condicio sine qua non and in others a considerable
simplification of reasoning. The following lemmas lead to our second
workflow. We show some proofs although they have been formalised with
Isabelle, as they explain why the approach works.

First, towards Corollary~\ref{P:inv-props3} below, we may ignore guards when 
checking for invariants and we can use a simple second-order formula.

\begin{lemma}\label{P:inv-props2}
Let $f:S\to S$ be continuous and $I:S\to\bools$. Then
\begin{enumerate}
\item $  I \subseteq |x'= f\, \&\, \top]I  \rightarrow I \subseteq |x'= f\,
\&\, G]I$, 
\item  $I \subseteq |x'= f\, \&\, \top]I  \leftrightarrow \left(I\, s \to \forall X \in \Sols\, f\, T\, s.\forall t\in T.\ I\,
  (X\, t)\right)$. 
\end{enumerate}
\end{lemma}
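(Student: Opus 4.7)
The plan is to derive both parts directly from the explicit $\wlp$-formula in Proposition~\ref{P:wlpprop-gen} (or rather its predicate rewriting), unfolding definitions and then reasoning purely at the level of first-order implications. Neither part should need anything more than the semantics of evolution commands for continuous vector fields.

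For part (2), I would start by specialising the predicate form of the $\wlp$ identity from Proposition~\ref{P:wlpprop-gen} to $G = \top$. Because $\top\, (X\, \tau)$ holds for every $\tau$, the inner premise $\forall \tau\in{\downarrow}t.\ G\, (X\, \tau)$ collapses, and the whole formula simplifies to
\begin{equation*}
|x'=f\, \&\, \top]I = \lambda s\in S.\ \forall X\in\Sols\, f\, T\, s.\ \forall t\in T.\ I\, (X\, t).
\end{equation*}
Then $I\subseteq |x'=f\, \&\, \top]I$ means exactly: for all $s\in S$, if $I\, s$ then for every $X\in\Sols\, f\, T\, s$ and every $t\in T$, $I\, (X\, t)$. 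That is precisely the right-hand side of (2), so the biconditional is immediate.

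For part (1), I would apply the predicate form of Proposition~\ref{P:wlpprop-gen} for arbitrary $G$ and also the simplification just derived. The hypothesis gives, for any $s$ with $I\, s$, that $I\, (X\, t)$ holds for every solution $X\in\Sols\, f\, T\, s$ and every $t\in T$. To establish $I\subseteq |x'=f\, \&\, G]I$ I then fix $s$ with $I\, s$, a solution $X\in\Sols\, f\, T\, s$ and $t\in T$, and assume $\forall\tau\in{\downarrow}t.\ G\, (X\, \tau)$. The conclusion $I\, (X\, t)$ is obtained by discarding the guard hypothesis and invoking the unguarded statement. In other words, strengthening $G$ from $\top$ to an arbitrary predicate only adds a hypothesis to the implication inside $|x'=f\, \&\, G]I$, hence makes it weaker, so the implication (1) holds.

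Since everything reduces to monotonicity of an implication in its hypothesis after unfolding the $\wlp$ formula, I do not anticipate a substantial obstacle. The only care needed is to make sure that the set $T\subseteq\reals$ that is hidden in the definition of $\gamma^f_G$ is the same on both sides of each statement, so that the solution sets $\Sols\, f\, T\, s$ coincide; this is guaranteed by the convention of Section~\ref{sec:generalisation} that $T$ is a fixed parameter of the semantics.
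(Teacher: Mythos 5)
Your proof is correct, but for part (1) it takes a different route from the paper. The paper argues structurally: it observes that $\gamma^f_G\subseteq\gamma^f$ holds pointwise for every guard $G$, so the corresponding backward diamonds (Kleisli extensions) are ordered, giving $\langle x'=f\,\&\,G|I\subseteq\langle x'=f\,\&\,\top|I\subseteq I$, and then transfers back to the box form via the adjunction of Proposition~\ref{P:inv-prop}. You instead unfold the explicit $\wlp$ formula of Proposition~\ref{P:wlpprop-gen} and observe that an arbitrary guard only adds a hypothesis to the inner implication, so the $\top$-guarded box is the strongest one. Both arguments rest on the same underlying monotonicity, but the paper's version isolates it as an inclusion of orbits (a reusable, semantics-level fact that does not mention the $\wlp$ formula at all), whereas yours is a self-contained first-order calculation that needs no appeal to the diamond/box adjunction. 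For part (2) your calculation is exactly what the paper means by ``a simple calculation.'' Your closing remark about the hidden parameter $T$ being fixed throughout is a sensible precaution and matches the convention of Section~\ref{sec:generalisation}.
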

\begin{proof}
  For (1), $\gamma^f_G\subseteq \gamma^f$ for all $G$ and hence
  $\langle x'=f\, \&\, G|I \subseteq \langle x'=f\, \&\top|I \subseteq
  I$. The proof of (2) is a simple calculation.
\end{proof}
Second, we can recurse over predicates as follows.
\begin{lemma}\label{P:invrules}
  Let $f:S\to S$ be a continuous vector field, $\mu,\nu:S\to\reals$
  differentiable and $T\subseteq \reals$ with $0\in T$. 
\begin{enumerate}
\item If $(\mu\circ X)' =(\nu\circ X)'$ for all $X$ such that $X'\, t = f\, t\, (X\, t)$ and $G\, (X\, t)$ when $t\in T$, then $\mu = \nu$ is an invariant for $f$ along $T$,
\item if $(\mu\circ X)'\, \tau\leq(\nu\circ X)'\, \tau$ when $\tau> 0$, and $(\mu\circ X)'\, \tau\geq(\nu\circ X)'\, \tau$ when $\tau< 0$, for all $X$ such that $X'\, t = f\, t\, (X\, t)$ and $G\, (X\, t)$,
  then both $\mu < \nu$ and $\mu \leq \nu$ are invariants for $f$ along $T$,
\item if $\mu < \nu$ and $\nu < \mu$ are invariants for $f$ along $T$, then $\mu\neq \nu$ is too (and conversely if $0$ is the least element in $T$),
\item $\mu \not\le \nu$ is an invariant for $f$ along $T$ if and only if $\nu < \mu$ is too.
\end{enumerate}
\end{lemma}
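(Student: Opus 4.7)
The plan is to uniformly reduce each item to a pointwise calculus statement by chaining Proposition~\ref{P:inv-prop} with Lemma~\ref{P:inv-props2}. By Lemma~\ref{P:inv-props2}(1) we may drop the guard and work with $G=\top$, and by Lemma~\ref{P:inv-props2}(2) it then suffices to check that, whenever the candidate predicate $P$ holds of an initial state $s$, it continues to hold of $X\, t$ for every $X\in \Sols f\, T\, s$ and every $t\in T$. Each item then becomes a statement about the scalar function $g:T\to\reals$ given by $g\, t = (\nu - \mu)(X\, t)$; by linearity of differentiation, $g'\, t = (\nu\circ X)'\, t - (\mu\circ X)'\, t$, and $g\, 0 = \nu\, s - \mu\, s$ since $X\, 0 = s$.

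For (1), the hypothesis makes $g'\equiv 0$ on $T$, and if $\mu\, s = \nu\, s$ then $g\, 0 = 0$, so the mean value theorem gives $g\equiv 0$ throughout $T$, i.e.\ $\mu\, (X\, t) = \nu\, (X\, t)$. For (2), $\mu\, s < \nu\, s$ gives $g\, 0 > 0$, while the hypothesis makes $g'\ge 0$ on $\{t\in T\mid t>0\}$ and $g'\le 0$ on $\{t\in T\mid t<0\}$; the mean value theorem on each side of $0$ yields $g\, t \ge g\, 0 > 0$ for all $t\in T$, hence $\mu\, (X\, t) < \nu\, (X\, t)$. The case $t=0$ is covered directly by the sign of $g\, 0$.

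For (3)/(4), which appear to be the same item, the forward direction is immediate from the equivalence $\mu\neq\nu \Leftrightarrow \mu<\nu \lor \nu<\mu$: choose the disjunct that holds at $s$ and use its invariance along $X$ to propagate strict inequality, which implies $\mu\neq\nu$. For the converse, assume $0$ is the least element of $T$, so $T\subseteq[0,\infty)$, and that $\mu\neq\nu$ is an invariant. If $\mu\, s < \nu\, s$, then $g\, 0 > 0$; since $g$ is continuous on the interval $T$ (solutions are differentiable, hence continuous, and $\mu,\nu$ are differentiable) and $g\, t\neq 0$ for every $t\in T$ by the invariance of $\mu\neq\nu$, the intermediate value theorem forbids a sign change and forces $g\, t > 0$ throughout $T$, i.e.\ $\mu\, (X\, t) < \nu\, (X\, t)$. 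The symmetric argument handles $\nu < \mu$. For (5), the predicates $\mu\not\le\nu$ and $\nu<\mu$ coincide pointwise on real-valued functions, so the equivalence is immediate.

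The main obstacle is the converse half of (3)/(4): it is the only item that invokes connectedness of $T$ and the assumption that $0$ is its least element, and it has to rely on the intermediate value theorem applied to the continuous scalar $g$ rather than on derivative sign information. The remaining items are direct corollaries of the mean value theorem once the invariance condition has been unfolded to its pointwise form via Proposition~\ref{P:inv-prop} and Lemma~\ref{P:inv-props2}.
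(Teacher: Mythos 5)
Your proposal is correct and follows essentially the same route as the paper: the paper's proof (which it gives only for item (1)) likewise unfolds invariance to the pointwise statement that $\mu\,s=\nu\,s$ implies $\mu\,(X\,t)=\nu\,(X\,t)$ for all $X\in\Sols f\,T\,s$, and then applies the mean value theorem to the difference of the composed functions. Your treatment of the remaining items --- the one-sided sign analysis for (2), the intermediate value theorem for the converse of (3)/(4), and the pointwise identification of $\mu\not\le\nu$ with $\nu<\mu$ for (5) --- fills in details the paper omits and is consistent with its method.
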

\begin{proof}
  We only show the proof of (1), as it reveals the main idea of the
  procedure outlined below.  By definition, $\mu = \nu$ is an
  invariant for $f$ along $T$ if and only if $\mu\, s = \nu\, s$
  implies $\mu\, (X\, t) = \nu\, (X\, t)$ for all
  $X\in\Sols\, f\, T\, s$. It is a well known consequence of the mean
  value theorem that two continuously differentiable functions are the
  same if and only if they intersect at some point and have the same
  derivative.  Hence $(\mu\circ X)' =(\nu\circ X)'$ and
  $\mu\, s = \nu\, s$ imply $\mu\, (X\, t) = \nu\, (X\, t)$ for all
  $X\in\Sols\, f\, T\, s$.
\end{proof}

Proposition~\ref{P:wlpprop-gen2}, the properties 
in this section---in particular Lemma~\ref{P:invrules}---and 
Lemma~\ref{P:inv-lemma} about invariants that are conjunctions or 
disjunctions support our second workflow for proving a 
correctness specification $P\le |x' = f\, \&\, G]Q$.
\begin{enumerate}
\item Check whether a candidate predicate $I$ is a differential
  invariant:
  \begin{enumerate}
\item transform $I$ into negation normal form;
  \item if $I$ is complex, reduce it with
    Lemma~\ref{P:inv-lemma}, and Proposition~\ref{P:invrules}(3) and (4);
\item if $I$ is atomic, apply Proposition~\ref{P:invrules}(1) and (2);
  \end{enumerate}
(if successful,  $I\le |x' = f\, \&\, G]I$ holds by
Proposition~\ref{P:inv-prop}(3) and Lemma~\ref{P:inv-props2});
\item if successful, prove $P\le I$ and $|x' = f\, \&\, G](G\cdot I)
  \le |x' = f\, \&\, G]Q$.
\end{enumerate}

For $G=\top$ and Lipschitz continuous vector fields, the notions of
invariant can be strengthened.
\begin{corollary}\label{P:inv-props3}
Let $f:S\to S$ be Lipschitz continuous. Then the following are equivalent.
\begin{enumerate}
\item $I$ is an invariant for $f$ along $T$;
\item $\langle x'= f\, \&\, \true | I = I$; 
\item $I = | x'= f\, \&\, \true ] I$.
\end{enumerate}
\end{corollary}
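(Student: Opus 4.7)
The plan is to reduce to Proposition~\ref{P:inv-prop}, which under the weaker hypothesis that $f$ is merely continuous already gives the inclusions $\langle x'= f\, \&\, \true | I \subseteq I$ and $I \subseteq | x'= f\, \&\, \true ] I$ as equivalents of (1). What the corollary asks for in addition is the two reverse inclusions $I \subseteq \langle x'= f\, \&\, \true | I$ and $| x'= f\, \&\, \true ] I \subseteq I$. Both of these will follow from a single ``reflexivity'' fact: under Lipschitz continuity, $s \in \gamma^f\, s$ for every $s \in S$.

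First, I would invoke the Picard--Lindel\"of theorem (Theorem~\ref{P:picard-lindeloef}) to obtain a local flow $\flow: T \to S \to S$ with $0 \in T$ and $\flow_s \in \Sols\, f\, T\, s$ for every $s$, so in particular $\flow_s\, 0 = s$. Applied in the definition of the unguarded orbital with $t = 0$ (noting that $\Pow\, \flow_s\, {\downarrow} 0 = \{s\} \subseteq \true$ trivially), this gives the desired $s \in \gamma^f\, s$ for every $s \in S$. This is the one place in the argument where the Lipschitz hypothesis is genuinely used, since for merely continuous $f$ the set $\Sols\, f\, T\, s$ could be empty.

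Second, I would feed this fact into the unfolded definitions of the two modal operators from Section~\ref{sec:pt-monad}. For the backward diamond, $\langle x'= f\, \&\, \true | I = \{y \mid \exists x.\ y \in \gamma^f\, x \land I\, x\}$; choosing $x = s$ whenever $I\, s$ holds and using $s \in \gamma^f\, s$ yields $I \subseteq \langle x'= f\, \&\, \true | I$. For the forward box, $|x'= f\, \&\, \true]I = \{s \mid \gamma^f\, s \subseteq I\}$, and any $s$ in this set satisfies $s \in \gamma^f\, s \subseteq I$, giving $|x'= f\, \&\, \true]I \subseteq I$. Combining each reverse inclusion with the corresponding forward inclusion supplied by Proposition~\ref{P:inv-prop} delivers the two equalities in (2) and (3), while (1) is equivalent to either by that same proposition.

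The argument is essentially routine once the ``reflexivity'' $s \in \gamma^f\, s$ is in hand, so there is no substantive obstacle: the only subtlety is bookkeeping around Picard--Lindel\"of, namely ensuring that the interval $T$ over which invariance is formulated is the interval of existence (or a sub-interval containing $0$) supplied by the theorem, so that $\flow_s$ really is a member of $\Sols\, f\, T\, s$.
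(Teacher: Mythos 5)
Your proposal is correct and matches the paper's own (very terse) justification, which reduces (1)$\leftrightarrow$(2)$\leftrightarrow$(3) to Proposition~\ref{P:inv-prop} and then notes that ``the identities (2) and (3) hold because $0\in T$'' --- precisely your observation that Lipschitz continuity yields, via Picard--Lindel\"of, a solution through $s$ with $X\,0=s$, hence $s\in\gamma^f\,s$ and the two reverse inclusions. You have merely spelled out the details that the paper leaves implicit.
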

The identities (2) and (3) hold because $0\in T$.

Next we revisit the bouncing ball example from
Section~\ref{sec:hybrid-store} to illustrate our second work flow
that reasons with differential invariants. Once again we give detailed
mathematical calculations to indicate the kind of mathematical
reasoning involved. A verification with Isabelle, which is much more
automatic, can be found in Example~\ref{ex:bouncing-ball-inv}.

\begin{example}[Bouncing ball with differential
  invariant]\label{ex:ball-inv}
  We can avoid solving the system of ODEs in
  Example~\ref{ex:ball} using a differential invariant to show that
\begin{equation*}
  I\le |{x'=f\, \&\, G}]I
\end{equation*}
for the loop invariant $I$ and vector field
$f\, (s_x,s_v)^T = (s_v,-g)^T$.  The most natural candidate for a
differential invariant is of course energy conservation.  Cancelling
the mass, we use
\begin{equation*}
I_d = \left(\lambda s.\ \frac{1}{2} s_v^2=g (h-s_x)\right).
\end{equation*}
We now apply our procedure for reasoning with differential
invariants.
\begin{enumerate} 
\item We use Proposition \ref{P:invrules} with
  $\mu\, s = \frac{1}{2}s_v^2$ and $\nu\, s = g(h- s_x)$ to check that
  $I_d$ is indeed an invariant.  We thus need to show that
  $(\mu\circ X)' =(\nu\circ X)'$ for all $X\in \Sols\, f\, T\, s$,
  which unfolds to
\begin{equation*}
\left(\frac{1}{2}(X\, t\, v)^2\right)' = g (h - X\, t\, x)',
\end{equation*}
because $s= X\, t$ and therefore $s_v = X\, t\, v$ and $s_x = X\, t\, x$.
And indeed, 
\begin{align*}
\left(\frac{1}{2}(X\, t\, v)^2\right)' &= (X\, t\, v) (X'\, t\, v) =
  (X\, t\, v) (f\, (X\,t)\, v) = -(X\, t\, v)g\\
&= -g(f\, (X\,t)\, x) =
  -g(X'\, t\, x) = \left( g (h - X\, t\, x)\right)'.
\end{align*}
By Proposition~\ref{P:invrules}(1), $I_d$ is thus an invariant for $f$
along $\reals^V$.  Proposition \ref{P:inv-prop}(3) and
Lemma~\ref{P:inv-props2} then
imply that
\begin{equation*}
 I_d \le |{x'=f\ \&\ G}]  I_d.
\end{equation*}

\item It remains to show that $I\le I_d$ and $|{x'=f\ \&\ G}] I_d \le
  |{x'=f\ \&\ G}]I$. 
\begin{itemize}
\item The first inequality is trivial. 
\item For the second one, we calculate
\begin{equation*}
  (G \cdot I_d)\, s = \left(0\le s_x \land \frac{1}{2}s_v^2= g(h - s_x)\right) = I\, s.
\end{equation*}
By Corollary~\ref{P:wlpprop-gen2},  therefore, 
\begin{equation*}
|{x'=f\ \&\ G}] I_d = |{x'=f\ \&\ G}](G\cdot I_d) = |{x'=f\ \&\ G}]I.
\end{equation*}
\end{itemize}
\end{enumerate}
This shows that $I\le |{x'=f\, \&\, G}]I$. The remaining proof of
$P\leq|\mathsf{Ball}]Q$ is the same as in Example~\ref{ex:ball}. 
\qed
\end{example}

This example shows that one can reason about invariants of evolution
commands in a natural mathematical style as it can be found in
textbooks on differential equations~\cite{Arnold,Hirsch09,Teschl12}.
By contrast, $\dL$ relies on syntactic substitution-based reasoning in
the term algebra of differential rings~\cite{Platzer12} to check
invariants, and complex domain-specific inference rules to manipulate
them.  The following section shows that we can derive semantic
variants of most of the $\dL$ inference rules for those who like this
style of reasoning, see~\cite{FosterGMS21} for a complete list.

Next, we briefly specialise our approach to $\dL$-invariants, the
invariants sets used in dynamical systems theory and those in
(semi)group theory. We assume a setting where global flows exist and
indices $U$ can be dropped.

\begin{corollary}\label{P:dl-invset}
  Let $f:S\to S$ be Lipschitz continuous. Then $I:S\to \bools$ is a
  $\dL$-invariant for $x'=f\, \&\, \top$ if and only if $I$ is an
  invariant set for $\flow^f$.
\end{corollary}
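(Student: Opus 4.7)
The plan is to unfold both sides of the equivalence to elementary statements about points reachable from $s\in I$, and then collapse them into a single statement by invoking Picard-Lindel\"of's uniqueness under the Lipschitz hypothesis.

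First I would unfold the $\dL$-invariant condition. By definition, $I$ is a $\dL$-invariant for $x'=f\, \&\, \top$ means $I\le |x'=f\, \&\, \top]I$. By the equivalence of (1) and (3) in Proposition~\ref{P:inv-prop}, this is the same as $(\gamma^f_\top)^\dagger\, I \subseteq I$, i.e. $\gamma^f_\top\, s \subseteq I$ for every $s\in I$. Spelling this out via Lemma~\ref{P:gorbital}: for all $s\in I$, all $X\in \Sols\, f\, T\, s$, and all $t\in T$, one has $X\, t\in I$.

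Next I would unfold the invariant-set condition. From the definition recalled in Section~\ref{sec:differential-invariants}, $I$ is an invariant set for $\flow^f$ iff $\gamma^{\flow^f}\, s \subseteq I$ for all $s\in I$, which by the definition of $\gamma^\flow$ in Section~\ref{sec:ODE} unfolds to $\flow^f_s\, t\in I$ for all $s\in I$ and all $t\in T$. The bridge between the two conditions is then the observation that Lipschitz continuity of $f$ yields, via Theorem~\ref{P:picard-lindeloef}, a \emph{unique} solution to the IVP $(f,s)$; in the global-flow setting assumed in the paragraph immediately before the statement of the corollary, this unique solution is exactly $\flow^f_s$, so $\Sols\, f\, T\, s = \{\flow^f_s\}$, and the outer quantification over $X$ in the unfolded $\dL$-invariant condition collapses, yielding the invariant-set condition verbatim.

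The main obstacle I anticipate is not a mathematical one but the bookkeeping of time intervals: one must ensure that the interval $T$ implicit in the definition of $\gamma^f_\top$ coincides with the domain on which $\flow^f$ is defined in the invariant-set definition, since Picard-Lindel\"of a priori only supplies a local flow. The global-flow hypothesis stated just before this specialisation is precisely what makes both intervals agree (e.g. $T=\reals$), after which the equivalence is immediate; alternatively one could read the statement relative to the maximal interval of existence and obtain the same reduction.
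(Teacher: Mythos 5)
Your proposal is correct and follows essentially the same route as the paper's proof: both reduce the two conditions to $(\gamma^f_\top)^\dagger I \subseteq I$ and $(\gamma^{\flow})^\dagger I \subseteq I$ respectively (via Proposition~\ref{P:inv-prop} and the definition of invariant sets) and then identify them using $\Sols\, f\, T\, s = \{\flow^f_s\}$ in the Lipschitz case. Your extra remark on the agreement of time intervals is a reasonable clarification of the global-flow assumption the paper makes just before stating the corollary, but it does not change the argument.
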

\begin{proof}
  It is easy to check that
  $(\forall s\in I.\ I\, s \rightarrow \gamma^\flow\, s \subseteq I)
  \leftrightarrow (\gamma^\flow)^\dagger I \subseteq I$.
  The claim then follows from Proposition~\ref{P:inv-prop}. In the
  Lipschitz continuous case, of course,
  $\Sols\, f\, T\, s=\{\flow^f\}$.
\end{proof}

It remains to point out that the difference between the definition of
invariant sets for dynamical systems and that for (semi)group actions
is merely notational: In group theory, an invariant set $I$ of a
(semi)group action $\flow:T\to S\to S$ satisfies
$T\cdot I\subseteq I$, where
$T\cdot I= \{\flow\, t\, s \mid t\in T\land s\in I\}$. In the presence
of a unit, therefore $T\cdot I = I$. Yet of course
$(\gamma^\flow)^\dagger\, I = \{\flow\, t\, s \mid t\in T\land s\in
I\}$ as well.

At then end of this section we summarise the two main workflows
presented. Both use the standard laws for predicate transformer
algebras for automating verification condition generation with respect
to the structural part of hybrid programs. For straight-line programs,
this requires only equational reasoning and can be dealt with by
Isabelle's simplifiers. The remaining verification conditions for
basic commands---evolution and assignment commands---are generated by
equational reasoning in the concrete semantics of the hybrid program
store.  In fact, only this concrete semantics had to be added to a
standard Isabelle verification component to make our verification
components work.

The verification conditions generated are then at the level of
reasoning with functions over $\reals^n$, and in some cases in linear
algebra~\cite{Munive20}. At this level, by contrast with $\dL$, we do
not require any domain-specific inference rules and can rely on
Isabelle's support for semantic reason about the hybrid dynamics
within its higher-order logic, an approach that has allowed us to
verify a large number of benchmark examples~\cite{MitschMJZWZ20}.  Yet
our approach is versatile enough to derive inference rules in the
style of $\dL$, as the following section shows.

%%%%%%%%%%%%%%%%%%%%%%%%%%%%%%%%%%%%

\section{Derivation of  $\dL$ Inference Rules}\label{sec:dL}

As a proof of concept, we derive semantic variants of some axioms and
inference rules of $\dL$, thus proving their soundness with respect to
our semantics. The first one introduces solutions of IVPs with
constant vector fields~\cite{BohrerRVVP17}. It is a trivial instance
of Proposition~\ref{P:wlpprop} with $f = \lambda s.\ c$ for some
$c\in \reals$. Such vector fields are Lipschitz continuous; their
flows are $\flow\, t\, s = s + ct$.  Hence
\begin{equation}
|{x'= (\lambda s.\ c)\ \&\ G}] Q  = \lambda s\in S.\ \forall
t\in T.\ (\forall \tau\le t.\ G\, (s + c\tau))\rightarrow Q\, (s +
ct). \tag{DS}\label{eq:DS}
\end{equation}
For a second $\dL$ inference rule we simply rewrite the $\wlp$ in
Proposition~\ref{P:wlpprop} as a Hoare-style inference rule.
\begin{lemma}\label{P:solve} Let $S\subseteq\reals^V$ and
  $T=\reals$. Let $\flow:T\to S\to S$ be the flow for the Lipschitz
  continuous vector field $f:S\to S$, and
  $G,Q:S\to\bools$. Then
\begin{equation}
\inferrule*
{\forall s\in S.\ P\, s\rightarrow (\forall t\in T.\ 
  (\forall \tau\le t.\ G (\flow_s\, \tau))\rightarrow Q\, (\flow_s\, t))}
{P\leq |{x'=f\, \&\, G}] Q}\label{eq:dSolve}\tag{dSolve}
\end{equation}
\end{lemma}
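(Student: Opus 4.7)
The plan is to derive this inference rule as a direct pointwise unfolding of Lemma~\ref{P:wlpprop-var} specialised to $U = T = \reals$. Given the setup, $\flow$ is a global flow (since $T = \reals$ and $f$ is Lipschitz continuous, Theorem~\ref{P:picard-lindeloef} together with the assumption that $T$ is all of $\reals$ guarantees the unique trajectories $\flow_s$ are defined on all of $T$), so the hypotheses of Lemma~\ref{P:wlpprop-var} are satisfied.

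First I would apply Lemma~\ref{P:wlpprop-var} to rewrite the weakest liberal precondition as
\begin{equation*}
|x'=f\, \&\, G] Q = \lambda s\in S.\ \forall t\in T.\ (\forall \tau \in {\downarrow} t.\ G\, (\flow_s\, \tau)) \rightarrow Q\, (\flow_s\, t).
\end{equation*}
With $T = \reals$, the set ${\downarrow} t = \{\tau \in T \mid \tau \le t\}$ is simply $\{\tau \mid \tau \le t\}$, so the inner bounded quantifier matches the $\forall \tau \le t$ appearing in the premise of (\ref{eq:dSolve}).

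Next I would unfold the pointwise order on predicates: for any $P, R : S \to \bools$, we have $P \le R$ iff $\forall s \in S.\ P\, s \to R\, s$, viewing predicates as subidentities in $\rel\, S$ or equivalently as elements of the powerset lattice. Instantiating with $R = |x'=f\, \&\, G] Q$ and substituting the formula above, the conclusion $P \le |x'=f\, \&\, G] Q$ becomes syntactically identical to the premise of the rule. So the rule is in fact derivable as an equivalence, not merely as an implication.

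There is no real obstacle: the lemma is essentially a repackaging of Lemma~\ref{P:wlpprop-var} into Hoare-style inference-rule notation. The only care needed is to verify that the ${\downarrow} t$ taken inside the index set $T = \reals$ coincides with the unrestricted $\forall \tau \le t$ used in the premise, which is immediate, and to confirm that the Lipschitz-continuity assumption is invoked only to license the existence and uniqueness of the global flow $\flow$ feeding into Lemma~\ref{P:wlpprop-var}.
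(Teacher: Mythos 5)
Your proposal is correct and matches the paper's own (very brief) argument: the paper states that (\ref{eq:dSolve}) is obtained by ``simply rewriting the $\wlp$ in Proposition~\ref{P:wlpprop} as a Hoare-style inference rule,'' which is exactly your pointwise unfolding of Lemma~\ref{P:wlpprop-var} with $U=T=\reals$ together with the observation that ${\downarrow}t$ then coincides with $\{\tau\mid \tau\le t\}$. Your additional remark that the rule is in fact an equivalence is also consistent with this reading.
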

To apply this rule in our setting, the procedure in Section~\ref{sec:hybrid-store}
must be followed.

Next we derive five semantic counterparts of the $\dL$ axioms 
and inference rules for differential invariants in the setting of
Section~\ref{sec:differential-invariants}. The 
\emph{differential cut} axiom (\ref{eq:DC}) and rule (\ref{eq:dC}),
\emph{differential weakening}, (\ref{eq:DW}) and (\ref{eq:dW}),
and the the \emph{differential induction} rule (\ref{eq:dI}). These rules 
are typically applied backwards as follows: $\mathit{dC}$
introduces an invariant. Its left premise is discharged via
$\mathit{dI}$, Proposition~\ref{P:inv-prop} and logical reasoning,
while its right premise is discharged via $\mathit{dW}$.  Note that
the conclusions of all these rules are semantically equivalent to
Hoare triples. Verification examples using these rules and the $\dL$
approach can be found in our Isabelle components.

\begin{lemma}\label{P:dcut} Let $P,G,I,Q:S\to\bools$, $T\subseteq \reals$ and 
  $f:S\to S$ be a continuous vector field. Then, with $\eta_S$ the unit of the power set monad,
\begin{gather}
|{x'=f\, \&\, G}]  I = \eta_S \to
|{x'= f\, \&\, (\lambda\, s.\,  G\, s \land I\, s)}] Q = |{x'=f\, \&\, G}] Q,\tag{DC}\label{eq:DC}\\
\inferrule*
{P\leq |{x'=f\, \&\, G}]  I\\
P\leq |{x'= f\, \&\, (\lambda\, s.\ G\, s \land I\, s)}] Q}
{P\leq |{x'=f\, \&\, G}] Q}\tag{dC}\label{eq:dC}\\
  |{x'=f\ \&\ G}] (\lambda\, s.\ G\, s\to Q\, s) = |{x'=f\ \&\ G}] Q,\tag{DW}\label{eq:DW}\\
  \inferrule*{G\leq Q}{P\leq |{x'=f\ \&\ G}] Q}\tag{dW}\label{eq:dW}
\end{gather}
Finally, if $I$ is a
  differential invariant for $f$
  along $T$, then 
\begin{equation}
\inferrule*
{P\leq I\qquad I\leq Q}
{P\leq |{x'=f\ \&\ G}] Q}\tag{dI}\label{eq:dI}
\end{equation}
\end{lemma}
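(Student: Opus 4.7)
The plan is to dispatch each of the five assertions by directly unpacking the semantic $\wlp$-formula from Proposition~\ref{P:wlpprop-gen}, together with Corollary~\ref{P:wlpprop-gen2} and the invariant characterisation in Proposition~\ref{P:inv-prop}. Each claim should reduce to a short chain of rewriting steps once the right identity has been unfolded.

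For (\ref{eq:DC}) and (\ref{eq:dC}), first read the hypothesis $|x'=f\,\&\,G]I = \eta_S$ via Proposition~\ref{P:wlpprop-gen} as: for every state $s$, every $X\in\Sols\, f\, T\, s$ and every $t\in T$, $\Pow\, X\, {\downarrow}t \subseteq G$ implies $\Pow\, X\, {\downarrow}t \subseteq I$. Thus any $G$-admissible solution-time pair is automatically $(G\land I)$-admissible, and the reverse inclusion is immediate; the two guards produce the same guarded orbital. Substituting this equality of sets into the $\wlp$-formula yields (\ref{eq:DC}); the rule (\ref{eq:dC}) is its pointwise consequence, obtained by first applying (\ref{eq:DC}) to replace $G$ by $G\land I$ in the right premise and then conjoining with the left premise.

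For (\ref{eq:DW}) and (\ref{eq:dW}), Corollary~\ref{P:wlpprop-gen2} gives $|x'=f\,\&\,G]R = |x'=f\,\&\,G](G\cdot R)$ for any $R$. Instantiating $R = (\lambda s.\ G\, s \to Q\, s)$ and using the boolean identity $G\cdot (G\to Q) = G\cdot Q$, followed by a second application of Corollary~\ref{P:wlpprop-gen2} in the opposite direction, yields (\ref{eq:DW}). For (\ref{eq:dW}), the hypothesis $G\le Q$ together with the $\wlp$-formula shows that any $\Pow\, X\, {\downarrow}t \subseteq G$ gives $\Pow\, X\, {\downarrow}t \subseteq Q$; hence $|x'=f\,\&\,G]Q = \eta_S$, which dominates any $P$.

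For (\ref{eq:dI}), Proposition~\ref{P:inv-prop}(3) converts the invariant hypothesis into $I \le |x'=f\,\&\,G]I$. Since the box operator $|x'=f\,\&\,G](-)$ is monotone (it preserves infs of the powerset lattice, as recorded in Section~\ref{sec:pt-monad}), the second premise $I\le Q$ lifts to $|x'=f\,\&\,G]I \le |x'=f\,\&\,G]Q$, and transitivity of $\le$ gives the chain $P \le I \le |x'=f\,\&\,G]I \le |x'=f\,\&\,G]Q$. The main obstacle, if any, lies in (\ref{eq:DC}): one has to check carefully that strengthening the guard preserves the entire \emph{prefix} $\Pow\, X\, {\downarrow}t$, not merely the endpoint $X\, t$. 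Once this prefix-closure argument is made precise, all remaining steps are routine equational manipulation.
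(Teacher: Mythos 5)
Your proposal is correct and matches the paper's route: the paper derives (\ref{eq:dI}) exactly as you do, from Proposition~\ref{P:inv-prop}(3), isotonicity of boxes and transitivity, and the remaining four items reduce to the unfolding of Proposition~\ref{P:wlpprop-gen} and Corollary~\ref{P:wlpprop-gen2} that you carry out, including the key observation that the hypothesis of (\ref{eq:DC}) forces the $G$- and $(G\land I)$-guarded orbitals to coincide. The only phrasing to tighten is in (\ref{eq:dC}): one cannot literally \emph{apply} (\ref{eq:DC}), whose antecedent is the global identity $|x'=f\,\&\,G]I=\eta_S$, but---as your ``pointwise consequence'' remark indicates---the same orbital-equality argument localises to the states satisfying $P$, where the first premise supplies exactly the needed implication.
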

Axiom (\ref{eq:DC}) and rule (\ref{eq:dC}) introduce differential 
invariants in guards of evolution commands. Axiom and rule (\ref{eq:DW}) and
(\ref{eq:dW}) summarise the fact that if a guard is strong enough to 
imply a postcondition, then no invariant or solution needs to be found.  
Finally, the differential induction rule follows from 
Proposition~\ref{P:inv-prop}(3), transitivity and isotonicity of boxes.

A differential ghost rule~\cite{PlatzerT18} (dG), and sometimes a
differential effect axiom~\cite{Platzer17} have also been proposed for
reasoning with invariants in $\dL$. Our semantics approach has so far
no need for these~\cite{MitschMJZWZ20}---we do not anticipate any
reason why we should not be able to freely introduce ghost variables
for the continuous dynamics as we have so far done for the discrete
one using Isabelle's higher-order logic---but see~\cite{FosterGMS21}
for a derivation of (dG) within our semantic framework.

%%%%%%%%%%%%%%%%%%%%%%%%%%%%%%%%%%%%

\section{Isabelle Components for $\MKA$ and Predicate Transformers}\label{sec:isa-pt}

The entire mathematical development of $\MKA$ in
Section~\ref{sec:KA}-\ref{sec:mka-pt} has been formalised with
Isabelle~\cite{afp:ka,afp:kad}. Verification components for Isabelle
and the relational store model in Section~\ref{sec:discrete-store}
have been developed, too~\cite{GomesS16,afp:vericomp}, using the
shallow embedding approach discussed in Section~\ref{sec:intro} and
\ref{sec:discrete-store}.  Predicate transformers \`a la Back and von
Wright have been formalised previously in Isabelle by
Preoteasa~\cite{Preoteasa11,Preoteasa11a}. Our alternative
formalisation emphasises the quantalic structure of
transformers~\cite{afp:quantales,afp:transem}, as in
Section~\ref{sec:pt-backvwright}, and we have added a third component
based on quantaloids~\cite{afp:transem}. It is based on a
formalisation of the powerset monad~\cite{afp:transem}, as outlined in
Section~\ref{sec:pt-monad}.  Our formalisation is compositional in
that all three approaches to predicate transformers can be combined
with relational and state transformer semantics and different models
of the (hybrid) program store, as shown in Figure~\ref{fig:framework}.

This section summarises the Isabelle components for predicate
transformers and the verification component based on $\MKA$. More
detailed information can be found in the proof documents for these
components~\cite{afp:vericomp,afp:transem}.

The $\MKA$ component is integrated into the Kleene algebra hierarchy
that formalises variants of Kleene algebras~\cite{afp:ka} and modal
Kleene algebras~\cite{afp:kad}, as outlined in Section~\ref{sec:KA}
and~\ref{sec:MKA}.  In these mathematical components, algebras are
formalised as type classes, their models via instantiation and
interpretation statements. For Kleene algebras, many computationally
interesting models have been formalised; for $\MKA$ only the
relational model is present in the Archive of Formal Proofs.  The
state transformer model has been formalised for quantales in a
different component~\cite{afp:transem}.

Instantiation and interpretation statements have several purposes in
Isabelle. They make algebraic facts available in all models, establish
soundness of algebraic hierarchies and ultimately make the axiomatic
approaches consistent with respect to Isabelle's small trustworthy
core. Finally, they unify developments of multiple concrete semantics.

In our $\MKA$-based verification components~\cite{afp:vericomp},
program syntax is absent and semantic illusions of program syntax are
provided in the concrete program semantics, as outlined in
Section~\ref{sec:discrete-store}. Consequently, verification
conditions for the control structure of programs are generated within
the algebra; those for assignments in the concrete store semantics.
We currently model stores simply as functions from strings
representing variables to values of arbitrary type. Expressions are
simulated by functions from stores to values, as outlined in
Section~\ref{sec:discrete-store}; stores with poly-typed values are
modelled via sum-types. An extension to verification components for
hybrid programs is described in the following sections.

A second component is based on predicate transformers \`a la Back and
von Wright~\cite{BackW98}, for which we have built special purpose
components with advanced features for orderings and
lattices~\cite{afp:order} and for quantales~\cite{afp:quantales}.
These structures are once again formalised as type classes.  Predicate
transformers, however, are modelled as global functions that may have
different source and target types. Isabelle's simple type system can
infer most general types for definitions. These can be associated with
predicate transformers by sort constraints; definitions can often be
declared in the point-free style of functional programming. This makes
the formalisation of quantaloids of transformers with partial
compositions straightforward. Mono-typed transformer algebras are
obtained from these via subtyping. They are linked with quantales and
Kleene algebras by interpretation or instantiation.

Isabelle's type system is too weak for a deep embedding of general
categorical concepts, but formalising instances such as the powerset
monad, its Kleisli category and Eilenberg-Moore algebras is
straightforward. We have formalised the isomorphisms and dualities
between relations, state transformers and the four predicate
transformers corresponding to backward and forward boxes and diamonds
in this setting. Using these dualities to transport theorems
automatically requires Isabelle's transfer package, which is ongoing
work.

We have created a second verification component for hybrid systems
based on Back and von Wright's approach, using the monadic
transformers to obtain a concrete semantics.  Finally, we have once again
restricted the categorical approach to the mono-typed case in a third
component. Via subtyping we can then show that the categorical
transformers form quantales, and more specifically $\MKA$s.
Everything Isabelle knows about $\MKA$ is then available in this
instance.

%%%%%%%%%%%%%%%%%%%%%%%%%%%%%%%%%%%%

\section{Isabelle Components for ODEs and Orbits}\label{sec:isa-ODE}

This section and the two following ones describe the formalisation of
the material in Sections~\ref{sec:ODE}-\ref{sec:dL} in Isabelle, from
mathematical components for ODEs and orbits to verification components
for hybrid programs based on (local) flows, differential invariants
and $\dL$-style inference rules.

We begin with summarising Immler and H\"olzl's formalisation of the
Picard-Lindel\"of theorem based on the Isabelle hierarchy for analysis
and ordinary differential
equations~\cite{HolzlIH13,Immler12,ImmlerH12a,ImmlerT19}. We have
adapted their results to show that unique solutions to IVPs for
autonomous systems of ODEs guaranteed by this theorem satisfy the
local flow conditions, as discussed in previous sections.

H\"olzl and Immler have proved the Picard-Lindel\"of theorem for
time-dependent vector fields of type
${\isachardoublequoteopen}\mathit{real}\ {\isasymRightarrow}\
{\isacharparenleft}{\isacharprime}a{\isacharcolon}{\isacharcolon}{\isacharbraceleft}heine{\isacharunderscore}borel{\isacharcomma
}banach{\isacharbraceright}{\isacharparenright}\ {\isasymRightarrow}\
{\isacharprime}a{\isachardoublequoteclose}$~\cite{ImmlerH12a}.  They
have called their theorem
$\mathit{unique}{\isacharunderscore}\mathit{solution}$ and have
formalised it within a locale called
$\isa{unique{\isacharunderscore}on{\isacharunderscore
  }bounded{\isacharunderscore}closed}$ to bundle the assumptions for
the local existence of unique solutions within a closed interval in
$\reals$. They have specialised and hence extended this locale in
various ways.

Our approach builds on top of their extension
$\isa{ll{\isacharunderscore}on{\isacharunderscore
  }open{\isacharunderscore}it}$ that bundles more or less the
conditions of Theorem~\ref{P:picard-lindeloef}, but for the
time-dependent case. In our formalisation, we add the condition
$t_0\in T$ to have this parameter available in the following
developments. Thus, we have generated the following variant.

\begin{isabellebody}
\isanewline
\isacommand{locale}\isamarkupfalse%
\ picard{\isacharunderscore}lindeloef\ {\isacharequal}\isanewline
\ \ \isakeyword{fixes}\
f{\isacharcolon}{\isacharcolon}{\isachardoublequoteopen}real\
{\isasymRightarrow}\
{\isacharparenleft}{\isacharprime}a{\isacharcolon}{\isacharcolon}{\isacharbraceleft}heine{\isacharunderscore}borel{\isacharcomma
}banach{\isacharbraceright}{\isacharparenright}\ {\isasymRightarrow}\
{\isacharprime}a{\isachardoublequoteclose}\isanewline
\ \ \ \  \isakeyword{and}\
T{\isacharcolon}{\isacharcolon}{\isachardoublequoteopen}real\
set{\isachardoublequoteclose}\isanewline
\ \ \ \ \isakeyword{and}\
S{\isacharcolon}{\isacharcolon}{\isachardoublequoteopen}{\isacharprime}a\
set{\isachardoublequoteclose}\isanewline
\ \ \ \ \isakeyword{and}\ t\isactrlsub {\isadigit{0}}{\isacharcolon}{\isacharcolon}real\isanewline
\ \ \isakeyword{assumes}\ open{\isacharunderscore}domain{\isacharcolon}\ {\isachardoublequoteopen}open\ T{\isachardoublequoteclose}\ {\isachardoublequoteopen}open\ S{\isachardoublequoteclose}\isanewline 
\ \ \ \ \isakeyword{and}\ interval{\isacharunderscore}time{\isacharcolon}\ {\isachardoublequoteopen}is{\isacharunderscore}interval\ T{\isachardoublequoteclose}\isanewline
\ \ \ \ \isakeyword{and}\ init{\isacharunderscore}time{\isacharcolon}\ {\isachardoublequoteopen}t\isactrlsub {\isadigit{0}}\ {\isasymin}\ T{\isachardoublequoteclose}\isanewline
\ \ \ \ \isakeyword{and}\ cont{\isacharunderscore}vec{\isacharunderscore}field{\isacharcolon}\ {\isachardoublequoteopen}{\isasymforall}s\ {\isasymin}\ S{\isachardot}\ continuous{\isacharunderscore}on\ T\ {\isacharparenleft}{\isasymlambda}t{\isachardot}\ f\ t\ s{\isacharparenright}{\isachardoublequoteclose}\isanewline
\ \ \ \ \isakeyword{and}\ lipschitz{\isacharunderscore}vec{\isacharunderscore}field{\isacharcolon}\ {\isachardoublequoteopen}local{\isacharunderscore}lipschitz\ T\ S\ f{\isachardoublequoteclose}\isanewline
\isakeyword{begin}
\isanewline
\isanewline
\isacommand{sublocale}\isamarkupfalse%
\ ll{\isacharunderscore}on{\isacharunderscore}open{\isacharunderscore}it\ T\ f\ S\ t\isactrlsub {\isadigit{0}}
\isanewline
\ \ $\langle \isa{proof}\rangle$
\isanewline

\isacommand{lemma}\isamarkupfalse%
\ unique{\isacharunderscore}solution{\isacharcolon}\isanewline
\ \ \isakeyword{assumes}\ xivp{\isacharcolon}\
{\isachardoublequoteopen}D\ X\ {\isacharequal}\
{\isacharparenleft}{\isasymlambda}t{\isachardot}\ f\ t\
{\isacharparenleft}X\ t{\isacharparenright}{\isacharparenright}\ on\
{\isacharbraceleft}t\isactrlsub
{\isadigit{0}}{\isacharminus}{\isacharminus}t{\isacharbraceright}{\isachardoublequoteclose}\
{\isachardoublequoteopen}X\ t\isactrlsub {\isadigit{0}}\
{\isacharequal}\ s{\isachardoublequoteclose}\
{\isachardoublequoteopen}X\ {\isasymin}\
{\isacharbraceleft}t\isactrlsub
{\isadigit{0}}{\isacharminus}{\isacharminus}t{\isacharbraceright}\
{\isasymrightarrow}\ S{\isachardoublequoteclose}\isanewline
\ \ \ \  \isakeyword{and}\ {\isachardoublequoteopen}t\ {\isasymin}\ T{\isachardoublequoteclose}\isanewline
\ \ \ \ \isakeyword{and}\ yivp{\isacharcolon}\
{\isachardoublequoteopen}D\ Y\ {\isacharequal}\
{\isacharparenleft}{\isasymlambda}t{\isachardot}\ f\ t\
{\isacharparenleft}Y\ t{\isacharparenright}{\isacharparenright}\ on\
{\isacharbraceleft}t\isactrlsub
{\isadigit{0}}{\isacharminus}{\isacharminus}t{\isacharbraceright}{\isachardoublequoteclose}\
{\isachardoublequoteopen}Y\ t\isactrlsub {\isadigit{0}}\
{\isacharequal}\ s{\isachardoublequoteclose}\
{\isachardoublequoteopen}Y\ {\isasymin}\
{\isacharbraceleft}t\isactrlsub
{\isadigit{0}}{\isacharminus}{\isacharminus}t{\isacharbraceright}\
{\isasymrightarrow}\ S{\isachardoublequoteclose}\isanewline
\ \ \ \  \isakeyword{and}\ {\isachardoublequoteopen}s\ {\isasymin}\ S{\isachardoublequoteclose}\ \isanewline
\ \ \isakeyword{shows}\ {\isachardoublequoteopen}X\ t\ {\isacharequal}\ Y\ t{\isachardoublequoteclose}
\isanewline
\ \ $\langle \isa{proof}\rangle$
\isanewline

\isacommand{end}
\isanewline
\end{isabellebody}

\noindent The locale declaration lists the assumptions of the
Picard-Lindel\"of theorem: the vector field $f$---which is still
time-dependent---is defined on an open time interval $T$ that contains
the initial time $t_0$, and an open subset $S$ of the state space. The
vector field $f$ is continuous in time and, for each
$(t,s)\in T\times S$, Lipschitz continuous on a closed subset of
$T\times S$ around $(t,s)$. The sublocale statement shows that these
assumptions imply those of the locale
$\isa{ll{\isacharunderscore}on{\isacharunderscore
  }open{\isacharunderscore}it}$.
Lemma $\mathit{unique}{\isacharunderscore}\mathit{solution}$ ensures
that the Picard-Lindel\"of theorem is derivable within this locale. The
notation $D\, X$ stands for $X'$, and $g\in A\to B$ indicates that
function $g$ maps from the set $A$ into the set $B$, as opposed to the
type of $g$, which can be larger. The notation
${\isacharbraceleft}t\isactrlsub
{\isadigit{0}}{\isacharminus}{\isacharminus}t{\isacharbraceright}$
indicates the set of real numbers between
$t\isactrlsub {\isadigit{0}}$ and $t$ (including both), where $t$ may
be above or below $t_0$.  The formalisation of the Picard-Lindel\"of
theorem comprises a formal definition of solutions to IVPs of system
of ODEs in Isabelle. As an abbreviation, we have defined the set
$\Sols\, f\, T\, s$ of Section~\ref{sec:generalisation} with the
additional requirement that $X\in T\to S$.

\begin{isabellebody}
\isanewline
\isacommand{definition}\isamarkupfalse%
\ ivp{\isacharunderscore}sols\ {\isacharcolon}{\isacharcolon}\ {\isachardoublequoteopen}{\isacharparenleft}real\ {\isasymRightarrow}\ {\isacharprime}a\ {\isasymRightarrow}\ {\isacharparenleft}{\isacharprime}a {\isacharcolon}{\isacharcolon}\ real{\isacharunderscore}normed{\isacharunderscore }vector{\isacharparenright}{\isacharparenright}\ {\isasymRightarrow}\ real\  set\ {\isasymRightarrow}\  {\isacharprime}a\ set\ {\isasymRightarrow} \isanewline
\ \ real\ {\isasymRightarrow}\ {\isacharprime}a\ {\isasymRightarrow}\ {\isacharparenleft}real\ {\isasymRightarrow}\ {\isacharprime}a{\isacharparenright}\ set {\isachardoublequoteclose}\ {\isacharparenleft}{\isachardoublequoteopen}Sols{\isachardoublequoteclose}{\isacharparenright}\isanewline
\ \ \isakeyword{where}\ {\isachardoublequoteopen}Sols\ f\ T\ S\ t\isactrlsub {\isadigit{0}}\ s\ {\isacharequal}\ {\isacharbraceleft}X\ {\isacharbar}X{\isachardot}\ {\isacharparenleft}D\ X\ {\isacharequal}\ {\isacharparenleft}{\isasymlambda}t{\isachardot}\ f\ t\ {\isacharparenleft}X\ t{\isacharparenright}{\isacharparenright}\ on\ T{\isacharparenright}\ {\isasymand}\ X\ t\isactrlsub {\isadigit{0}}\ {\isacharequal}\ s\ {\isasymand}\ X\ {\isasymin}\ T\ {\isasymrightarrow}\ S{\isacharbraceright}{\isachardoublequoteclose}\isanewline
\end{isabellebody}

\noindent We restrict locale $\isa{picard-lindeloef}$ to
autonomous systems and to $t_0=0$, while introducing the variable
$\flow$ for the local flow of the vector field. In support of our 
open approach to hybrid program verification, this allows users to 
supply any characterisation of the flow that suits them best, as a 
successor paper illustrates~\cite{Munive20}.

\begin{isabellebody}
\isanewline\isacommand{locale}\isamarkupfalse%
\ local{\isacharunderscore}flow\ {\isacharequal}\ picard{\isacharunderscore}lindeloef\ {\isachardoublequoteopen}{\isacharparenleft}{\isasymlambda}\ t{\isachardot}\ f{\isacharparenright}{\isachardoublequoteclose}\ T\ S\ {\isadigit{0}}\ \isanewline
\ \ \isakeyword{for}\
f{\isacharcolon}{\isacharcolon}{\isachardoublequoteopen}{\isacharprime}a{\isacharcolon}{\isacharcolon}{\isacharbraceleft}heine{\isacharunderscore}borel{\isacharcomma
}banach{\isacharbraceright}\ {\isasymRightarrow}\
{\isacharprime}a{\isachardoublequoteclose}\isanewline
\ \ \ \ \isakeyword{and}\ T\ S\ L\ {\isacharplus}\isanewline
\ \ \isakeyword{fixes}\ {\isasymphi}\ {\isacharcolon}{\isacharcolon}\ {\isachardoublequoteopen}real\ {\isasymRightarrow}\ {\isacharprime}a\ {\isasymRightarrow}\ {\isacharprime}a{\isachardoublequoteclose}\isanewline
\ \ \isakeyword{assumes}\ ivp{\isacharcolon}\isanewline
\ \ \ \ {\isachardoublequoteopen}{\isasymAnd}\ t\ s{\isachardot}\ t\ {\isasymin}\ T\ {\isasymLongrightarrow}\ s\ {\isasymin}\ S\ {\isasymLongrightarrow}\ D\ {\isacharparenleft}{\isasymlambda}t{\isachardot}\ {\isasymphi}\ t\ s{\isacharparenright}\ {\isacharequal}\ {\isacharparenleft}{\isasymlambda}t{\isachardot}\ f\ {\isacharparenleft}{\isasymphi}\ t\ s{\isacharparenright}{\isacharparenright}\ on\ {\isacharbraceleft}{\isadigit{0}}{\isacharminus}{\isacharminus}t{\isacharbraceright}{\isachardoublequoteclose}\isanewline
\ \ \ \  {\isachardoublequoteopen}{\isasymAnd}\ s{\isachardot}\ s\ {\isasymin}\ S\ {\isasymLongrightarrow}\ {\isasymphi}\ {\isadigit{0}}\ s\ {\isacharequal}\ s{\isachardoublequoteclose}\isanewline
\ \ \ \  {\isachardoublequoteopen}{\isasymAnd}\ t\ s{\isachardot}\ t\ {\isasymin}\ T\ {\isasymLongrightarrow}\ s\ {\isasymin}\ S\ {\isasymLongrightarrow}\ {\isacharparenleft}{\isasymlambda}t{\isachardot}\ {\isasymphi}\ t\ s{\isacharparenright}\ {\isasymin}\ {\isacharbraceleft}{\isadigit{0}}{\isacharminus}{\isacharminus}t{\isacharbraceright}\ {\isasymrightarrow}\ S{\isachardoublequoteclose}\isanewline
\end{isabellebody}

\noindent The assumptions \isa{ivp} force $T$ to coincide with its largest subinterval (\isa{ex{\isacharunderscore}ivl}) where solutions exist (lemma \isa{ex{\isacharunderscore}ivl{\isacharunderscore}eq} below). Thus, $\flow$ is the unique solution on the whole of $T$
---and not only on its subsets ${\isacharbraceleft}{\isadigit{0}}{\isacharminus}{\isacharminus}t{\isacharbraceright}$ unlike $\isa{picard{\isacharunderscore}lindeloef}$ or 
$\isa{ll{\isacharunderscore}on{\isacharunderscore}open{\isacharunderscore}it}$. This 
allows users of the locale to choose $T$ as small as they wish.

\begin{isabellebody}
\isanewline
\isacommand{lemma}\isamarkupfalse%
\ ex{\isacharunderscore}ivl{\isacharunderscore}eq{\isacharcolon}\
 {\isachardoublequoteopen}s\ {\isasymin}\ S{\isachardoublequoteclose}
\ {\isasymLongrightarrow}\ 
{\isachardoublequoteopen}ex{\isacharunderscore}ivl\ s\ {\isacharequal}\ T{\isachardoublequoteclose}\isanewline
\ \ $\langle \isa{proof}\rangle$
\isanewline

\isacommand{lemma}\isamarkupfalse%
\ has{\isacharunderscore}vderiv{\isacharunderscore}on{\isacharunderscore }domain{\isacharcolon }\
 {\isachardoublequoteopen}s\ {\isasymin}\ S{\isachardoublequoteclose}
\ {\isasymLongrightarrow}\ 
 {\isachardoublequoteopen}D\ {\isacharparenleft}{\isasymlambda}t{\isachardot}\ {\isasymphi}\ t\ s{\isacharparenright}\ {\isacharequal}\ {\isacharparenleft}{\isasymlambda}t{\isachardot}\ f\ {\isacharparenleft}{\isasymphi}\ t\ s{\isacharparenright}{\isacharparenright}\ on\ T{\isachardoublequoteclose}\isanewline
\ \ $\langle \isa{proof}\rangle$
\isanewline

\isacommand{lemma}\isamarkupfalse%
\ in{\isacharunderscore}ivp{\isacharunderscore}sols{\isacharcolon}\ 
{\isachardoublequoteopen}s\ {\isasymin}\ S{\isachardoublequoteclose}
\ {\isasymLongrightarrow}\ 
 {\isachardoublequoteopen}{\isacharparenleft}{\isasymlambda}t{\isachardot}\ {\isasymphi}\ t\ s{\isacharparenright}\ {\isasymin}\ Sols\ {\isacharparenleft}{\isasymlambda}t{\isachardot}\ f{\isacharparenright}\ T\ S\ {\isadigit{0}}\ s{\isachardoublequoteclose}\isanewline
\ \ $\langle \isa{proof}\rangle$
\isanewline

\isacommand{lemma}\isamarkupfalse%
\ eq{\isacharunderscore}solution{\isacharcolon}\
 {\isachardoublequoteopen}X\ {\isasymin}\
Sols\
{\isacharparenleft}{\isasymlambda}t{\isachardot}\
f{\isacharparenright}\ T\ S\ {\isadigit{0}}\
s{\isachardoublequoteclose}
 {\isasymLongrightarrow}\ 
 {\isachardoublequoteopen}t\ {\isasymin}\
T{\isachardoublequoteclose}
\ {\isasymLongrightarrow}\ 
 {\isachardoublequoteopen}s\ {\isasymin}\ S{\isachardoublequoteclose}
\ {\isasymLongrightarrow}\ 
 {\isachardoublequoteopen}X\ t\ {\isacharequal}\ {\isasymphi}\ t\ s{\isachardoublequoteclose}\isanewline
\ \ $\langle \isa{proof}\rangle$\isanewline
\end{isabellebody}

\noindent Finally, in this locale we can prove that if the maximal
interval of existence $T$ equals $\reals$, then the flow $\flow$ is
global and hence a proper monoid action.

\begin{isabellebody}
\isanewline
\isacommand{lemma}\isamarkupfalse%
\
ivp{\isacharunderscore}sols{\isacharunderscore}collapse{\isacharcolon}\ 
 {\isachardoublequoteopen}T\ {\isacharequal}\
 UNIV{\isachardoublequoteclose} 
 {\isasymLongrightarrow}\ 
 s\ {\isasymin}\ {\isachardoublequoteopen}S{\isachardoublequoteclose}\ 
 {\isasymLongrightarrow}\ 
{\isachardoublequoteopen}Sols\ {\isacharparenleft}{\isasymlambda}t{\isachardot}\ f{\isacharparenright}\ T\ S\ {\isadigit{0}}\ s\ {\isacharequal}\ {\isacharbraceleft}{\isacharparenleft}{\isasymlambda}t{\isachardot}\ {\isasymphi}\ t\ s{\isacharparenright}{\isacharbraceright}{\isachardoublequoteclose}\isanewline
\ \ $\langle \isa{proof}\rangle$
\isanewline

\isacommand{lemma}\isamarkupfalse%
\ is{\isacharunderscore}monoid{\isacharunderscore }action{\isacharcolon}\isanewline
\ \ \isakeyword{assumes}\ {\isachardoublequoteopen}s\ {\isasymin}\
S{\isachardoublequoteclose}\isanewline
\ \ \ \  \isakeyword{and}\ {\isachardoublequoteopen}T\ {\isacharequal}\ UNIV{\isachardoublequoteclose}\isanewline
\ \ \isakeyword{shows}\ {\isachardoublequoteopen}{\isasymphi}\
{\isadigit{0}}\ s\ {\isacharequal}\
s{\isachardoublequoteclose}\isanewline
\ \ \ \  \isakeyword{and}\ {\isachardoublequoteopen}{\isasymphi}\ {\isacharparenleft}t\isactrlsub {\isadigit{1}}\ {\isacharplus}\ t\isactrlsub {\isadigit{2}}{\isacharparenright}\ s\ {\isacharequal}\ {\isasymphi}\ t\isactrlsub {\isadigit{1}}\ {\isacharparenleft}{\isasymphi}\ t\isactrlsub {\isadigit{2}}\ s{\isacharparenright}{\isachardoublequoteclose}\isanewline
\ \ $\langle \isa{proof}\rangle$\isanewline
\end{isabellebody}

\noindent We have not generated a locale for this case, as
the assumptions needed remain unchanged.  Locale
$\mathit{picard}{\isacharunderscore}\mathit{lindeloef}$ thus
guarantees the existence of unique solutions for IVPs of
time-dependent systems. Locale
$\mathit{local}{\isacharunderscore}\mathit{flow}$ specialises it to
autonomous systems with Lipschitz continuous vector fields and local
flows. It covers dynamical systems with global flows and thus the
verification of hybrid systems. This provides the basic Isabelle
infrastructure for formalising the concrete semantics for hybrid
systems with Lipschitz continuous vector fields from
Figure~\ref{fig:framework}.

Next we describe our formalisation of the orbits and orbitals from
Section~\ref{sec:generalisation}. These form the basis for our
verification components for continuous vector fields beyond the scope
of  the Picard-Lindel\"of  theorem, as shown in
Figure~\ref{fig:framework}. Yet we can instantiate all concepts to
settings where (local) flows exist. First, we have formalised the
$G$-guarded orbit $\gamma^X_{G}$ of $X$ along $T$, with
$\isa{down\ T\ t}$ standing for ${\downarrow}t$.

\begin{isabellebody}
\isanewline
\isacommand{definition}\isamarkupfalse%
\ g{\isacharunderscore}orbit\ {\isacharcolon}{\isacharcolon}\ {\isachardoublequoteopen}{\isacharparenleft}real\ {\isasymRightarrow}\ {\isacharprime}a{\isacharparenright}\ {\isasymRightarrow}\ {\isacharparenleft}{\isacharprime}a\ {\isasymRightarrow}\ bool{\isacharparenright}\ {\isasymRightarrow}\ real\ set\ {\isasymRightarrow}\ {\isacharprime}a\ set{\isachardoublequoteclose}\ {\isacharparenleft}{\isachardoublequoteopen}{\isasymgamma}{\isachardoublequoteclose}{\isacharparenright}\isanewline
\ \ \isakeyword{where}\ {\isachardoublequoteopen}{\isasymgamma}\ X\ G\ T\ {\isacharequal}\ {\isasymUnion}{\isacharbraceleft}{\isasymP}\ X\ {\isacharparenleft}down\ T\ t{\isacharparenright}\ {\isacharbar}t{\isachardot}\ {\isasymP}\ X\ {\isacharparenleft}down\ T\ t{\isacharparenright}\ {\isasymsubseteq}\ {\isacharbraceleft}s{\isachardot}\ G\ s{\isacharbraceright}{\isacharbraceright}{\isachardoublequoteclose}\isanewline
\isanewline
\isacommand{lemma}\isamarkupfalse%
\ g{\isacharunderscore}orbit{\isacharunderscore}eq{\isacharcolon}\ {\isachardoublequoteopen}{\isasymgamma}\ X\ G\ T\ {\isacharequal}\ {\isacharbraceleft}X\ t\ {\isacharbar}t{\isachardot}\ t\ {\isasymin}\ T\ {\isasymand}\ {\isacharparenleft}{\isasymforall}{\isasymtau}{\isasymin}down\ T\ t{\isachardot}\ G\ {\isacharparenleft}X\ {\isasymtau}{\isacharparenright}{\isacharparenright}{\isacharbraceright}{\isachardoublequoteclose}\isanewline
\ \ $\langle \isa{proof}\rangle$\isanewline
\end{isabellebody}

\noindent We have also formalised the $G$-guarded orbital of $f$ along
$T$ in $s$ (as $\gamma^f_G\, s$) together with Lemma~\ref{P:gorbital}.

\begin{isabellebody}
\isanewline
\isacommand{definition}\isamarkupfalse%
\ g{\isacharunderscore}orbital\ {\isacharcolon}{\isacharcolon}\ {\isachardoublequoteopen}{\isacharparenleft}{\isacharprime}a\ {\isasymRightarrow}\ {\isacharprime}a{\isacharparenright}\ {\isasymRightarrow}\ {\isacharparenleft}{\isacharprime}a\ {\isasymRightarrow}\ bool{\isacharparenright}\ {\isasymRightarrow}\ real\ set\ {\isasymRightarrow}\ {\isacharprime}a\ set\ {\isasymRightarrow}\ real\ {\isasymRightarrow}\ \isanewline
\ \ {\isacharparenleft}{\isacharprime}a{\isacharcolon}{\isacharcolon}real{\isacharunderscore}normed{\isacharunderscore }vector{\isacharparenright}\ {\isasymRightarrow}\ {\isacharprime}a\ set{\isachardoublequoteclose}\ \isanewline
\ \ \isakeyword{where}\ {\isachardoublequoteopen}g{\isacharunderscore}orbital\ f\ G\ T\ S\ t\isactrlsub {\isadigit{0}}\ s\ {\isacharequal}\ {\isasymUnion}{\isacharbraceleft}{\isasymgamma}\ X\ G\ T\ {\isacharbar}X{\isachardot}\ X\ {\isasymin}\ Sols\ {\isacharparenleft}{\isasymlambda}t{\isachardot}\ f{\isacharparenright}\ T\ S\ t\isactrlsub {\isadigit{0}}\ s{\isacharbraceright}{\isachardoublequoteclose}\isanewline
\isanewline
\isacommand{lemma}\isamarkupfalse%
\ g{\isacharunderscore}orbital{\isacharunderscore}eq{\isacharcolon}\ {\isachardoublequoteopen}g{\isacharunderscore}orbital\ f\ G\ T\ S\ t\isactrlsub {\isadigit{0}}\ s\ {\isacharequal}\ \isanewline
\ \ {\isacharbraceleft}X\ t\ {\isacharbar}t\ X{\isachardot}\ t\ {\isasymin}\ T\ {\isasymand}\ {\isasymP}\ X\ {\isacharparenleft}down\ T\ t{\isacharparenright}\ {\isasymsubseteq}\ {\isacharbraceleft}s{\isachardot}\ G\ s{\isacharbraceright}\ {\isasymand}\ X\ {\isasymin}\ Sols\ {\isacharparenleft}{\isasymlambda}t{\isachardot}\ f{\isacharparenright}\ T\ S\ t\isactrlsub {\isadigit{0}}\ s\ {\isacharbraceright}{\isachardoublequoteclose}\ \isanewline
\ \ $\langle \isa{proof}\rangle$\isanewline
\end{isabellebody}

\noindent We have shown that their counterparts from dynamical 
systems are special cases by instantiating our definitions to the 
parameters of the locale $\isa{local{\isacharunderscore}flow}$.
Hence, the $\top$-guarded orbital of $f$ along $T$ in $s$ 
becomes the standard orbit of $s$, and its $G$-guarded version 
is the set in Lemma~\ref{P:g-orbit-props}.

\begin{isabellebody}
\isanewline
\isacommand{context}\isamarkupfalse%
\ local{\isacharunderscore}flow\isanewline
\isakeyword{begin}
\isanewline
\isanewline
\isacommand{definition}\isamarkupfalse\ orbit {\isacharcolon}{\isacharcolon} {\isachardoublequoteopen}{\isacharprime}a\ {\isasymRightarrow}\ {\isacharprime}a\ set{\isachardoublequoteclose}\ 
{\isacharparenleft}{\isachardoublequoteopen}{\isasymgamma}\isactrlsup {\isasymphi}{\isachardoublequoteclose}{\isacharparenright}\isanewline
\ \ \isakeyword{where}\ {\isachardoublequoteopen}{\isasymgamma}\isactrlsup {\isasymphi}\ s\ {\isacharequal}\ g{\isacharunderscore}orbital\ f\ {\isacharparenleft}{\isasymlambda}s{\isachardot}\ True{\isacharparenright}\ T\ S\ {\isadigit{0}}\ s{\isachardoublequoteclose}\isanewline
\isanewline
\isacommand{lemma}\isamarkupfalse%
\
orbit{\isacharunderscore}eq{\isacharbrackleft}simp{\isacharbrackright}{\isacharcolon}\
{\isachardoublequoteopen}s\ {\isasymin}\ S{\isachardoublequoteclose}
 {\isasymLongrightarrow}\ 
{\isachardoublequoteopen}{\isasymgamma}\isactrlsup {\isasymphi}\ s\
{\isacharequal}\ {\isacharbraceleft}{\isasymphi}\ t\ s {\isacharbar}t{\isachardot}\ t\ {\isasymin}\ T{\isacharbraceright}{\isachardoublequoteclose}\isanewline
\ \ $\langle \isa{proof}\rangle$\isanewline

\isacommand{lemma}\isamarkupfalse%
\ g{\isacharunderscore}orbital{\isacharunderscore}collapses{\isacharcolon}\ \isanewline
\ \ {\isachardoublequoteopen}s\ {\isasymin}\
S{\isachardoublequoteclose}
 {\isasymLongrightarrow}\ 
 {\isachardoublequoteopen}g{\isacharunderscore}orbital\ f\ G\ T\ S\
 {\isadigit{0}}\ s\ {\isacharequal}\ {\isacharbraceleft}{\isasymphi}\
 t\ s {\isacharbar}t{\isachardot}\ t\ {\isasymin}\ T\ {\isasymand}\ {\isacharparenleft}{\isasymforall}{\isasymtau}{\isasymin}down\ T\ t{\isachardot}\ G\ {\isacharparenleft}{\isasymphi}\ {\isasymtau}\ s{\isacharparenright}{\isacharparenright}{\isacharbraceright}{\isachardoublequoteclose}\isanewline
\ \ $\langle \isa{proof}\rangle$\isanewline

\isakeyword{end}
\isanewline
\end{isabellebody}

Overall, the set-theoretic concepts introduced in
Section~\ref{sec:generalisation} are easily definable in
Isabelle. Similarly, lemmas formalising their properties and relating
them are often proved automatically in one or two lines. Analytical
properties like the existence of derivatives in a region of space or
the uniqueness of solutions for IVPs are harder to prove. Such lemmas
often require long structured proofs with proofs by cases and explicit
calculations, that is, a considerable amount of user interaction. Yet
most proofs remain at least roughly at the level of textbook
reasoning.

%%%%%%%%%%%%%%%%%%%%%%%%%%%%%%%%%%%%

\section{Isabelle Components for Hybrid Programs}\label{sec:isa-wlp}

This section describes the integration of the state transformer and
relational semantics for dynamical systems and Lipschitz-continuous
vector fields from Section~\ref{sec:hybrid-store} and the continuous
vector fields from Section~\ref{sec:generalisation} into the three
verification components for predicate transformers outlined in
Section~\ref{sec:isa-pt} and Figure~\ref{fig:framework}. This requires
formalising hybrid stores and the semantics of evolution commands for
dynamical systems, Lipschitz continuous vector fields with local flows
and continuous vector fields. As explained in
Section~\ref{sec:hybrid-store} and \ref{sec:differential-invariants},
this supports two different workflows using the procedures introduced
in these sections: the first one is for reasoning with (local) flows
and orbits, the second, more general one, for reasoning with
invariants.

First we explain our formalisation of the hybrid store type
$\reals^V$.  We use Isabelle's type
${\isacharparenleft}\mathit{real}{\isacharcomma}{\isacharprime}n{\isacharparenright}\
\mathit{vec}$ (abbreviated as
$\mathit{real}{\isacharcircum}{\isacharprime}n$) of real valued
vectors of dimension $n$, formalised as the type
${\isacharprime}n\ {\isasymRightarrow}\ \mathit{real}$ of functions
from the finite type ${\isacharprime}n$ into $\reals$. This represents
hybrid stores in $\reals^V$ with $|V|=n$.  Isabelle uses the notation
$s{\isachardollar}i$ for the $i$th coordinate of a vector $s$ and
hence the value of store $s$ at variable $i$. More mathematically,
${\isachardollar}$ is the bijection from
$\mathit{real}{\isacharcircum}{\isacharprime}n$ to
${\isacharprime}n\ {\isasymRightarrow}\ \mathit{real}$. Its inverse is
written using a binder ${\isasymchi}$ that replaces
$\lambda$-abstraction. Thus
$({\isasymchi} i{\isachardot}\ s){\isachardollar}i = s$ for any
$s{\isacharcolon}{\isacharcolon}\mathit{real}{\isacharcircum}{\isacharprime}n$
and $({\isasymchi} i{\isachardot}\ x){\isachardollar}i = x$ for any
$x{\isacharcolon}{\isacharcolon}\mathit{real}$. As a consequence of
this simple approach, variables are formalised as natural
numbers. More general namespaces have been included in our framework
more recently~\cite{FosterMS20,FosterGMS21} to make it more user
friendly.

Our state transformer semantics uses functions of type
$\mathit{real}{\isacharcircum}{\isacharprime}n\ {\isasymRightarrow}\
{\isacharparenleft}\mathit{real}{\isacharcircum}{\isacharprime}n{\isacharparenright}\
\mathit{set}$,
which we abbreviate as
${\isacharparenleft}\mathit{real}{\isacharcircum}{\isacharprime}n{\isacharparenright}\
\mathit{nd}{\isacharunderscore}\mathit{fun}$
(for non-deterministic functions). These are instances of the more
general type
${\isacharprime}a\ \mathit{nd}{\isacharunderscore}\mathit{fun}$ of
nondeterministic endofunctions. 

Alternatively, we use relations of type
${\isacharparenleft}\mathit{real}{\isacharcircum}{\isacharprime}n{\isacharparenright}\
\mathit{rel}$, which are instances of
${\isacharprime}a\ \mathit{rel}$.  For both intermediate semantics we
have shown with Isabelle that they form $\MKA$s, but we have also
integrated them into the two quantalic predicate transformer semantics
in Figure~\ref{fig:framework}.
\begin{isabellebody}
\isanewline
\isacommand{interpretation}\ rel{\isacharunderscore}aka{\isacharcolon}\ antidomain{\isacharunderscore}kleene{\isacharunderscore}algebra
\  Id\ {\isacharbraceleft}{\isacharbraceright}\ {\isacharparenleft}{\isasymunion}{\isacharparenright}\ {\isacharparenleft}{\isacharsemicolon}{\isacharparenright}\ {\isacharparenleft}{\isasymsubseteq}{\isacharparenright}\ {\isacharparenleft}{\isasymsubset}{\isacharparenright}\ rtrancl\ rel{\isacharunderscore}ad\isanewline
\ \ $\langle \isa{proof}\rangle$\isanewline

\isacommand{instantiation}\isamarkupfalse%
\ nd{\isacharunderscore}fun\ {\isacharcolon}{\isacharcolon}\ {\isacharparenleft}type{\isacharparenright}\ antidomain{\isacharunderscore}kleene{\isacharunderscore }algebra\isanewline
\ \ $\langle \isa{proof}\rangle$\isanewline
\end{isabellebody}
\noindent After these proofs, all statements proved in Isabelle's $\MKA$
components are available for state transformers and relations.  We
have formalised $\wlp$s for both models, where
${\isasymlceil}-{\isasymrceil}$ ambiguously denotes the isomorphism
between predicates and binary relations or nondeterministic functions.

\begin{isabellebody}
\isanewline
\isacommand{lemma}\isamarkupfalse%
\ wp{\isacharunderscore}rel{\isacharcolon}{\isachardoublequoteopen }\ wp\ R\ {\isasymlceil}P{\isasymrceil}\ {\isacharequal}\ {\isasymlceil}{\isasymlambda}\ x{\isachardot}\ {\isasymforall}\ y{\isachardot}\ {\isacharparenleft}x{\isacharcomma}y{\isacharparenright}\ {\isasymin}\ R\ {\isasymlongrightarrow}\ P\ y{\isasymrceil}{\isachardoublequoteclose}\isanewline
\ \ $\langle \isa{proof}\rangle$\isanewline

\isacommand{lemma}\isamarkupfalse%
\ wp{\isacharunderscore}nd{\isacharunderscore}fun{\isacharcolon}\ {\isachardoublequoteopen}wp\ F\ {\isasymlceil}P{\isasymrceil}\ {\isacharequal}\ {\isasymlceil}{\isasymlambda}\ x{\isachardot}\ {\isasymforall}\ y{\isachardot}\ y\ {\isasymin}\ {\isacharparenleft}F\ x{\isacharparenright}\ {\isasymlongrightarrow}\ P\ y{\isasymrceil}{\isachardoublequoteclose}\isanewline
\ \ $\langle \isa{proof}\rangle$\isanewline
\end{isabellebody}

Alternatively, we use the categorical forward box operator
$\mathit{fb}\isactrlsub {\isasymF}$ for Kleisli arrows of type
$F{\isacharcolon}{\isacharcolon} {\isacharprime}a\
{\isasymRightarrow}\ {\isacharprime}b\ \mathit{set}$ described in Section~\ref{sec:pt-monad},
\begin{isabellebody}
\isanewline
\isacommand{lemma}\isamarkupfalse%
\ ffb{\isacharunderscore}eq{\isacharcolon}\ {\isachardoublequoteopen}fb\isactrlsub {\isasymF}\ F\ X\ {\isacharequal}\ {\isacharbraceleft}x{\isachardot}\ {\isasymforall}y{\isachardot}\ y\ {\isasymin}\ F\ x\ {\isasymlongrightarrow}\ y\ {\isasymin}\ X{\isacharbraceright}{\isachardoublequoteclose}\isanewline
\ \ $\langle \isa{proof}\rangle$\isanewline
\end{isabellebody}
\noindent or its relational counterpart \isa{fb\isactrlsub {\isasymR}}. 

We now switch to the categorical approach to predicate transformers
based on state transformers and the Kleisi monad of the powerset
functor, as a preliminary $\MKA$-based one with relations has already
been described elsewhere~\cite{MuniveS18}.  Apart from typing and some
minor syntactic differences, the other approaches---predicate
transformers based on $\MKA$ and quantales, and an intermediate
relational semantics for these---yield analogous results and are
equally suitable for verification. This evidences the compositionality
of our approach.

The state and predicate transformer semantics of assignment commands
is based on store update functions, as described in
Section~\ref{sec:discrete-store}. For hybrid programs, it must be
adapted to type ${\isacharprime}a{\isacharcircum}{\isacharprime}n$.

\begin{isabellebody}
\isanewline
\isacommand{definition}\isamarkupfalse%
\ vec{\isacharunderscore}upd\ {\isacharcolon}{\isacharcolon}\ {\isachardoublequoteopen}{\isacharprime}a{\isacharcircum}{\isacharprime}n\ {\isasymRightarrow}\ {\isacharprime}n\ {\isasymRightarrow}\ {\isacharprime}a\ {\isasymRightarrow}\ {\isacharprime}a{\isacharcircum}{\isacharprime}n{\isachardoublequoteclose}\isanewline
\ \ \isakeyword{where}\ {\isachardoublequoteopen}vec{\isacharunderscore}upd\ s\ i\ a\ {\isacharequal}\ {\isacharparenleft}{\isasymchi}\ j{\isachardot}\ {\isacharparenleft}{\isacharparenleft}{\isacharparenleft}{\isachardollar}{\isacharparenright}\ s{\isacharparenright}{\isacharparenleft}i\ {\isacharcolon}{\isacharequal}\ a{\isacharparenright}{\isacharparenright}\ j{\isacharparenright}{\isachardoublequoteclose}\isanewline
\isanewline
\isacommand{definition}\isamarkupfalse%
\ assign\ {\isacharcolon}{\isacharcolon}\ {\isachardoublequoteopen}{\isacharprime}n\ {\isasymRightarrow}\ {\isacharparenleft}{\isacharprime}a{\isacharcircum}{\isacharprime}n\ {\isasymRightarrow}\ {\isacharprime}a{\isacharparenright}\ {\isasymRightarrow}\ {\isacharprime}a{\isacharcircum}{\isacharprime}n\ {\isasymRightarrow}\ {\isacharparenleft}{\isacharprime}a{\isacharcircum}{\isacharprime}n{\isacharparenright}\ set{\isachardoublequoteclose}\ {\isacharparenleft}{\isachardoublequoteopen}{\isacharparenleft}{\isadigit{2}}{\isacharunderscore}\ {\isacharcolon}{\isacharcolon}{\isacharequal}\ {\isacharunderscore}{\isacharparenright}{\isachardoublequoteclose}\ {\isacharbrackleft}{\isadigit{7}}{\isadigit{0}}{\isacharcomma}\ {\isadigit{6}}{\isadigit{5}}{\isacharbrackright}\ {\isadigit{6}}{\isadigit{1}}{\isacharparenright}\ \isanewline
\ \ \isakeyword{where}\ {\isachardoublequoteopen}{\isacharparenleft}x\ {\isacharcolon}{\isacharcolon}{\isacharequal}\ e{\isacharparenright}\ {\isacharequal}\ {\isacharparenleft}{\isasymlambda}s{\isachardot}\ {\isacharbraceleft}vec{\isacharunderscore}upd\ s\ x\ {\isacharparenleft}e\ s{\isacharparenright}{\isacharbraceright}{\isacharparenright}{\isachardoublequoteclose}\ \isanewline
\isanewline
\isacommand{lemma}\isamarkupfalse%
\ ffb{\isacharunderscore}assign{\isacharbrackleft}simp{\isacharbrackright}{\isacharcolon}\ {\isachardoublequoteopen}fb\isactrlsub {\isasymF}\ {\isacharparenleft}x\ {\isacharcolon}{\isacharcolon}{\isacharequal}\ e{\isacharparenright}\ Q\ {\isacharequal}\ {\isacharbraceleft}s{\isachardot}\ {\isacharparenleft}{\isasymchi}\ j{\isachardot}\ {\isacharparenleft}{\isacharparenleft}{\isacharparenleft}{\isachardollar}{\isacharparenright}\ s{\isacharparenright}{\isacharparenleft}x\ {\isacharcolon}{\isacharequal}\ {\isacharparenleft}e\ s{\isacharparenright}{\isacharparenright}{\isacharparenright}\ j{\isacharparenright}\ {\isasymin}\ Q{\isacharbraceright}{\isachardoublequoteclose}\isanewline
\ \ $\langle \isa{proof}\rangle$\isanewline
\end{isabellebody}

\noindent The \isa{{\isacharparenleft}{\isachardollar}{\isacharparenright}}
applies the bijection \isa{\isachardollar} as a function in prefix
notation. 

We write
${\isacharparenleft}x\ {\isacharcolon}{\isacharcolon}{\isacharequal}\
e{\isacharparenright}$ for the semantic illusion for a syntactic
assignment commands, as Isabelle uses
$f{\isacharparenleft}i\ {\isacharcolon}{\isacharequal}\
a{\isacharparenright}$ for function update $f[i\mapsto a]$. Lemma
$\mathit{ffb}{\isacharunderscore}\mathit{assign}$ is then a direct
consequence of $\mathit{ffb}{\isacharunderscore}\mathit{eq}$, and it
coincides with~(\ref{eq:wlp-asgn}) in Section~\ref{sec:discrete-store}
up to minor syntactic differences. In the verification examples that
feature in this article, we have not attempted to hide the functions
that impersonate syntactic expressions and the lambda abstractions
they require. This may be unwieldy for users. It is nevertheless
routine to program more elegant notation with Isabelle~\cite{FosterGMS21}. 

Similarly, $\wlp$s for the control structure commands of hybrid
programs (equations~\ref{eq:wlp-seq}, \ref{eq:wlp-cond}
and~\ref{eq:wlp-star}) are easily derivable.
\begin{isabellebody}
\isanewline
\isacommand{lemma}\isamarkupfalse%
\ ffb{\isacharunderscore}kcomp{\isacharbrackleft}simp{\isacharbrackright}{\isacharcolon}\ {\isachardoublequoteopen}fb\isactrlsub {\isasymF}\ {\isacharparenleft}G\ {\isacharsemicolon}\ F{\isacharparenright}\ P\ {\isacharequal}\ fb\isactrlsub {\isasymF}\ G\ {\isacharparenleft}fb\isactrlsub {\isasymF}\ F\ P{\isacharparenright}{\isachardoublequoteclose}\isanewline
\ \ $\langle \isa{proof}\rangle$\isanewline

\isacommand{lemma}\isamarkupfalse%
\ ffb{\isacharunderscore}if{\isacharunderscore}then{\isacharunderscore }else{\isacharbrackleft}simp{\isacharbrackright}{\isacharcolon}\ {\isachardoublequoteopen}fb\isactrlsub {\isasymF}\ {\isacharparenleft}IF\ T\ THEN\ X\ ELSE\ Y{\isacharparenright}\ Q\ {\isacharequal}\isanewline 
\ \ {\isacharbraceleft}s{\isachardot}\ T\ s\ {\isasymlongrightarrow}\ s\ {\isasymin}\ fb\isactrlsub {\isasymF}\ X\ Q{\isacharbraceright}\ {\isasyminter}\ {\isacharbraceleft}s{\isachardot}\ {\isasymnot}\ T\ s\ {\isasymlongrightarrow}\ s\ {\isasymin}\ fb\isactrlsub {\isasymF}\ Y\ Q{\isacharbraceright}{\isachardoublequoteclose}\isanewline
\ \ $\langle \isa{proof}\rangle$\isanewline

\isacommand{lemma}\isamarkupfalse%
\ ffb{\isacharunderscore}loopI{\isacharcolon}\ {\isachardoublequoteopen}P\ {\isasymle}\ {\isacharbraceleft}s{\isachardot}\ I\ s{\isacharbraceright}\ \ {\isasymLongrightarrow}\ {\isacharbraceleft}s{\isachardot}\ I\ s{\isacharbraceright}\ {\isasymle}\ Q\ {\isasymLongrightarrow}\ {\isacharbraceleft}s{\isachardot}\ I\ s{\isacharbraceright}\ {\isasymle}\ fb\isactrlsub {\isasymF}\ F\ {\isacharbraceleft}s{\isachardot}\ I\ s{\isacharbraceright}\ {\isasymLongrightarrow}\isanewline 
\ \ P\ {\isasymle}\ fb\isactrlsub {\isasymF}\ {\isacharparenleft}LOOP\ F\ INV\ I{\isacharparenright}\ Q{\isachardoublequoteclose}\isanewline
\ \ $\langle \isa{proof}\rangle$\isanewline
\end{isabellebody}

\noindent In these lemmas, ${\isacharsemicolon}$ is syntactic sugar
for the forward Kleisli composition $\circ_K$ and \isa{LOOP} stands
for the Kleene star for state transformers with its annotated
loop-invariant after the keyword \isa{INV}, along the lines of
Section~\ref{sec:mka-pt}.

As in Section~\ref{sec:generalisation}, the general semantics of
evolution commands for continuous vector fields is given by
$G$-guarded orbitals of $f$ along $T$. We have formalised the $\wlp$s
in Proposition~\ref{P:wlpprop-gen}, and a specialisation to local
flows in the context of our locale $\isa{local{\isacharunderscore}flow}$
given by Lemma~\ref{P:wlpprop-var} (equation~(\ref{eq:wlp-evl})).

\begin{isabellebody}
\isanewline\isacommand{notation}\isamarkupfalse%
\ g{\isacharunderscore}orbital\ {\isacharparenleft}{\isachardoublequoteopen}{\isacharparenleft}{\isadigit{1}}x{\isasymacute}{\isacharequal}{\isacharunderscore}\ {\isacharampersand}\ {\isacharunderscore}\ on\ {\isacharunderscore}\ {\isacharunderscore}\ {\isacharat}\ {\isacharunderscore}{\isacharparenright}{\isachardoublequoteclose}{\isacharparenright}\isanewline

\isacommand{lemma}\isamarkupfalse%
\
ffb{\isacharunderscore}g{\isacharunderscore}orbital{\isacharcolon}\isanewline
\ \  {\isachardoublequoteopen}fb\isactrlsub {\isasymF}\
{\isacharparenleft}x{\isasymacute}{\isacharequal} f\ {\isacharampersand}\ G\ on\ T\ S\ {\isacharat}\ t\isactrlsub {\isadigit{0}}{\isacharparenright}\ Q\ {\isacharequal}\ \isanewline
\ \ \ \ {\isacharbraceleft}s{\isachardot}\ {\isasymforall}X{\isasymin}Sols\ {\isacharparenleft}{\isasymlambda}t{\isachardot}\ f{\isacharparenright}\ T\ S\ t\isactrlsub {\isadigit{0}}\ s{\isachardot}\ {\isasymforall}t{\isasymin}T{\isachardot}\ {\isacharparenleft}{\isasymforall}{\isasymtau}{\isasymin}down\ T\ t{\isachardot}\ G\ {\isacharparenleft}X\ {\isasymtau}{\isacharparenright}{\isacharparenright}\ {\isasymlongrightarrow}\ {\isacharparenleft}X\ t{\isacharparenright}\ {\isasymin}\ Q{\isacharbraceright}{\isachardoublequoteclose}\isanewline
\ \ $\langle \isa{proof}\rangle$\isanewline

\isacommand{lemma}\isamarkupfalse%
\ {\isacharparenleft}\isacommand{in}\
local{\isacharunderscore}flow{\isacharparenright}\
ffb{\isacharunderscore}g{\isacharunderscore}ode{\isacharcolon}\isanewline
\ \  {\isachardoublequoteopen}fb\isactrlsub {\isasymF}\
{\isacharparenleft}x{\isasymacute}{\isacharequal} f\ {\isacharampersand}\ G\ on\ T\ S\ {\isacharat}\ {\isadigit{0}}{\isacharparenright}\ Q\ {\isacharequal}\isanewline
\ \ \ \ {\isacharbraceleft}s{\isachardot}\ s\ {\isasymin}\ S\ {\isasymlongrightarrow}\ {\isacharparenleft}{\isasymforall}t{\isasymin}T{\isachardot}\ {\isacharparenleft}{\isasymforall}{\isasymtau}{\isasymin}down\ T\ t{\isachardot}\ G\ {\isacharparenleft}{\isasymphi}\ {\isasymtau}\ s{\isacharparenright}{\isacharparenright}\ {\isasymlongrightarrow}\ {\isacharparenleft}{\isasymphi}\ t\ s{\isacharparenright}\ {\isasymin}\ Q{\isacharparenright}{\isacharbraceright}{\isachardoublequoteclose}\isanewline
\ \ $\langle \isa{proof}\rangle$\isanewline
\end{isabellebody}

\noindent As Lemma $\isa{ffb-g-ode}$ is defined in locale
$\isa{local-flow}$, users are required to check the conditions of the
Picard-Lindel{\"o}f theorem to access this locale and certify that
$\flow$ is indeed a solution of the IVP as part of our first
workflow. 

Finally, we describe our component for reasoning with differential
invariants in the general setting of continuous vector fields, using
our second workflow. We start with definitions and a basic
property from Proposition~\ref{P:inv-prop}.

\begin{isabellebody}
\isanewline
\isacommand{definition}\isamarkupfalse%
\ diff{\isacharunderscore}invariant\ {\isacharcolon}{\isacharcolon}\ {\isachardoublequoteopen}{\isacharparenleft}{\isacharprime}a\ {\isasymRightarrow}\ bool{\isacharparenright}\ {\isasymRightarrow}\ {\isacharparenleft}{\isacharparenleft}{\isacharprime}a{\isacharcolon}{\isacharcolon}real{\isacharunderscore}normed{\isacharunderscore }vector{\isacharparenright}\ {\isasymRightarrow}\ {\isacharprime}a{\isacharparenright}\ {\isasymRightarrow}\ real\ set\ {\isasymRightarrow}\ \isanewline
\ \ {\isacharprime}a\ set\ {\isasymRightarrow}\ real\ {\isasymRightarrow}\ {\isacharparenleft}{\isacharprime}a\ {\isasymRightarrow}\ bool{\isacharparenright}\ {\isasymRightarrow}\ bool{\isachardoublequoteclose}\ \isanewline
\ \ \isakeyword{where}\ {\isachardoublequoteopen}diff{\isacharunderscore}invariant\ I\ f\ T\ S\ t\isactrlsub {\isadigit{0}}\ G\ {\isacharequal}\ {\isacharparenleft}{\isacharparenleft}{\isacharparenleft}g{\isacharunderscore}orbital\ f\ G\ T\ S\ t\isactrlsub {\isadigit{0}}{\isacharparenright}\isactrlsup {\isasymdagger}{\isacharparenright}\ {\isacharbraceleft}s{\isachardot}\ I\ s{\isacharbraceright}\ {\isasymsubseteq}\ {\isacharbraceleft}s{\isachardot}\ I\ s{\isacharbraceright}{\isacharparenright}{\isachardoublequoteclose}\isanewline

\isacommand{lemma}\isamarkupfalse%
\ ffb{\isacharunderscore}diff{\isacharunderscore}inv{\isacharcolon}\ \isanewline
\ \ {\isachardoublequoteopen}diff{\isacharunderscore}invariant\ I\ f\
T\ S\ t\isactrlsub {\isadigit{0}}\ G\ {\isacharequal}\
{\isacharparenleft}{\isacharbraceleft}s{\isachardot}\ I\
s{\isacharbraceright}\ {\isasymle}\ fb\isactrlsub {\isasymF}\
{\isacharparenleft}x{\isasymacute}{\isacharequal} f\ {\isacharampersand}\ G\ on\ T\ S\ {\isacharat}\ t\isactrlsub {\isadigit{0}}{\isacharparenright}\ {\isacharbraceleft}s{\isachardot}\ I\ s{\isacharbraceright}{\isacharparenright}\ {\isachardoublequoteclose}\isanewline
\ \ $\langle \isa{proof}\rangle$\isanewline
\end{isabellebody}

We have formalised the most important rules for reasoning with
differential invariants, including those for the procedure of
Section~\ref{sec:differential-invariants} via
Corollary~\ref{P:wlpprop-gen2} and Lemmas~\ref{P:inv-lemma}
and~\ref{P:invrules}. The formalisation of the first two is
straightforward. We have proved the clauses of~\ref{P:invrules} in
various lemmas, and bundled them under the name
$\mathit{diff}{\isacharunderscore}\mathit{invariant}{\isacharunderscore}\mathit{rules}$. We
show one of these clauses as an example.

\begin{isabellebody}
\isanewline

\isacommand{named{\isacharunderscore}theorems}\isamarkupfalse%
\ diff{\isacharunderscore}invariant{\isacharunderscore}rules\ {\isachardoublequoteopen}compilation\ of\ rules\ for\ differential\ invariants{\isachardot}{\isachardoublequoteclose}\isanewline
\isanewline
\isacommand{lemma}\isamarkupfalse%
\ {\isacharbrackleft}diff{\isacharunderscore}invariant{\isacharunderscore }rules{\isacharbrackright}{\isacharcolon}\isanewline
\ \ \isakeyword{assumes}\
{\isachardoublequoteopen}is{\isacharunderscore}interval\
T{\isachardoublequoteclose}\isanewline
\ \  \isakeyword{and}\ {\isachardoublequoteopen}t\isactrlsub {\isadigit{0}}\ {\isasymin}\ T{\isachardoublequoteclose}\isanewline
\ \  \isakeyword{and}\ {\isachardoublequoteopen}{\isasymforall}X{\isachardot}\ {\isacharparenleft}D\ X\ {\isacharequal}\ {\isacharparenleft}{\isasymlambda}{\isasymtau}{\isachardot}\ f\ {\isacharparenleft}X\ {\isasymtau}{\isacharparenright}{\isacharparenright}\ on\ T{\isacharparenright}\ {\isasymlongrightarrow}\ {\isacharparenleft}D\ {\isacharparenleft}{\isasymlambda}{\isasymtau}{\isachardot}\ {\isasymmu}\ {\isacharparenleft}X\ {\isasymtau}{\isacharparenright}\ {\isacharminus}\ {\isasymnu}\ {\isacharparenleft}X\ {\isasymtau}{\isacharparenright}{\isacharparenright}\ {\isacharequal}\ {\isacharparenleft}{\isacharparenleft}{\isacharasterisk}\isactrlsub R{\isacharparenright}\ {\isadigit{0}}{\isacharparenright}\ on\ T{\isacharparenright}{\isachardoublequoteclose}\isanewline
\ \ \isakeyword{shows}\ {\isachardoublequoteopen}diff{\isacharunderscore}invariant\ {\isacharparenleft}{\isasymlambda}s{\isachardot}\ {\isasymmu}\ s\ {\isacharequal}\ {\isasymnu}\ s{\isacharparenright}\ f\ T\ S\ t\isactrlsub {\isadigit{0}}\ G{\isachardoublequoteclose}\isanewline
\ \ $\langle \isa{proof}\rangle$\isanewline

\isacommand{lemma}\isamarkupfalse%
\ ffb{\isacharunderscore}g{\isacharunderscore}odei{\isacharcolon}\ {\isachardoublequoteopen}P\ {\isasymle}\ {\isacharbraceleft}s{\isachardot}\ I\ s{\isacharbraceright}\ {\isasymLongrightarrow}\ {\isacharbraceleft}s{\isachardot}\ I\ s{\isacharbraceright}\ {\isasymle}\ fb\isactrlsub {\isasymF}\ {\isacharparenleft}x{\isasymacute}{\isacharequal}\ f\ {\isacharampersand}\ G\ on\ T\ S\ {\isacharat}\ t\isactrlsub {\isadigit{0}}{\isacharparenright}\ {\isacharbraceleft}s{\isachardot}\ I\ s{\isacharbraceright}\ {\isasymLongrightarrow}\ \isanewline
\ \ {\isacharbraceleft}s{\isachardot}\ I\ s\ {\isasymand}\ G\ s{\isacharbraceright}\ {\isasymle}\ Q\ {\isasymLongrightarrow}\ P\ {\isasymle}\ fb\isactrlsub {\isasymF}\ {\isacharparenleft}x{\isasymacute}{\isacharequal}\ f\ {\isacharampersand}\ G\ on\ T\ S\ {\isacharat}\ t\isactrlsub {\isadigit{0}}\ DINV\ I{\isacharparenright}\ Q{\isachardoublequoteclose}\isanewline
\ \ $\langle \isa{proof}\rangle$\isanewline
\end{isabellebody}

\noindent Lemma
\isa{ffb{\isacharunderscore}g{\isacharunderscore}odei} %{\isacharunderscore}inv} 
completes the procedure of
Section~\ref{sec:differential-invariants} by formalising step 2, which
annotates invariants in evolution commands, following the approach
outlined for loops and general commands in $\MKA$ at the end of
Section~\ref{sec:mka-pt}.  With Isabelle, we use the \isa{DINV} keyword.

The two workflows for proving partial correctness specifications with
evolution commands require users to discharge proof obligations for
derivatives. In the case of flows, these must be solutions for vector
fields; in the case of differential invariants, the procedure of
Section~\ref{sec:differential-invariants} requires proving the
assumptions of Lemma~\ref{P:invrules}. To increase proof automation
when reasoning about derivatives, we have bundled several derivative
properties under the name \isa{poly{\isacharunderscore}derivatives} as
a proof method.

\begin{isabellebody}
\isanewline
\isacommand{named{\isacharunderscore}theorems}\isamarkupfalse%
\ poly{\isacharunderscore}derivatives\ {\isachardoublequoteopen}compilation\ of\ optimised\ miscellaneous\ derivative\ rules{\isachardot}{\isachardoublequoteclose}\isanewline
\isanewline
\isacommand{declare}\isamarkupfalse%
\ has{\isacharunderscore}vderiv{\isacharunderscore}on{\isacharunderscore}const\ {\isacharbrackleft}poly{\isacharunderscore}derivatives{\isacharbrackright }\isanewline
\ \ \ \ \isakeyword{and}\ has{\isacharunderscore}vderiv{\isacharunderscore}on{\isacharunderscore}id\ {\isacharbrackleft}poly{\isacharunderscore}derivatives{\isacharbrackright }\isanewline
\ \ \ \ \isakeyword{and}\ has{\isacharunderscore}vderiv{\isacharunderscore}on{\isacharunderscore}add\ {\isacharbrackleft}THEN\ has{\isacharunderscore}vderiv{\isacharunderscore}on{\isacharunderscore}eq{\isacharunderscore}rhs{\isacharcomma}\ poly{\isacharunderscore}derivatives{\isacharbrackright }\isanewline
\ \ \ \ \isakeyword{and}\ has{\isacharunderscore}vderiv{\isacharunderscore}on{\isacharunderscore}diff\ {\isacharbrackleft}THEN\ has{\isacharunderscore}vderiv{\isacharunderscore}on{\isacharunderscore}eq{\isacharunderscore}rhs{\isacharcomma}\ poly{\isacharunderscore}derivatives{\isacharbrackright }\isanewline
\ \ \ \ \isakeyword{and}\ has{\isacharunderscore}vderiv{\isacharunderscore}on{\isacharunderscore}mult\ {\isacharbrackleft}THEN\ has{\isacharunderscore}vderiv{\isacharunderscore}on{\isacharunderscore}eq{\isacharunderscore}rhs{\isacharcomma}\ poly{\isacharunderscore}derivatives{\isacharbrackright }\isanewline

\isacommand{lemma}\isamarkupfalse%
\ {\isacharbrackleft}poly{\isacharunderscore}derivatives{\isacharbrackright}{\isacharcolon}\ {\isachardoublequoteopen}D\ f\ {\isacharequal}\ f{\isacharprime}\ on\ T\ {\isasymLongrightarrow}\ g\ {\isacharequal}\ {\isacharparenleft}{\isasymlambda}t{\isachardot}\ {\isacharminus}\ f{\isacharprime}\ t{\isacharparenright}\ {\isasymLongrightarrow}\ D\ {\isacharparenleft}{\isasymlambda}t{\isachardot}\ {\isacharminus}\ f\ t{\isacharparenright}\ {\isacharequal}\ g\ on\ T{\isachardoublequoteclose}\isanewline
\ \ $\langle \isa{proof}\rangle$\isanewline

\isacommand{lemma}\isamarkupfalse%
\ {\isacharbrackleft}poly{\isacharunderscore}derivatives{\isacharbrackright}{\isacharcolon}\ {\isachardoublequoteopen}{\isacharparenleft}a{\isacharcolon}{\isacharcolon}real{\isacharparenright}\ {\isasymnoteq}\ {\isadigit{0}}\ {\isasymLongrightarrow}\ D\ f\ {\isacharequal}\ f{\isacharprime}\ on\ T\ {\isasymLongrightarrow}\ g\ {\isacharequal}\ {\isacharparenleft}{\isasymlambda}t{\isachardot}\ {\isacharparenleft}f{\isacharprime}\ t{\isacharparenright}{\isacharslash}a{\isacharparenright}\ {\isasymLongrightarrow}\isanewline
\ \ D\ {\isacharparenleft}{\isasymlambda}t{\isachardot}\ {\isacharparenleft}f\ t{\isacharparenright}{\isacharslash}a{\isacharparenright}\ {\isacharequal}\ g\ on\ T{\isachardoublequoteclose}\isanewline
\ \ $\langle \isa{proof}\rangle$\isanewline

\isacommand{lemma}\isamarkupfalse%
\ {\isacharbrackleft}poly{\isacharunderscore}derivatives{\isacharbrackright}{\isacharcolon}\ {\isachardoublequoteopen}n\ {\isasymge}\ {\isadigit{1}}\ {\isasymLongrightarrow}\ D\ {\isacharparenleft}f{\isacharcolon}{\isacharcolon}real\ {\isasymRightarrow}\ real{\isacharparenright}\ {\isacharequal}\ f{\isacharprime}\ on\ T\ {\isasymLongrightarrow}\isanewline
\ \ g\ {\isacharequal}\ {\isacharparenleft}{\isasymlambda}t{\isachardot}\ n\ {\isacharasterisk}\ {\isacharparenleft}f{\isacharprime}\ t{\isacharparenright}\ {\isacharasterisk}\ {\isacharparenleft}f\ t{\isacharparenright}{\isacharcircum}{\isacharparenleft}n{\isacharminus}{\isadigit{1}}{\isacharparenright}{\isacharparenright}\ {\isasymLongrightarrow}\ D\ {\isacharparenleft}{\isasymlambda}t{\isachardot}\ {\isacharparenleft}f\ t{\isacharparenright}{\isacharcircum}n{\isacharparenright}\ {\isacharequal}\ g\ on\ T{\isachardoublequoteclose}\isanewline
\ \ $\langle \isa{proof}\rangle$\isanewline

\isacommand{lemma}\isamarkupfalse%
\ {\isacharbrackleft}poly{\isacharunderscore}derivatives{\isacharbrackright}{\isacharcolon}\ {\isachardoublequoteopen}D\ {\isacharparenleft}f{\isacharcolon}{\isacharcolon}real\ {\isasymRightarrow}\ real{\isacharparenright}\ {\isacharequal}\ f{\isacharprime}\ on\ T\ {\isasymLongrightarrow}\isanewline
\ \ g\ {\isacharequal}\ {\isacharparenleft}{\isasymlambda}t{\isachardot}\ {\isacharminus}\ {\isacharparenleft}f{\isacharprime}\ t{\isacharparenright}\ {\isacharasterisk}\ sin\ {\isacharparenleft}f\ t{\isacharparenright}{\isacharparenright}\ {\isasymLongrightarrow}\ D\ {\isacharparenleft}{\isasymlambda}t{\isachardot}\ cos\ {\isacharparenleft}f\ t{\isacharparenright}{\isacharparenright}\ {\isacharequal}\ g\ on\ T{\isachardoublequoteclose}\isanewline
\ \ $\langle \isa{proof}\rangle$\isanewline

\isacommand{lemma}\isamarkupfalse%
\ {\isacharbrackleft}poly{\isacharunderscore}derivatives{\isacharbrackright}{\isacharcolon}\ {\isachardoublequoteopen}D\ {\isacharparenleft}f{\isacharcolon}{\isacharcolon}real\ {\isasymRightarrow}\ real{\isacharparenright}\ {\isacharequal}\ f{\isacharprime}\ on\ T\ {\isasymLongrightarrow}\ g\ {\isacharequal}\ {\isacharparenleft}{\isasymlambda}t{\isachardot}\ {\isacharparenleft}f{\isacharprime}\ t{\isacharparenright}\ {\isacharasterisk}\ cos\ {\isacharparenleft}f\ t{\isacharparenright}{\isacharparenright}\ {\isasymLongrightarrow}\isanewline
\ \ D\ {\isacharparenleft}{\isasymlambda}t{\isachardot}\ sin\ {\isacharparenleft}f\ t{\isacharparenright}{\isacharparenright}\ {\isacharequal}\ g\ on\ T{\isachardoublequoteclose}\isanewline
\ \ $\langle \isa{proof}\rangle$\isanewline

\isacommand{lemma}\isamarkupfalse%
\ {\isacharbrackleft}poly{\isacharunderscore}derivatives{\isacharbrackright}{\isacharcolon}\ {\isachardoublequoteopen}D\ {\isacharparenleft}f{\isacharcolon}{\isacharcolon}real\ {\isasymRightarrow}\ real{\isacharparenright}\ {\isacharequal}\ f{\isacharprime}\ on\ T\ {\isasymLongrightarrow}\ g\ {\isacharequal}\ {\isacharparenleft}{\isasymlambda}t{\isachardot}\ {\isacharparenleft}f{\isacharprime}\ t{\isacharparenright}\ {\isacharasterisk}\ exp\ {\isacharparenleft}f\ t{\isacharparenright}{\isacharparenright}\ {\isasymLongrightarrow}\isanewline
\ \ D\ {\isacharparenleft}{\isasymlambda}t{\isachardot}\ exp\ {\isacharparenleft}f\ t{\isacharparenright}{\isacharparenright}\ {\isacharequal}\ g\ on\ T{\isachardoublequoteclose}\isanewline
\ \ $\langle \isa{proof}\rangle$\isanewline
\end{isabellebody}

Isabelle can now apply rules iteratively and check, for pairs of
functions, if one is a derivative of the other. This is often fully
automatic. The following lemma shows an example that involves a mix of
polynomials and transcendental functions beyond differential fields
with $a_0$ to $a_5$ being constants and $t$ the polynomial variable.

\begin{isabellebody}
\isanewline
\isacommand{lemma}\isamarkupfalse%
\ {\isachardoublequoteopen}c\ {\isasymnoteq}\ {\isadigit{0}}\ {\isasymLongrightarrow}\ D\ {\isacharparenleft}{\isasymlambda}t{\isachardot}\ $a_5$\ {\isacharasterisk}\ t{\isacharcircum}{\isadigit{5}}\ {\isacharplus}\ $a_3$\ {\isacharasterisk}\ {\isacharparenleft}t{\isacharcircum}{\isadigit{3}}\ {\isacharslash}\ c{\isacharparenright}\ {\isacharminus}\ $a_2$\ {\isacharasterisk}\ exp\ {\isacharparenleft}t{\isacharcircum}{\isadigit{2}}{\isacharparenright}\ {\isacharplus}\ $a_1$\ {\isacharasterisk}\ cos\ t\ {\isacharplus}\ $a_0${\isacharparenright}\isanewline
\ \ {\isacharequal}\ {\isacharparenleft}{\isasymlambda}t{\isachardot}\ {\isadigit{5}}\ {\isacharasterisk}\ $a_5$\ {\isacharasterisk}\ t{\isacharcircum}{\isadigit{4}}\ {\isacharplus}\ {\isadigit{3}}\ {\isacharasterisk}\ $a_3$\ {\isacharasterisk}\ {\isacharparenleft}t{\isacharcircum}{\isadigit{2}}\ {\isacharslash}\ c{\isacharparenright}\ {\isacharminus}\ {\isadigit{2}}\ {\isacharasterisk}\ $a_2$\ {\isacharasterisk}\ t\ {\isacharasterisk}\ exp\ {\isacharparenleft}t{\isacharcircum}{\isadigit{2}}{\isacharparenright}\ {\isacharminus}\ $a_1$\ {\isacharasterisk}\ sin\ t{\isacharparenright}\ on\ T{\isachardoublequoteclose}\isanewline
\ \ \isacommand{by}\isamarkupfalse%
{\isacharparenleft}auto\ intro{\isacharbang}{\isacharcolon}\ poly{\isacharunderscore}derivatives{\isacharparenright}\isanewline
\end{isabellebody}

The formalisation of more advanced heuristics for such functions, and
the integration of decision procedures for suitable classes, is left
for future work

The complete Isabelle formalisation, including the other two predicate
transformer algebras and the relational semantics, can be found in the
Archive of Formal Proofs~\cite{afp:hybrid}.

We briefly reflect on our experience with the Isabelle formalisation
of our framework. $\MKA$, its relational model and the concrete
relational semantics for traditional while-programs are so far the
most developed and versatile starting point for our hybrid systems
verification components. The full formalisation of a rudimentary Hoare
logic component for this setting using a generalised Kleene algebra
from Isabelle's main libraries fits on two A4 pages~\cite{Struth18}; a
similar development for a Hoare logic for hybrid programs is discussed
in a successor paper~\cite{FosterMS20}. Our standalone $\MKA$-based
verification component for traditional while programs fills about
seven A4 pages.  For hybrid programs, in theory, only a concrete
semantics for hybrid programs needs to be plugged in as a replacement
of the semantics described in Section~\ref{sec:discrete-store}. In
practice, however, Isabelle's instantiations often make theory
hierarchies non-compositional as each type can only be instantiated in
one way. We faced such a clash of instances between Isabelle's Kleene
algebra and analysis hierarchies and hence had to customise the former
for our purposes.

Replacing the intermediate relational semantics by state transformers
required some background work, simply because the former are well
supported by Isabelle whereas the latter are new. Interestingly, it is
possible to propagate theorems automatically along the isomorphisms
between these semantics like for type classes, locales and their
instantiations and interpretation. Isabelle's transfer and lifting
packages provide an infrastructure for this, which remains by and
large unexplored. We leave this for future work.

The categorical approach to transformer quantaloids is more
complex---both conceptually and from a formalisation point of
view---than the $\MKA$ based one, in particular when state
transformers are integrated via the powerset monad. At the level of
verification conditions generation, however, there are almost no
differences.  Once again a stripped down component can be generated
that just suffices for verification condition generation, and we are
using it in subsequent work~\cite{FosterGMS21}. Relative to Isabelle's
main libraries it fills merely four pages~\cite{afp:hybrid}. Working
with quantales instead of quantaloids might seem mathematically
simpler, but with Isabelle it is actually more tedious, as subtypes
for endofunctions need to be created.

In sum, for simple verification tasks, the lightweight stripped down
predicate transformer algebras obtained from $\MKA$ or quantaloids
seem preferable; for more complex program transformations or
refinements, the integration into the full $\MKA$ hierarchy or
categorical predicate transformer component is certainly beneficial.

%%%%%%%%%%%%%%%%%%%%%%%%%%%%%%%%%%%%

\section{Isabelle Support for $\dL$-Style Reasoning}\label{sec:isa-dL}

This section lists our formalisation of semantic variants of the
most important axioms and inference rules of $\dL$ in Isabelle
outlined in Section~\ref{sec:dL}.  It covers  all
three predicate transformer semantics as well as the relational and
state transformer model. Once again, we only show state
transformers in the categorical approach.

We have formalised a generalised version of the $\dL$-rules with
parameters $T$, $S$ and $t_0$ with intervals $U$ and for orbitals. We
can easily instantiate them to $\reals$, $\reals^V$ and $0$,
respectively. This enables users to perform verification proofs in the
style of $\dL$ and establishes soundness of these rules relative to
our semantics as a side effect. First we show our formalisations of
(\ref{eq:DS}) and (\ref{eq:dSolve}).

\begin{isabellebody}
\isanewline
\isacommand{lemma}\isamarkupfalse%
\ DS{\isacharcolon}\ \isanewline
\ \ \isakeyword{fixes}\ c{\isacharcolon}{\isacharcolon}{\isachardoublequoteopen}{\isacharprime}a{\isacharcolon}{\isacharcolon}{\isacharbraceleft}heine{\isacharunderscore}borel{\isacharcomma}\ banach{\isacharbraceright}{\isachardoublequoteclose}\isanewline
\ \ \isakeyword{shows}\ {\isachardoublequoteopen}fb\isactrlsub
{\isasymF}\ {\isacharparenleft}x{\isasymacute}{\isacharequal} {\isacharparenleft}{\isasymlambda}s{\isachardot}\ c{\isacharparenright}\ {\isacharampersand}\ G{\isacharparenright}\ Q\ {\isacharequal}\ {\isacharbraceleft}x{\isachardot}\ {\isasymforall}t{\isachardot}\ {\isacharparenleft}{\isasymforall}{\isasymtau}{\isasymle}t{\isachardot}\ G\ {\isacharparenleft}x{\isacharplus}{\isasymtau}\ {\isacharasterisk}\isactrlsub R\ c{\isacharparenright}{\isacharparenright}\ {\isasymlongrightarrow}\ {\isacharparenleft}x{\isacharplus}t\ {\isacharasterisk}\isactrlsub R\ c{\isacharparenright}\ {\isasymin}\ Q{\isacharbraceright}{\isachardoublequoteclose}\isanewline
\ \ $\langle \isa{proof}\rangle$\isanewline

\isacommand{lemma}\isamarkupfalse%
\ solve{\isacharcolon}\isanewline
\ \ \isakeyword{assumes}\ {\isachardoublequoteopen}local{\isacharunderscore}flow\ f\ UNIV\ UNIV\ {\isasymphi}{\isachardoublequoteclose}\isanewline
\ \ \ \ \isakeyword{and}\ {\isachardoublequoteopen}{\isasymforall}s{\isachardot}\ s\ {\isasymin}\ P\ {\isasymlongrightarrow}\ {\isacharparenleft}{\isasymforall}t{\isachardot}\ {\isacharparenleft}{\isasymforall}{\isasymtau}{\isasymle}t{\isachardot}\ G\ {\isacharparenleft}{\isasymphi}\ {\isasymtau}\ s{\isacharparenright}{\isacharparenright}\ {\isasymlongrightarrow}\ {\isacharparenleft}{\isasymphi}\ t\ s{\isacharparenright}\ {\isasymin}\ Q{\isacharparenright}{\isachardoublequoteclose}\isanewline
\ \ \isakeyword{shows}\ {\isachardoublequoteopen}P\ {\isasymle}\
fb\isactrlsub {\isasymF}\
{\isacharparenleft}x{\isasymacute}{\isacharequal} f\ {\isacharampersand}\ G{\isacharparenright}\ Q{\isachardoublequoteclose}\isanewline
\ \ $\langle \isa{proof}\rangle$\isanewline
\end{isabellebody}

\noindent Next we list semantic variants of the five $\dL$ axioms and
inference rules for reasoning with differential invariants discussed
in Section~\ref{sec:dL}. Recall that due to our semantic approach, 
evolution commands in these rules only require the vector field
\isa{f\ {\isacharcolon}{\isacharcolon}\ {\isachardoublequoteopen}{\isacharprime}a\ {\isasymRightarrow}\ {\isacharprime}a{\isachardoublequoteclose}} and guard \isa{G\ {\isacharcolon}{\isacharcolon}\ {\isachardoublequoteopen}{\isacharprime}a\ {\isasymRightarrow}\ bool{\isachardoublequoteclose}},
while the \isa{x{\isasymacute}{\isacharequal}} is just syntactic sugar 
to resemble ODEs.

\begin{isabellebody}
\isanewline
\isacommand{lemma}\isamarkupfalse%
\ DW{\isacharcolon}\ {\isachardoublequoteopen}fb\isactrlsub
{\isasymF}\ {\isacharparenleft}x{\isasymacute}{\isacharequal}\ f\
{\isacharampersand}\ G{\isacharparenright}\ Q\ {\isacharequal}\
fb\isactrlsub {\isasymF}\
{\isacharparenleft}x{\isasymacute}{\isacharequal} f\ {\isacharampersand}\ G{\isacharparenright}\ {\isacharbraceleft}s{\isachardot}\ G\ s\ {\isasymlongrightarrow}\ s\ {\isasymin}\ Q{\isacharbraceright}{\isachardoublequoteclose}\isanewline
\ \ $\langle \isa{proof}\rangle$\isanewline

\isacommand{lemma}\isamarkupfalse%
\ dW{\isacharcolon}\
{\isachardoublequoteopen}{\isacharbraceleft}s{\isachardot}\ G\
s{\isacharbraceright}\ {\isasymle}\ Q\ {\isasymLongrightarrow}\ P\
{\isasymle}\ fb\isactrlsub {\isasymF}\
{\isacharparenleft}x{\isasymacute}{\isacharequal} f\ {\isacharampersand}\ G{\isacharparenright}\ Q{\isachardoublequoteclose}\isanewline
\ \ $\langle \isa{proof}\rangle$\isanewline

\isacommand{lemma}\isamarkupfalse%
\ DC{\isacharcolon}\isanewline
\ \ \isakeyword{assumes}\ {\isachardoublequoteopen}fb\isactrlsub
{\isasymF}\ {\isacharparenleft}x{\isasymacute}{\isacharequal} f\ {\isacharampersand}\ G{\isacharparenright}\ {\isacharbraceleft}s{\isachardot}\ C\ s{\isacharbraceright}\ {\isacharequal}\ UNIV{\isachardoublequoteclose}\isanewline
\ \ \isakeyword{shows}\ {\isachardoublequoteopen}fb\isactrlsub
{\isasymF}\ {\isacharparenleft}x{\isasymacute}{\isacharequal}\ f\
{\isacharampersand}\ G{\isacharparenright}\ Q\ {\isacharequal}\
fb\isactrlsub {\isasymF}\
{\isacharparenleft}x{\isasymacute}{\isacharequal} f\ {\isacharampersand}\ {\isacharparenleft}{\isasymlambda}s{\isachardot}\ G\ s\ {\isasymand}\ C\ s{\isacharparenright}{\isacharparenright}\ Q{\isachardoublequoteclose}\isanewline
\ \ $\langle \isa{proof}\rangle$\isanewline

\isacommand{lemma}\isamarkupfalse%
\ dC{\isacharcolon}\isanewline
\ \ \isakeyword{assumes}\ {\isachardoublequoteopen}P\ {\isasymle}\
fb\isactrlsub {\isasymF}\
{\isacharparenleft}x{\isasymacute}{\isacharequal} f\ {\isacharampersand}\ G{\isacharparenright}\ {\isacharbraceleft}s{\isachardot}\ C\ s{\isacharbraceright}{\isachardoublequoteclose}\isanewline
\ \ \ \ \isakeyword{and}\ {\isachardoublequoteopen}P\ {\isasymle}\
fb\isactrlsub {\isasymF}\
{\isacharparenleft}x{\isasymacute}{\isacharequal} f\ {\isacharampersand}\ {\isacharparenleft}{\isasymlambda}s{\isachardot}\ G\ s\ {\isasymand}\ C\ s{\isacharparenright}{\isacharparenright}\ Q{\isachardoublequoteclose}\isanewline
\ \ \isakeyword{shows}\ {\isachardoublequoteopen}P\ {\isasymle}\
fb\isactrlsub {\isasymF}\
{\isacharparenleft}x{\isasymacute}{\isacharequal} f\ {\isacharampersand}\ G{\isacharparenright}\ Q{\isachardoublequoteclose}\isanewline
\ \ $\langle \isa{proof}\rangle$\isanewline

\isacommand{lemma}\isamarkupfalse%
\ dI{\isacharcolon}\isanewline
\ \ \isakeyword{assumes}\ {\isachardoublequoteopen}P\ {\isasymle}\
{\isacharbraceleft}s{\isachardot}\ I\
s{\isacharbraceright}{\isachardoublequoteclose}\isanewline
\ \ \ \  \isakeyword{and}\
{\isachardoublequoteopen}diff{\isacharunderscore}invariant\ I\ f\
UNIV\ UNIV\ {\isadigit{0}}\ G{\isachardoublequoteclose}\isanewline
\ \ \ \  \isakeyword{and}\ {\isachardoublequoteopen}{\isacharbraceleft}s{\isachardot}\ I\ s{\isacharbraceright}\ {\isasymle}\ Q{\isachardoublequoteclose}\isanewline
\ \ \isakeyword{shows}\ {\isachardoublequoteopen}P\ {\isasymle}\
fb\isactrlsub {\isasymF}\
{\isacharparenleft}x{\isasymacute}{\isacharequal} f\ {\isacharampersand}\ G{\isacharparenright}\ Q{\isachardoublequoteclose}\isanewline
\ \ $\langle \isa{proof}\rangle$\isanewline
\end{isabellebody}

Additional $\dL$ rules can easily be formalised. More recent work
features for instance a ghost rule~\cite{FosterGMS21}, which is
heavily used for reasoning with invariants in $\dL$, but seems less
relevant to our semantic approach~\cite{MitschMJZWZ20}.

%%%%%%%%%%%%%%%%%%%%%%%%%%%%%%%%%%%%

\section{Verification Examples}\label{sec:examples}

This section explains the formalisation of the bouncing ball examples
from Section~\ref{sec:hybrid-store} and
\ref{sec:differential-invariants} with Isabelle; and we add two
further verification examples using a simple circular pendulum.  All
four of them use Isabelle's type $\isa{\isadigit{2}}$ of two elements.
It denotes the set of variables $V$ of hybrid programs over the state
space $\reals^V$ for $|V|=2$. We follow Isabelle's notation and write
\isa{{\isadigit{0}}{\isacharcolon}{\isacharcolon}{\isadigit{2}}} and
\isa{{\isadigit{1}}{\isacharcolon}{\isacharcolon}{\isadigit{2}}} for
the two variables and their type. As such a formalisation of variables
is rather unwieldy, more recent extensions to our framework support
more general name spaces, more sophisticated store models and a more
user-friendly specification language for hybrid programs and
assertions~\cite{FosterGMS21}. The examples in this section should
therefore be taken cum grano salis.

\begin{example}[Bouncing Ball via Flow]\label{ex:bouncing-ball-flow}
  First, we formalise Example~\ref{ex:ball} with our verification
  components for flows, using our first workflow. We write
  \isa{{\isadigit{0}}{\isacharcolon}{\isacharcolon}{\isadigit{2}}} for
  the ball's position starting from height $h$,
  \isa{{\isadigit{1}}{\isacharcolon}{\isacharcolon}{\isadigit{2}}} for
  its velocity, and $s{\isachardollar}{\isadigit{0}}$ and
  $s{\isachardollar}{\isadigit{1}}$ for $s_x$ and $s_v$. We formalise
  the vector field $f\, (s_x,s_v)^T = (s_v,-g)^T$ for the ball as

\begin{isabellebody}
\isanewline
\isacommand{abbreviation}\isamarkupfalse%
\ fball\ {\isacharcolon}{\isacharcolon}\ {\isachardoublequoteopen}real\ {\isasymRightarrow}\ real{\isacharcircum}{\isadigit{2}}\ {\isasymRightarrow}\ real{\isacharcircum}{\isadigit{2}}{\isachardoublequoteclose}\ {\isacharparenleft}{\isachardoublequoteopen}f{\isachardoublequoteclose}{\isacharparenright}\ \isanewline
\ \ \isakeyword{where}\ {\isachardoublequoteopen}f\ g\ s\ {\isasymequiv}\ {\isacharparenleft}{\isasymchi}\ i{\isachardot}\ if\ i{\isacharequal}{\isadigit{0}}\ then\ s{\isachardollar}{\isadigit{1}}\ else\ g{\isacharparenright}{\isachardoublequoteclose}\isanewline
\end{isabellebody}

\noindent We can now state the partial correctness specification for
the bouncing ball in Isabelle, where the loop invariant $I$ is
that of Section~\ref{sec:hybrid-store}, but written
slightly differently to enhance proof automation.

\begin{isabellebody}
\isanewline
\isacommand{lemma}\isamarkupfalse%
\ bouncing{\isacharunderscore}ball{\isacharcolon}\ {\isachardoublequoteopen}g\ {\isacharless}\ {\isadigit{0}}\ {\isasymLongrightarrow}\ h\ {\isasymge}\ {\isadigit{0}}\ {\isasymLongrightarrow}\ \isanewline
\ \ {\isacharbraceleft}s{\isachardot}\ s{\isachardollar}{\isadigit{0}}\ {\isacharequal}\ h\ {\isasymand}\ s{\isachardollar}{\isadigit{1}}\ {\isacharequal}\ {\isadigit{0}}{\isacharbraceright}\ {\isasymle}\ fb\isactrlsub {\isasymF}\ \isanewline
\ \ {\isacharparenleft}LOOP\ {\isacharparenleft}\isanewline
\ \ \ \ {\isacharparenleft}x{\isasymacute}{\isacharequal}{\isacharparenleft}f\ g{\isacharparenright}\ {\isacharampersand}\ {\isacharparenleft}{\isasymlambda}\ s{\isachardot}\ s{\isachardollar}{\isadigit{0}}\ {\isasymge}\ {\isadigit{0}}{\isacharparenright}{\isacharparenright}\ {\isacharsemicolon}\ \isanewline
\ \ \ \ {\isacharparenleft}IF\ {\isacharparenleft}{\isasymlambda}\ s{\isachardot}\ s{\isachardollar}{\isadigit{0}}\ {\isacharequal}\ {\isadigit{0}}{\isacharparenright}\ THEN\ {\isacharparenleft}{\isadigit{1}}\ {\isacharcolon}{\isacharcolon}{\isacharequal}\ {\isacharparenleft}{\isasymlambda}s{\isachardot}\ {\isacharminus}\ s{\isachardollar}{\isadigit{1}}{\isacharparenright}{\isacharparenright}\ ELSE\ skip{\isacharparenright}{\isacharparenright}\isanewline
\ \ INV\ {\isacharparenleft}{\isasymlambda}s{\isachardot}\ {\isadigit{0}}\ {\isasymle}\ s{\isachardollar}{\isadigit{0}}\ {\isasymand}{\isadigit{2}}\ {\isasymcdot}\ g\ {\isasymcdot}\ s{\isachardollar}{\isadigit{0}}\ {\isacharminus}\ {\isadigit{2}}\ {\isasymcdot}\ g\ {\isasymcdot}\ h\ {\isacharminus}\ s{\isachardollar}{\isadigit{1}}\ {\isasymcdot}\ s{\isachardollar}{\isadigit{1}}\ {\isacharequal}\ {\isadigit{0}}{\isacharparenright}{\isacharparenright}\isanewline
\ \ {\isacharbraceleft}s{\isachardot}\ {\isadigit{0}}\ {\isasymle}\
s{\isachardollar}{\isadigit{0}}\ {\isasymand}\
s{\isachardollar}{\isadigit{0}}\ {\isasymle}\
h{\isacharbraceright}{\isachardoublequoteclose}\isanewline
\end{isabellebody}

\noindent The proof of this lemma is shown below. It follows that in
Example~\ref{ex:ball}, but requires some intermediate lemmas. For
example, if we first apply rule \isa{ffb{\isacharunderscore}loopI}
(\ref{eq:wlp-star}), the subgoals $P\leq I$ and $I\leq Q$, for
$P= (\lambda s.\ s_x = h\land s_v = 0)$ and
$Q = (\lambda s.\ 0\leq s_x\leq h)$, need to be proven. They can be
discharged automatically after supplying some lemmas about real
arithmetic, which have been bundled under the name
\isa{bb{\isacharunderscore}real{\isacharunderscore}arith}. We show one
of them below to give an impression.

\begin{isabellebody}
\isanewline
\isacommand{named{\isacharunderscore}theorems}\isamarkupfalse%
\ bb{\isacharunderscore}real{\isacharunderscore}arith\ {\isachardoublequoteopen}real\ arithmetic\ properties\ for\ the\ bouncing\ ball{\isachardot}{\isachardoublequoteclose}\isanewline
\isanewline
\isacommand{lemma}\isamarkupfalse%
\ {\isacharbrackleft}bb{\isacharunderscore}real{\isacharunderscore }arith{\isacharbrackright}{\isacharcolon}\
{\isachardoublequoteopen}{\isadigit{0}}\ {\isachargreater}\
g{\isachardoublequoteclose}
 {\isasymLongrightarrow}\ {\isadigit{2}}\
{\isasymcdot}\ g\ {\isasymcdot}\ x\ {\isacharminus}\ {\isadigit{2}}\
{\isasymcdot}\ g\ {\isasymcdot}\ h\ {\isacharequal}\ v\ {\isasymcdot}\
 {\isasymLongrightarrow}\
 {\isachardoublequoteopen}{\isacharparenleft}x{\isacharcolon}{\isacharcolon}real{\isacharparenright}\ {\isasymle}\ h{\isachardoublequoteclose}\isanewline
\ \ $\langle \isa{proof}\rangle$\isanewline
\end{isabellebody}

\noindent These properties depend on distributivity and commutativity
properties that Isabelle cannot simplify immediately. As we are not
working within a well defined language, such as differential rings or
fields, we have not attempted to automate them any further, so that
proofs require some user interaction. 

The remaining rules, that is, \isa{ffb{\isacharunderscore}kcomp}
(\ref{eq:wlp-seq}),
\isa{ffb{\isacharunderscore}if{\isacharunderscore}then{\isacharunderscore}else}
(\ref{eq:wlp-cond}), and \isa{ffb{\isacharunderscore}assign}
(\ref{eq:wlp-asgn}), have been added to Isabelle's automatic proof
tools. It then remains to compute the $\wlp$ for the evolution command
of the bouncing ball. To use
\isa{local{\isacharunderscore}flow{\isachardot
  }ffb{\isacharunderscore}g{\isacharunderscore}ode}
(\ref{eq:wlp-evl}), we follow the procedure in
Section~\ref{sec:hybrid-store}. We need to check that the vector field
is Lipschitz continuous, supply the local flow as in
Example~\ref{ex:ball}, and check that it solves the IVP and satisfies
the flow conditions.

\begin{isabellebody}
\isanewline
\isacommand{abbreviation}\isamarkupfalse%
\ ball{\isacharunderscore}flow\ {\isacharcolon}{\isacharcolon}\ {\isachardoublequoteopen}real\ {\isasymRightarrow}\ real\ {\isasymRightarrow}\ real{\isacharcircum}{\isadigit{2}}\ {\isasymRightarrow}\ real{\isacharcircum}{\isadigit{2}}{\isachardoublequoteclose}\ {\isacharparenleft}{\isachardoublequoteopen}{\isasymphi}{\isachardoublequoteclose}{\isacharparenright}\ \isanewline
\ \ \isakeyword{where}\ {\isachardoublequoteopen}{\isasymphi}\ g\ t\ s\ {\isasymequiv}\ {\isacharparenleft}{\isasymchi}\ i{\isachardot}\ if\ i{\isacharequal}{\isadigit{0}}\ then\ g\ {\isasymcdot}\ t\ {\isacharcircum}\ {\isadigit{2}}{\isacharslash}{\isadigit{2}}\ {\isacharplus}\ s{\isachardollar}{\isadigit{1}}\ {\isasymcdot}\ t\ {\isacharplus}\ s{\isachardollar}{\isadigit{0}}\ else\ g\ {\isasymcdot}\ t\ {\isacharplus}\ s{\isachardollar}{\isadigit{1}}{\isacharparenright}{\isachardoublequoteclose}\isanewline
\isanewline
\isacommand{lemma}\isamarkupfalse%
\ local{\isacharunderscore}flow{\isacharunderscore}ball{\isacharcolon}\ {\isachardoublequoteopen}local{\isacharunderscore}flow\ {\isacharparenleft}f\ g{\isacharparenright}\ UNIV\ UNIV\ {\isacharparenleft}{\isasymphi}\ g{\isacharparenright}{\isachardoublequoteclose}\isanewline
\ \ $\langle \isa{proof}\rangle$\isanewline
\end{isabellebody}

\noindent The arithmetic computations with real numbers at the end
of Example~\ref{ex:ball} are then discharged automatically by adding
the rules in \isa{bb{\isacharunderscore}real{\isacharunderscore}arith}
to Isabelle's automatic tools. The resulting two-line proof of the
bouncing ball is shown below.

\begin{isabellebody}\isanewline
\isacommand{apply}\isamarkupfalse%
{\isacharparenleft}rule\ wp{\isacharunderscore}loopI{\isacharcomma}\ simp{\isacharunderscore}all\ add{\isacharcolon}\ local{\isacharunderscore}flow{\isachardot}wp{\isacharunderscore }g{\isacharunderscore}ode{\isacharbrackleft}OF\ local{\isacharunderscore}flow{\isacharunderscore}ball{\isacharbrackright}{\isacharparenright}\isanewline
\ \ \isacommand{by}\isamarkupfalse%
\ {\isacharparenleft}auto\ simp{\isacharcolon}\ bb{\isacharunderscore}real{\isacharunderscore }arith{\isacharparenright}\isanewline
\end{isabellebody}

Overall, the verification proof covers less than a page and a half in
the proof document---and this is mainly due to the few arithmetic
calculations in the background that require user interaction. All
other proofs make heavy use of Isabelle's simplifiers and are by and
large automatic.  \qed
\end{example}

\begin{example}[Bouncing Ball via Invariant]\label{ex:bouncing-ball-inv}
  This example formalises the invariant-based proof from
  Example~\ref{ex:ball-inv} using our second workflow. The correctness
  specification changes in that we annotate the differential invariant
  ab initio.

\begin{isabellebody}
\isanewline
\isacommand{lemma}\isamarkupfalse%
\ bouncing{\isacharunderscore}ball{\isacharunderscore }invariants{\isacharcolon}\ {\isachardoublequoteopen}g\ {\isacharless}\ {\isadigit{0}}\ {\isasymLongrightarrow}\ h\ {\isasymge}\ {\isadigit{0}}\ {\isasymLongrightarrow}\ \isanewline
\ \ {\isacharbraceleft}s{\isachardot}\ s{\isachardollar}{\isadigit{0}}\ {\isacharequal}\ h\ {\isasymand}\ s{\isachardollar}{\isadigit{1}}\ {\isacharequal}\ {\isadigit{0}}{\isacharbraceright}\ {\isasymle}\ fb\isactrlsub {\isasymF}\ \isanewline
\ \ {\isacharparenleft}LOOP\ {\isacharparenleft}\isanewline
\ \ \ \ {\isacharparenleft}x{\isasymacute}{\isacharequal}{\isacharparenleft}f\ g{\isacharparenright}\ {\isacharampersand}\ {\isacharparenleft}{\isasymlambda}\ s{\isachardot}\ s{\isachardollar}{\isadigit{0}}\ {\isasymge}\ {\isadigit{0}}{\isacharparenright}\ DINV\ {\isacharparenleft}{\isasymlambda}s{\isachardot}\ {\isadigit{2}}\ {\isasymcdot}\ g\ {\isasymcdot}\ s{\isachardollar}{\isadigit{0}}\ {\isacharminus}\ {\isadigit{2}}\ {\isasymcdot}\ g\ {\isasymcdot}\ h\ {\isacharminus}\ s{\isachardollar}{\isadigit{1}}\ {\isasymcdot}\ s{\isachardollar}{\isadigit{1}}\ {\isacharequal}\ {\isadigit{0}}{\isacharparenright}{\isacharparenright}\ {\isacharsemicolon}\ \isanewline
\ \ \ \ {\isacharparenleft}IF\ {\isacharparenleft}{\isasymlambda}\ s{\isachardot}\ s{\isachardollar}{\isadigit{0}}\ {\isacharequal}\ {\isadigit{0}}{\isacharparenright}\ THEN\ {\isacharparenleft}{\isadigit{1}}\ {\isacharcolon}{\isacharcolon}{\isacharequal}\ {\isacharparenleft}{\isasymlambda}s{\isachardot}\ {\isacharminus}\ s{\isachardollar}{\isadigit{1}}{\isacharparenright}{\isacharparenright}\ ELSE\ skip{\isacharparenright}{\isacharparenright}\isanewline
\ \ INV\ {\isacharparenleft}{\isasymlambda}s{\isachardot}\ {\isadigit{0}}\ {\isasymle}\ s{\isachardollar}{\isadigit{0}}\ {\isasymand}{\isadigit{2}}\ {\isasymcdot}\ g\ {\isasymcdot}\ s{\isachardollar}{\isadigit{0}}\ {\isacharminus}\ {\isadigit{2}}\ {\isasymcdot}\ g\ {\isasymcdot}\ h\ {\isacharminus}\ s{\isachardollar}{\isadigit{1}}\ {\isasymcdot}\ s{\isachardollar}{\isadigit{1}}\ {\isacharequal}\ {\isadigit{0}}{\isacharparenright}{\isacharparenright}\isanewline
\ \ {\isacharbraceleft}s{\isachardot}\ {\isadigit{0}}\ {\isasymle}\ s{\isachardollar}{\isadigit{0}}\ {\isasymand}\ s{\isachardollar}{\isadigit{0}}\ {\isasymle}\ h{\isacharbraceright}{\isachardoublequoteclose}\isanewline
\ \ \isacommand{apply}\isamarkupfalse%
{\isacharparenleft}rule\ ffb{\isacharunderscore}loopI{\isacharcomma}\ simp{\isacharunderscore}all{\isacharparenright}\isanewline
\ \ \ \ \isacommand{apply}\isamarkupfalse%
{\isacharparenleft}force{\isacharcomma}\ force\ simp{\isacharcolon}\ bb{\isacharunderscore}real{\isacharunderscore }arith{\isacharparenright}\isanewline
\ \ \isacommand{by}\isamarkupfalse%
{\isacharparenleft}rule\ ffb{\isacharunderscore}g{\isacharunderscore}odei{\isacharparenright }\ {\isacharparenleft}auto\ intro{\isacharbang}{\isacharcolon}\ diff{\isacharunderscore}invariant{\isacharunderscore}rules\ poly{\isacharunderscore}derivatives{\isacharparenright}\isanewline
\end{isabellebody}

As before, the first line of the proof applies the non-evolution
$\wlp$-rules; the second one discharges $P\leq I$ and $I\leq Q$ for
loop invariant $I$. It remains to show that
$I\leq |{x'=f\, \&\, G\ \mathsf{DINV}\ I_d}]I$ for differential
invariant $I_d$.

For this we unfold the annotated invariant rule
\isa{ffb{\isacharunderscore}g{\isacharunderscore}odei}, which performs
step (2) of Example~\ref{ex:ball-inv} and generates the proof
obligation $I_d\leq |{x'=f\, \&\, G}]I_d$. The proof of this fact is
automatic because the rule
\isa{ffb{\isacharunderscore}diff{\isacharunderscore}inv}
(Lemma~\ref{P:inv-props2}) has been added to Isabelle's
simplifiers. Step (1) is checked with our rules for derivatives
\isa{poly{\isacharunderscore}derivatives} and differential invariants
\isa{diff{\isacharunderscore}invariant{\isacharunderscore}rules}
(Proposition~\ref{P:invrules}). The full verification covers less than
a page in the proof document.\qed
\end{example}

\begin{example}[Circular Pendulum via Invariant]\label{ex:pendulum-inv}
The ODEs 
\begin{equation*}
  x'\, t= y\, t\qquad \text{ and } \qquad y'\, t = -x\, t,
\end{equation*}
  which correspond to
  the vector field $f:\reals^V\to \reals^V$,
\begin{equation*}
f\, 
\begin{pmatrix}
  s_x\\
s_y
\end{pmatrix}
=
\begin{pmatrix}
  0 & 1\\
-1 & 0
\end{pmatrix}
\begin{pmatrix}
  s_x\\
s_y
\end{pmatrix},
 \end{equation*}
 for $V=\{x,y\}$, describe the kinematics of a circular pendulum. All
 orbits are ``governed'' by the separable differential equation
 \begin{equation*}
   \frac{\d y}{\d x} = \frac{y'}{x'} = -\frac{x}{y}, 
 \end{equation*}
 obtained by parametric derivation. Rewriting it as $x\d x+y\d y= 0$ and
 integrating both sides yields $x^2+y^2=r^2$, for some constant $r>0$,
 which describes the circular orbits of the ODEs. This leads to the
 differential invariant
\begin{equation*}
  I = \left(\lambda s.\ s_x^2 + s_y^2 = r^2\right),\qquad (r\ge 0).
\end{equation*}

Once again we apply our procedure from
Section~\ref{sec:differential-invariants} to prove
\begin{equation*}
  I = |x'= f\,  \&\,  \top] I
\end{equation*}
using Lemma \ref{P:inv-props3}, as the guard is trivial.
\begin{enumerate}
\item Using Proposition \ref{P:invrules} with $\mu\, s = s_x^2$ and
  $\nu\, s = r^2 - s_y^2$ we check that $I$ is an invariant, showing
  that $(\mu\circ X)' =(\nu\circ X)'$ for all $X\in \Sols\, f\, T\,
  s$, and hence
  \begin{equation*}
    \left((X\, t\, x)^2\right)' = \left(r^2 - (X\, t\, y)^2\right)'.
  \end{equation*}
We calculate
\begin{equation*}
  \left((X\, t\, x)^2\right)' = 2(X\, t\, x)(X'\, t\, x)
  = -2(X'\, t\, y)(X\, t\, y) =
   \left(r^2 - (X\, t\, y)^2\right)'.
\end{equation*}
It therefore follows from Proposition~\ref{P:invrules}(1) that $I$ is an invariant for $f$
along $\reals^V$; $I = |{x'=f\ \&\ \top}]  I$  holds by Lemma \ref{P:inv-props3}.
\item As $P=I=Q$, there is nothing to show.
\end{enumerate}

In the Isabelle formalisation, we introduce a name for the vector
field and show that $I$ is an invariant for it---as the invariant is
the pre- and postcondition, an annotation is not needed. The verification 
is straightforward following the workflow of the previous example, and
even simpler because the pre- and postconditions are just the
differential invariant.

\begin{isabellebody}

\isanewline
\isacommand{abbreviation}\isamarkupfalse%
\ fpend\ {\isacharcolon}{\isacharcolon}\ {\isachardoublequoteopen}real{\isacharcircum}{\isadigit{2}}\ {\isasymRightarrow}\ real{\isacharcircum}{\isadigit{2}}{\isachardoublequoteclose}\ {\isacharparenleft}{\isachardoublequoteopen}f{\isachardoublequoteclose}{\isacharparenright}\isanewline
\ \ \isakeyword{where}\ {\isachardoublequoteopen}f\ s\ {\isasymequiv}\ {\isacharparenleft}{\isasymchi}\ i{\isachardot}\ if\ i{\isacharequal}{\isadigit{0}}\ then\ s{\isachardollar}{\isadigit{1}}\ else\ {\isacharminus}s{\isachardollar}{\isadigit{0}}{\isacharparenright}{\isachardoublequoteclose}\isanewline
\isanewline
\isacommand{lemma}\isamarkupfalse%
\ pendulum{\isacharcolon}\ {\isachardoublequoteopen}{\isacharbraceleft}s{\isachardot}\ r\isactrlsup {\isadigit{2}}\ {\isacharequal}\ {\isacharparenleft}s{\isachardollar}{\isadigit{0}}{\isacharparenright}\isactrlsup {\isadigit{2}}\ {\isacharplus}\ {\isacharparenleft}s{\isachardollar}{\isadigit{1}}{\isacharparenright}\isactrlsup {\isadigit{2}}{\isacharbraceright}\ {\isasymle}\ fb\isactrlsub {\isasymF}\ {\isacharparenleft}x{\isasymacute}{\isacharequal}\ f\ {\isacharampersand}\ G{\isacharparenright}\ {\isacharbraceleft}s{\isachardot}\ r\isactrlsup {\isadigit{2}}\ {\isacharequal}\ {\isacharparenleft}s{\isachardollar}{\isadigit{0}}{\isacharparenright}\isactrlsup {\isadigit{2}}\ {\isacharplus}\ {\isacharparenleft}s{\isachardollar}{\isadigit{1}}{\isacharparenright}\isactrlsup {\isadigit{2}}{\isacharbraceright}{\isachardoublequoteclose}\isanewline
\ \ \isacommand{by}\isamarkupfalse%
\ {\isacharparenleft}auto\ intro{\isacharbang}{\isacharcolon}\ diff{\isacharunderscore}invariant{\isacharunderscore}rules\ poly{\isacharunderscore}derivatives{\isacharparenright}\isanewline
\end{isabellebody}

\item The Isabelle proof is automatic if we supply the
  tactic for derivative rules.  \qed
\end{example}

\begin{example}[Circular Pendulum via Flow]\label{ex:pendulum-flow}
  Alternatively, the kinematic equations for the circular pendulum
  from Example~\ref{ex:pendulum-flow} can of course be solved using
  linear combinations of trigonometric functions.  Yet first we need
  to show that the vector field $f$ is Lipschitz continuous with
  constant 1. Next we supply the flow
\begin{equation*}
\flow_s\, t = 
\begin{pmatrix}
  \cos t  & \sin t\\
-\sin t  &\cos t
\end{pmatrix}
\begin{pmatrix}
  s_x\\
s_y
\end{pmatrix}.
\end{equation*} 
We need to check that it solves the IVP $(f,s)$ for all
$s\in \reals^V$ and that it satisfies the flow conditions for
$T=\reals$ and $S=\reals^V$. As an example calculation,
\begin{align*}
  \flow_s'\, t = 
\begin{pmatrix}
  -\sin t  & \cos t\\
-\cos t  &-\sin t
\end{pmatrix}
\begin{pmatrix}
  s_x\\
s_y
\end{pmatrix}
=
\begin{pmatrix}
  0 & 1\\
-1 & 0
\end{pmatrix}
\begin{pmatrix}
  \cos t  &\sin t\\
-\sin t  &\cos t
\end{pmatrix}
\begin{pmatrix}
  s_x\\
s_y
\end{pmatrix}
= f\, (\flow_s\, t).
\end{align*}
The remaining conditions are left to the reader.

To compute $|x'=f\, \&\, \top]I$, we expand (\ref{eq:wlp-evl}). This
yields 
\begin{align*}
  |x'=f\, \&\, \top]I\, s &= \forall t.\ I\, (\flow_s\, t)\\
& = \left( \forall t.\ (\flow_s\, t\, x)^2 + (\flow_s\, t\, y)^2 =
  r^2\right)\\
& = \left( \forall t.\ (s_x\cos t + s_y\sin t)^2 + (s_y\cos t -
  s_x\sin t)^2 = r^2\right)\\
& = \left( \forall t.\ s_x^2(\sin^2 t +\cos^2 t) + s_y^2(\sin^2 t +
  \cos^2 t) = r^2\right)\\
%& = \left( \forall t.\ s_x^2 + s_y^2 = r^2\right)\\
&= I\, s.
\end{align*}

In the Isabelle proof along these lines, we first prove that the
vector field satisfies the conditions of the  Picard-Lindel\"of
theorem. To this end we need to unfold the locale definitions, then
introduce the Lipschitz constant, and call Isabelle's simplifiers.
Next, to prove that the solution supplied is a flow and a
solution to the IVP, we unfold definitions and finish the proof by
checking that the derivative of the flow in each coordinate coincides
with the vector field in that coordinate. The introduction of the flow
and these lemmas are shown below.

\begin{isabellebody}
\isanewline
\isacommand{abbreviation}\isamarkupfalse%
\ pend{\isacharunderscore}flow\ {\isacharcolon}{\isacharcolon}\ {\isachardoublequoteopen}real\ {\isasymRightarrow}\ real{\isacharcircum}{\isadigit{2}}\ {\isasymRightarrow}\ real{\isacharcircum}{\isadigit{2}}{\isachardoublequoteclose}\ {\isacharparenleft}{\isachardoublequoteopen}{\isasymphi}{\isachardoublequoteclose}{\isacharparenright}\isanewline
\ \ \isakeyword{where}\ {\isachardoublequoteopen}{\isasymphi}\ t\ s\ {\isasymequiv}\ {\isacharparenleft}{\isasymchi}\ i{\isachardot}\ if\ i\ {\isacharequal}\ {\isadigit{0}}\ then\ s{\isachardollar}{\isadigit{0}}\ {\isasymcdot}\ cos\ t\ {\isacharplus}\ s{\isachardollar}{\isadigit{1}}\ {\isasymcdot}\ sin\ t\ else\ s{\isachardollar}{\isadigit{1}}\ {\isasymcdot}\ cos\ t\ {\isacharminus}\ s{\isachardollar}{\isadigit{0}}\ {\isasymcdot}\ sin\ t{\isacharparenright}{\isachardoublequoteclose}\isanewline

\isacommand{lemma}\isamarkupfalse%
\ local{\isacharunderscore}flow{\isacharunderscore}pend{\isacharcolon}\ {\isachardoublequoteopen}local{\isacharunderscore}flow\ f\ UNIV\ UNIV\ {\isasymphi}{\isachardoublequoteclose}\isanewline
\ \ $\langle \isa{proof}\rangle$\isanewline
\end{isabellebody}

\noindent The proof of the correctness specification requires
only an application of the $\wlp$ rule
\isa{local{\isacharunderscore}flow{\isachardot
  }ffb{\isacharunderscore}g{\isacharunderscore}ode}
(\ref{eq:wlp-evl}) and Isabelle's simplifier.

\begin{isabellebody}
\isanewline
\isacommand{lemma}\isamarkupfalse%
\ pendulum{\isacharcolon}\ {\isachardoublequoteopen}{\isacharbraceleft}s{\isachardot}\ r\isactrlsup {\isadigit{2}}\ {\isacharequal}\ {\isacharparenleft}s{\isachardollar}{\isadigit{0}}{\isacharparenright}\isactrlsup {\isadigit{2}}\ {\isacharplus}\ {\isacharparenleft}s{\isachardollar}{\isadigit{1}}{\isacharparenright}\isactrlsup {\isadigit{2}}{\isacharbraceright}\ {\isasymle}\ fb\isactrlsub {\isasymF}\ {\isacharparenleft}x{\isasymacute}{\isacharequal}f\ {\isacharampersand}\ G{\isacharparenright}\ {\isacharbraceleft}s{\isachardot}\ r\isactrlsup {\isadigit{2}}\ {\isacharequal}\ {\isacharparenleft}s{\isachardollar}{\isadigit{0}}{\isacharparenright}\isactrlsup {\isadigit{2}}\ {\isacharplus}\ {\isacharparenleft}s{\isachardollar}{\isadigit{1}}{\isacharparenright}\isactrlsup {\isadigit{2}}{\isacharbraceright}{\isachardoublequoteclose}\isanewline
\ \ \isacommand{by}\isamarkupfalse%
\ {\isacharparenleft}force\ simp{\isacharcolon}\ local{\isacharunderscore}flow{\isachardot}ffb{\isacharunderscore}g{\isacharunderscore}ode{\isacharbrackleft}OF\ local{\isacharunderscore}flow{\isacharunderscore}pend{\isacharbrackright}{\isacharparenright}
\end{isabellebody} 
\hfill\qed
\end{example}

All four examples have been based on the categorical approach and the
state transformer semantics. Alternative formalisations for the other
predicate transformer algebras and the relational semantics can be
found in other verification components~\cite{afp:hybrid}.  In the
$\MKA$-based component, the proofs using the relational and the state
transformer semantics are precisely the same, which underpins the
modularity of our approach. In the other components we could certainly
achieve the same effect by simply rewriting names and adjusting some
types.

Transcendental functions cannot be expressed directly in $\dL$'s term
language, yet we can use them smoothly and easily with Isabelle with
the tactic outlined in Section~\ref{sec:isa-wlp}.  Both the
differential invariant workflow and the flow-based workflow benefit
from these rules. In fact, both approaches are very similar for the
pendulum example: both need a handful of lemmas to prove the partial
correctness specification $I = |{x'=f\ \&\ \true}] I$, and both
require a creative step in the form of introducing a differential
invariant or the flow for the system.

We have presented the pendulum example in matrix notation as this
points to a common feature of many applications: their dynamics can be
described by linear systems of ODEs that are representable by matrices
and have uniform solutions given by a matrix exponential that can be
computed with standard methods from linear algebra.  The development
of domain-specific techniques for linear systems with Isabelle has
been the subject of a successor article~\cite{Munive20}.  Beyond these
simple examples, our approach has successfully tackled a large set of
benchmarks from a systems competition~\cite{MitschMJZWZ20} and been
fine-tuned for proof automation, so that the size of proofs and level
of user interaction reported in this article is no longer
representative. More information about the background theory
development with Isabelle and the methods and heuristics programmed
can be found in the first author's doctoral
dissertation~\cite{Munive21}. A more far-reaching integration of
solvers and decision procedures, or procedures for invariant learning,
as oracles or with correctness guarantees, is of course crucial to the
applicability of this framework, but beyond the semantic
considerations of this article. It is left for future work.

%%%%%%%%%%%%%%%%%%%%%%%%%%%%%%%%%%%%

\section{Outlook: A Flow-Based Verification Component
}\label{sec:flow-component}

The verification components presented so far adhered very much to the
pessimistic interactive theorem proving mindset that prefers the
internal reconstruction of all external results. This section briefly
outlines a fourth, more optimistic verification component that
deviates entirely from the vector-field-based approach of $\dL$ and
works directly with flows or solutions to IVPs. It shifts
responsibility for the correctness of solutions entirely to users---or
the computer algebra system they could or should use. This is common
practice for instance when working with hybrid
automata~\cite{DoyenFPP18}, and of course it simplifies proofs
considerably.

For this third workflow supported by our framework, the topological or
differentiable structure of the underlying state space is of secondary
interest. With Isabelle, this kind of structure and additional
conditions can always be imposed by instantiating types with sort
constraints as they arise.  Hence we start from a setting that covers
both discrete and continuous evolutions and use a general type for
time instead of $\isa{real}$, $\isa{rat}$ or $\isa{int}$.  The
evolution commands now specify arbitrary guarded $\flow$-type
functions instead of vector fields. The type of time needs to admit an
order relation, which is indicated by the sort constraint $\isa{ord}$
below, yet specific properties, such as reflexivity or transitivity,
need not be imposed ab initio.

Apart from that, the definition of the guarded-orbit semantics and the
$\wlp$ rule are as before, but side conditions on Lipschitz
continuity or the Picard-Lindel\"of theorem are superfluous.

\begin{isabellebody}
\isanewline
\isacommand{definition}\isamarkupfalse%
\ g{\isacharunderscore}evol\ {\isacharcolon}{\isacharcolon}\ {\isachardoublequoteopen}{\isacharparenleft}{\isacharparenleft}{\isacharprime}a{\isacharcolon}{\isacharcolon}ord{\isacharparenright}\ {\isasymRightarrow}\ {\isacharprime}b\ {\isasymRightarrow}\ {\isacharprime}b{\isacharparenright}\ {\isasymRightarrow}\ {\isacharprime}b\ pred\ {\isasymRightarrow}\ {\isacharprime}a\ set\ {\isasymRightarrow}\ {\isacharparenleft}{\isacharprime}b\ {\isasymRightarrow}\ {\isacharprime}b\ set{\isacharparenright}{\isachardoublequoteclose}\ {\isacharparenleft}{\isachardoublequoteopen}EVOL{\isachardoublequoteclose}{\isacharparenright}\isanewline
\ \ \isakeyword{where}\ {\isachardoublequoteopen}EVOL\ {\isasymphi}\ G\ T\ {\isacharequal}\ {\isacharparenleft}{\isasymlambda}s{\isachardot}\ g{\isacharunderscore}orbit\ {\isacharparenleft}{\isasymlambda}t{\isachardot}\ {\isasymphi}\ t\ s{\isacharparenright}\ G\ T{\isacharparenright}{\isachardoublequoteclose}\isanewline
\isanewline
\isacommand{lemma}\isamarkupfalse%
\ fbox{\isacharunderscore}g{\isacharunderscore}evol{\isacharbrackleft}simp{\isacharbrackright}{\isacharcolon}\ \isanewline
\ \ \isakeyword{fixes}\ {\isasymphi}\ {\isacharcolon}{\isacharcolon}\ {\isachardoublequoteopen}{\isacharparenleft}{\isacharprime}a{\isacharcolon}{\isacharcolon}preorder{\isacharparenright}\ {\isasymRightarrow}\ {\isacharprime}b\ {\isasymRightarrow}\ {\isacharprime}b{\isachardoublequoteclose}\isanewline
\ \ \isakeyword{shows}\ {\isachardoublequoteopen}fb\isactrlsub {\isasymF}\ {\isacharparenleft}EVOL\ {\isasymphi}\ G\ T{\isacharparenright}\ Q\ {\isacharequal}\ {\isacharbraceleft}s{\isachardot}\ {\isacharparenleft}{\isasymforall}t{\isasymin}T{\isachardot}\ {\isacharparenleft}{\isasymforall}{\isasymtau}{\isasymin}down\ T\ t{\isachardot}\ G\ {\isacharparenleft}{\isasymphi}\ {\isasymtau}\ s{\isacharparenright}{\isacharparenright}\ {\isasymlongrightarrow}\ {\isacharparenleft}{\isasymphi}\ t\ s{\isacharparenright}\ {\isasymin}\ Q{\isacharparenright}{\isacharbraceright}{\isachardoublequoteclose}\isanewline
\ \ \isacommand{unfolding}\isamarkupfalse%
\ g{\isacharunderscore}evol{\isacharunderscore}def\ g{\isacharunderscore}orbit{\isacharunderscore}eq\ ffb{\isacharunderscore}eq\ \isacommand{by}\isamarkupfalse%
\ auto\isanewline
\end{isabellebody}

Using the flows of the bouncing ball and the circular pendulum from
previous examples, verification proofs are now fully automatic. 
 
\begin{isabellebody}
\isanewline
\isacommand{lemma}\isamarkupfalse%
\ pendulum{\isacharunderscore}dyn{\isacharcolon}\isanewline 
\ \ {\isachardoublequoteopen}{\isacharbraceleft}s{\isachardot}\ r\isactrlsup {\isadigit{2}}\ {\isacharequal}\ {\isacharparenleft}s{\isachardollar}{\isadigit{0}}{\isacharparenright}\isactrlsup {\isadigit{2}}\ {\isacharplus}\ {\isacharparenleft}s{\isachardollar}{\isadigit{1}}{\isacharparenright}\isactrlsup {\isadigit{2}}{\isacharbraceright}\ {\isasymle}\ fb\isactrlsub {\isasymF}\ {\isacharparenleft}EVOL\ {\isasymphi}\ G\ T{\isacharparenright}\ {\isacharbraceleft}s{\isachardot}\ r\isactrlsup {\isadigit{2}}\ {\isacharequal}\ {\isacharparenleft}s{\isachardollar}{\isadigit{0}}{\isacharparenright}\isactrlsup {\isadigit{2}}\ {\isacharplus}\ {\isacharparenleft}s{\isachardollar}{\isadigit{1}}{\isacharparenright}\isactrlsup {\isadigit{2}}{\isacharbraceright}{\isachardoublequoteclose}\isanewline
\ \ \isacommand{by}\isamarkupfalse%
\ force\isanewline

\isacommand{lemma}\isamarkupfalse%
\ bouncing{\isacharunderscore}ball{\isacharunderscore}dyn{\isacharcolon}\ {\isachardoublequoteopen}g\ {\isacharless}\ {\isadigit{0}}\ {\isasymLongrightarrow}\ h\ {\isasymge}\ {\isadigit{0}}\ {\isasymLongrightarrow}\ \isanewline
\ \ {\isacharbraceleft}s{\isachardot}\ s{\isachardollar}{\isadigit{0}}\ {\isacharequal}\ h\ {\isasymand}\ s{\isachardollar}{\isadigit{1}}\ {\isacharequal}\ {\isadigit{0}}{\isacharbraceright}\ {\isasymle}\ fb\isactrlsub {\isasymF}\ \isanewline
\ \ {\isacharparenleft}LOOP\ {\isacharparenleft}\isanewline
\ \ \ \ {\isacharparenleft}EVOL\ {\isacharparenleft}{\isasymphi}\ g{\isacharparenright}\ {\isacharparenleft}{\isasymlambda}\ s{\isachardot}\ s{\isachardollar}{\isadigit{0}}\ {\isasymge}\ {\isadigit{0}}{\isacharparenright}\ T{\isacharparenright}\ {\isacharsemicolon}\ \isanewline
\ \ \ \ {\isacharparenleft}IF\ {\isacharparenleft}{\isasymlambda}\ s{\isachardot}\ s{\isachardollar}{\isadigit{0}}\ {\isacharequal}\ {\isadigit{0}}{\isacharparenright}\ THEN\ {\isacharparenleft}{\isadigit{1}}\ {\isacharcolon}{\isacharcolon}{\isacharequal}\ {\isacharparenleft}{\isasymlambda}s{\isachardot}\ {\isacharminus}\ s{\isachardollar}{\isadigit{1}}{\isacharparenright}{\isacharparenright}\ ELSE\ skip{\isacharparenright}{\isacharparenright}\isanewline
\ \ INV\ {\isacharparenleft}{\isasymlambda}s{\isachardot}\ {\isadigit{0}}\ {\isasymle}\ s{\isachardollar}{\isadigit{0}}\ {\isasymand}{\isadigit{2}}\ {\isasymcdot}\ g\ {\isasymcdot}\ s{\isachardollar}{\isadigit{0}}\ {\isacharminus}\ {\isadigit{2}}\ {\isasymcdot}\ g\ {\isasymcdot}\ h\ {\isacharminus}\ s{\isachardollar}{\isadigit{1}}\ {\isasymcdot}\ s{\isachardollar}{\isadigit{1}}\ {\isacharequal}\ {\isadigit{0}}{\isacharparenright}{\isacharparenright}\isanewline
\ \ {\isacharbraceleft}s{\isachardot}\ {\isadigit{0}}\ {\isasymle}\ s{\isachardollar}{\isadigit{0}}\ {\isasymand}\ s{\isachardollar}{\isadigit{0}}\ {\isasymle}\ h{\isacharbraceright}{\isachardoublequoteclose}\isanewline
\ \ \isacommand{by}\isamarkupfalse\ %
{\isacharparenleft}rule\ ffb{\isacharunderscore}loopI{\isacharparenright}\ {\isacharparenleft}auto\ simp{\isacharcolon}\ bb{\isacharunderscore}real{\isacharunderscore }arith{\isacharparenright}\isanewline
\end{isabellebody}
These examples no longer link flows with initial specifications in
terms of system of ODEs, from which a user might have started. Hence
there is no longer any formal guarantee from Isabelle that the
function $\flow$ specified satisfies any continuity of
differentiability assumptions such as those of
\isa{local{\isacharunderscore}flow}.

Further elaboration of this approach, in particular towards discrete
systems or in the direction of hybrid automata, is left for future
work.

%%%%%%%%%%%%%%%%%%%%%%%%%%%%%%%%%%%%
\section{Related Work}\label{sec:related-work}

Methods for automated verification condition generation for partial
and total correctness assertions with proof assistants date back to
the early days of hardware and software verification by Gordon and
colleagues~\cite{Gordon87}.  Discussions on the benefits of shallow
embeddings of verification methods in proof assistants---among them
faster developments and increased modularity---can be traced back to
the same group of researchers. We generally follow an approach
described in~\cite{ArmstrongGS16} that starts from algebras of
programs to generate verification conditions for the structural
commands of programs, while developing those for basic commands in
concrete semantics of the program store dynamics. 

Mathematical components for classical real analysis have been
developed for the Coq proof assistant in the Coquelicot
library~\cite{BoldoLM15}; others for constructive analysis in the CoRN
library~\cite{Cruz-FilipeGW04}. The Picard-Lindel\"of theorem seems to
be available only in the latter~\cite{MakarovS13}. The proof assistant
HOL-light includes a library with formalised $n$-dimensional Euclidean
spaces. The first formalisation of the Picard-Lindel\"of theorem in
Isabelle, which we rewrite for our purposes and specialise to local
flows, can be found in the AFP entry for ordinary differential
equations~\cite{ImmlerH12a}.

Hybrid systems verfication in general-purpose proof assistants has
also been investigated.  Examples in PVS include semantic invariant
reasoning with hybrid automata~\cite{Abraham-MummSH01} and, after
submission of this article and publication of its
precursor~\cite{MuniveS18}, $\dL$-style verification with
semi-algebraic sets and real analytic functions~\cite{SlagelWD21}. An
earlier formalisation of the control function of an inverted
pendulum~\cite{Rouhling18} uses the Coquelicot library. Also in Coq,
the robot operating system (ROSCoq) framework uses a shallowly
embedded logic of events to reason about hybrid systems but only with
$\dL$'s differential induction rule. The HHL prover~\cite{WangZZ15}
formalises a Hoare-logic for hybrid systems verification within its
calculus of hybrid communicating sequential processes in
Isabelle~\cite{LiuLQZZZZ10}; part of their approach has been deeply
embedded. Their semantics is very different from ours. An integration
of their LZZ method~\cite{LiuZZ11} for finding semi-algebraic
invariants for polynomial dynamical systems could probably be
integrated into our framework to increase proof automation. Finally, a
term-checker for KeYmaera X~\cite{BohrerRVVP17} and, after submission
of this article, a formalisation of differential game logic
($\mathsf{d}\mathcal{GL}$)~\cite{afp:dgl} have been deeply embedded
recently in Isabelle/HOL. None of them aim at hybrid program
verification. For an in-depth description of $\dL$ see~\cite{Platzer18}. 
A thorough study of differential invariants has been pursued 
in~\cite{Platzer12}.

In theory, our own framework should therefore allow the integration of
much of the related work mentioned, so long as it is consistent with
our hybrid store semantics. It is not even necessary to delegate every
task to the proof assistant. One can use external tools implementing
decision procedures as oracles or at least certify their outputs with
Isabelle. The oracle-based approach, however, may jeopardise the
desirable conservative extension property relative to Isabelle's own
kernel. Translations between different proof assistants may not always
be straightforward. For instance, it is yet to be seen if dependent
types or multi-parameter type classes are needed for more flexible
implementations of functions spaces (bounded, linear and continuous)
or complex vector spaces, or if alternative formalisations of
Picard-Lindel\"of theorem and other existence theorems might help us
to alleviate some of the requirements in our workflows.

%%%%%%%%%%%%%%%%%%%%%%%%%%%%%%%%%%%%

\section{Conclusion}\label{sec:conclusion}
We have presented a new semantic framework for the deductive
verification of hybrid systems with the Isabelle/HOL proof assistant.
The approach is inspired by differential dynamic logic, but the design
of our verification components, the focus of our framework and the
workflows for verifying hybrid systems are different.

First of all, as we use a shallow embedding, the basic verification
components generated are quite minimalist and conceptually
simple. They merely require the integration of a $\wlp$ semantics for
basic evolution commands of hybrid programs into standard predicate
transformer algebras. Our preferred semantics for such commands are
state transformers, which in most cases simply map states to the
guarded orbits of their temporal evolutions. Beyond that, no
domain-specific inference rules are needed, verification condition
generation is fully automatic, and even our approach to differential
invariants is based entirely on general purpose algebraic invariance
laws. Our examples show that mathematical reasoning about differential
equations follows standard textbook style and hence comes close to the
way mathematicians, physicists or control engineers have been trained
to reason about such systems. Whether this is preferable to the
proof-theoretic approach advocated by KeYmaera X remains to be seen.

Secondly, our approach aims at an open experimental platform that is
only limited by Isabelle's ODE and analysis components, the
expressivity of its higher-order logic and type system, and the proof
methods it provides. We could, for instance, have developed our
semantics for time-dependent vector fields, but the restriction to
autonomous systems, which does not affect generality, seems preferable
in practice. The integration of internal or external solvers for
differential algebras, transcendental functions or computer algebra
systems for computing Lipschitz constants or flows in the style of
Isabelle's Sledgehammer tactic are certainly interesting and very
important avenues for future work, but not a main concern in this
article. So far, our open approach simply offers semantic alternatives
that users may explore, adapt and extend.

Two specialisations of our framework are the topic of successor
papers. The first one restricts our approach to linear systems of
differential equations, where exponential solutions exist and can be
computed with standard methods from linear
algebra~\cite{Munive20}. The second one~\cite{FosterMS20} specialises
the predicate transformer semantics to algebraic variants of Hoare
logics and to refinement calculi for hybrid programs along the lines
of previous components for traditional while-programs~\cite{ArmstrongGS16}.
This shows that our denotational semantics for hybrid programs are
compatible with any Hoare logic, which constitutes another significant
conceptual simplification relative to $\dL$ and KeYmaera X.

Beyond that we expect that a recent formalisation of Poincar\'e maps
with Isabelle~\cite{ImmlerT19} might allow us to extend our framework
to discrete dynamical systems and more computational approaches to
hybrid systems.

Moreover, differential-algebraic systems of
equations~\cite{HairerW96}, which mix differential equations and
algebraic equations, and partial differential equations~\cite{John86}
are important for many applications in control engineering and
physics. Extending our approach is likely to require significant
background work on mathematical components with Isabelle. While, in
both settings, some simple cases can be reduced to systems of ODEs,
numerical methods are usually needed for working with such systems.
Whether the workflow of mathematicians, physicists and engineers with
such more computational approaches can be approximated easily with
Isabelle remains to be seen.

Finally, much work is needed to transform our framework into an
applicable verification tool for hybrid systems. First steps have
meanwhile been taken~\cite{FosterGMS21} with respect to more refined
hybrid stores and a more user-friendly specification language for
hybrid programs and their correctness properties, as already
mentioned.  More important, however, seems the integration of external
solvers and decision procedures, to which much work in the hybrid
systems community has already been
devoted~\cite{FultonMBP17,LiuZZ11,RebihaMM15,SassiGS14,SogokonMTCP19}.
Such procedures already increase the proof automation of KeYmaera X
and we foresee no reason why similar integrations should not lead to
similar benefits within our own framework.

\bigskip

\paragraph{acknowledgements}
  We are grateful to Ana Cavalcanti, Achim Brucker, Thao Dang, Simon
  Foster, Sergey Goncharov, Peter H\"ofner, Sergio Mover, Andr\'e
  Platzer, Andrei Popescu, Dmitriy Traytel as well as the participants
  of the RAMiCS 2018 conference and our Oslo lecture series on
  Isabelle/HOL for fruitful discussions. We would also like to thank
  our referees for pointers to related work and many suggestions that
  helped us to improve the presentation of this work and reflect on
  its contributions.  The first author acknowledges support by
  scholarship no. 440404 of the Mexican Consejo Nacional de Ciencia y
  Tecnolog\'ia (CONACyT).

%%%%%%%%%%%%%%%%%%%%%%%%%%%%%%%%%%%%

\bigskip 

%\begin{acknowledgements}
%If you'd like to thank anyone, place your comments here
%and remove the percent signs.
%\end{acknowledgements}

\bibliographystyle{abbrv}
% BibTeX users please use one of
%\bibliographystyle{spbasic}      % basic style, author-year citations
%\bibliographystyle{spmpsci}      % mathematics and physical sciences
%\bibliographystyle{spphys}       % APS-like style for physics
\bibliography{ms}

%%%%%%%%%%%%%%%%%%%%%%%%%%%%%%%%%%%%%%

\newpage

\appendix

\section{Cross-References to Isabelle Lemmas}\label{sec:crossref}

\small
\begin{tabular}[t]{ll}
  Result in article & Formalisation in Isabelle~\cite{afp:hybrid}  \\
\hline 
\hline

Proposition~\ref{P:rel-ka} 	& Implied by \isa{\isakeyword{interpretation}\ rel{\isacharunderscore}aka} in Section~\ref{sec:isa-wlp}\\

Proposition~\ref{P:kleisli-ka} 	& Implied by \isa{\isakeyword{instantiation}\ nd{\isacharunderscore}fun} in Section~\ref{sec:isa-wlp}\\

Proposition~\ref{P:rel-mka} 	& \isa{\isakeyword{interpretation}\ rel{\isacharunderscore}aka} in Section~\ref{sec:isa-wlp}\\

Proposition~\ref{P:kleisli-mka} 	& \isa{\isakeyword{instantiation}\ nd{\isacharunderscore}fun} in Section~\ref{sec:isa-wlp}\\

$\wlp$ for sequential composition~(\ref{eq:wlp-seq}) & \isa{ffb{\isacharunderscore}kcomp} in Section~\ref{sec:isa-wlp}\\

$\wlp$ rule for if-then-else~(\ref{eq:wlp-cond}) & \isa{ffb{\isacharunderscore}if{\isacharunderscore}then{\isacharunderscore}else} in Section~\ref{sec:isa-wlp}\\

$\wlp$ rule for finite iterations~(\ref{eq:wlp-loop}) & In the proof of \isa{ffb{\isacharunderscore}loopI} in Section~\ref{sec:isa-wlp}\\

Lemma~\ref{P:inv-lemma} & Part of \isa{\isacommand{named-theorems}} \isa{diff{\isacharunderscore}invariant{\isacharunderscore}rules} of Section~\ref{sec:isa-wlp}\\

Update functions from Section~\ref{sec:discrete-store} &  \isa{\isacommand{definition}} \isa{vec{\isacharunderscore}upd} in Section~\ref{sec:isa-wlp}  \\

Semantics for assignments from Section~\ref{sec:discrete-store} & \isa{\isacommand{definition}} \isa{assign} in Section~\ref{sec:isa-wlp}\\

$\wlp$ for assignments from Section~\ref{sec:discrete-store}~(\ref{eq:wlp-asgn}) & \isa{ffb{\isacharunderscore}assign} in Section~\ref{sec:isa-wlp}\\

Orbits in Section~\ref{sec:ODE} &  \isa{\isacommand{definition}} \isa{orbit} in Section~\ref{sec:isa-wlp}\\

Picard-Lindel\"of Theorem~\ref{P:picard-lindeloef} & \isa{picard{\isacharunderscore}lindeloef{\isacharunderscore }closed{\isacharunderscore}ivl{\isachardot}unique{\isacharunderscore}solution} in Section~\ref{sec:isa-ODE}\\

Monoid action identities for flows from Section~\ref{sec:ODE} & \isa{local{\isacharunderscore}flow{\isachardot }is{\isacharunderscore}monoid{\isacharunderscore}action} in Section~\ref{sec:isa-ODE}\\

$G$-guarded orbit ($\gorbit{\flow}$) in Section~\ref{sec:hybrid-store} & instance of \isa{\isacommand{definition}} \isa{g{\isacharunderscore}orbit}  of Section~\ref{sec:isa-ODE}\\

Lemma~\ref{P:g-orbit-props} & \isa{g{\isacharunderscore}orbital{\isacharunderscore}collapses} in Section~\ref{sec:isa-ODE}\\

Semantics for evolution commands in Section~\ref{sec:hybrid-store} & \isa{\isacommand{notation}} \isa{g{\isacharunderscore}orbital} of Section~\ref{sec:isa-wlp}\\

Proposition~\ref{P:wlpprop} & implied by \isa{ffb{\isacharunderscore}g{\isacharunderscore}ode} in Section~\ref{sec:isa-wlp}\\

Lemma~\ref{P:wlpprop-var}~(\ref{eq:wlp-evl}) & \isa{ffb{\isacharunderscore}g{\isacharunderscore}ode} in Section~\ref{sec:isa-wlp}\\

Example~\ref{ex:ball} & Example~\ref{ex:bouncing-ball-flow}\\

$\Sols\, f\, t_0\, s$ in Section~\ref{sec:generalisation} & \isa{\isacommand{definition}} \isa{ivp{\isacharunderscore}sols} in Section~\ref{sec:isa-ODE}\\

$G$-guarded orbit $\gorbit{X}$ of $X$ along $T$ in Section~\ref{sec:generalisation} & \isa{\isacommand{definition}} \isa{g{\isacharunderscore}orbit} of Section~\ref{sec:isa-ODE}\\

$G$-guarded orbital $\gorbit{f}$ of $f$ along $T$ in Section~\ref{sec:generalisation} & \isa{\isacommand{definition}} \isa{g{\isacharunderscore}orbital} of Section~\ref{sec:isa-ODE}\\

Lemma~\ref{P:gorbital} & \isa{ffb{\isacharunderscore}g{\isacharunderscore}orbital{\isacharunderscore}eq} in Section~\ref{sec:isa-ODE}\\

Semantics for evolution commands in Section~\ref{sec:generalisation} & \isa{\isacommand{notation}} \isa{g{\isacharunderscore}orbital} of Section~\ref{sec:isa-wlp}\\

Proposition~\ref{P:wlpprop-gen} & \isa{ffb{\isacharunderscore}g{\isacharunderscore}orbital} of Section~\ref{sec:isa-wlp}\\

Invariant of IVP $(f,s)$ & \isa{\isacommand{definition}} \isa{diff{\isacharunderscore}invariant} in Section~\ref{sec:isa-wlp}\\

Proposition~\ref{P:inv-prop} & \isa{ffb{\isacharunderscore}diff{\isacharunderscore}inv} in Section~\ref{sec:isa-wlp}\\

Proposition~\ref{P:invrules} & \isa{\isacommand{named-theorems}} \isa{diff{\isacharunderscore}invariant{\isacharunderscore}rules} of Section~\ref{sec:isa-wlp}\\

Example~\ref{ex:ball-inv} & Example~\ref{ex:bouncing-ball-inv}\\

$\dL$-axiom (\ref{eq:DS}) in Section~\ref{sec:dL} & \isa{DS} in Section~\ref{sec:isa-dL}\\

Proposition~\ref{P:wlpprop} ($\dL$-rule~(\ref{eq:dSolve})) & \isa{solve} in Section~\ref{sec:isa-dL}\\

Lemma~\ref{P:dcut} (\ref{eq:DC}) & \isa{DC} in Section~\ref{sec:isa-dL}\\

Lemma~\ref{P:dcut} (\ref{eq:dC}) & \isa{dC} in Section~\ref{sec:isa-dL}\\

Lemma~\ref{P:dcut} (\ref{eq:DW}) & \isa{DW} in Section~\ref{sec:isa-dL}\\

Lemma~\ref{P:dcut} (\ref{eq:dW}) & \isa{dW} in Section~\ref{sec:isa-dL}\\

Lemma~\ref{P:dcut} (\ref{eq:dI}) & \isa{dI} in Section~\ref{sec:isa-dL}\\

 \end{tabular}

\end{document}